\newtheorem{theorem}{Theorem}[section]
\newtheorem*{theorem*}{Theorem}
\newtheorem{proposition}[theorem]{Proposition}
\newtheorem*{proposition*}{Proposition}
\newtheorem{lemma}[theorem]{Lemma}
\newtheorem*{lemma*}{Lemma}
\newtheorem{corollary}[theorem]{Corollary}
\newtheorem*{conjecture*}{Conjecture}
\newtheorem{fact}[theorem]{Fact}
\newtheorem*{fact*}{Fact}
\newtheorem*{hypothesis*}{Hypothesis}
\theoremstyle{definition}
\newtheorem{definition}[theorem]{Definition}
\newtheorem*{definition*}{Definition}
\newtheorem{algorithms}{Algorithm}
\newtheorem{model}[theorem]{Model}
\theoremstyle{remark}
\newtheorem*{claim*}{Claim}
\newtheorem{remark}[theorem]{Remark}
\newtheorem*{remark*}{Remark}
\newtheorem*{observation*}{Observation}
\let\mathbb\varmathbb
\crefname{lemma}{Lemma}{Lemmas}
\crefname{fact}{Fact}{Facts}
\crefname{theorem}{Theorem}{Theorems}
\crefname{corollary}{Corollary}{Corollaries}
\crefname{claim}{Claim}{Claims}
\crefname{example}{Example}{Examples}
\crefname{algorithms}{Algorithm}{Algorithms}
\crefname{problem}{Problem}{Problems}
\crefname{definition}{Definition}{Definitions}
\newcommand{\Authornotecolored}[3]{}
\newcommand{\SK}[1]{}
\newcommand{\Pnote}[1]{}
\newcommand{\paren}[1]{(#1)}
\newcommand{\Paren}[1]{\left(#1\right)}
\newcommand{\set}[1]{\{#1\}}
\newcommand{\Set}[1]{\left\{#1\right\}}
\newcommand{\norm}[1]{\lVert#1\rVert}
\newcommand{\iprod}[1]{\langle#1\rangle}
\newcommand{\Esymb}{\mathbb{E}}
\newcommand{\Psymb}{\mathbb{P}}
\DeclareMathOperator*{\E}{\Esymb}
\DeclareMathOperator*{\ProbOp}{\Psymb}
\renewcommand{\Pr}{\ProbOp}
\newcommand{\from}{\colon}
\newcommand{\mper}{\,.}
\newcommand{\mcom}{\,,}
\newcommand\bdot\bullet
\DeclareMathOperator{\sign}{sign}
\newcommand{\N}{\mathbb N}
\newcommand{\R}{\mathbb R}
\newcommand{\cA}{\mathcal A}
\newcommand{\cB}{\mathcal B}
\newcommand{\cC}{\mathcal C}
\newcommand{\cI}{\mathcal I}
\newcommand{\cN}{\mathcal N}
\newcommand{\cO}{\mathcal O}
\newcommand{\cR}{\mathcal R}
\newcommand{\cS}{\mathcal S}
\newcommand{\calD}{\mathcal{D}}
\renewcommand{\leq}{\leqslant}
\renewcommand{\geq}{\geqslant}
\renewcommand{\ge}{\geqslant}
\let\epsilon=\varepsilon
\numberwithin{equation}{section}
\newcommand\MYcurrentlabel{xxx}
\newcommand{\MYstore}[2]{%
  \global\expandafter \def \csname MYMEMORY #1 \endcsname{#2}%
}
\newcommand{\MYload}[1]{%
  \csname MYMEMORY #1 \endcsname%
}
\newcommand{\MYnewlabel}[1]{%
  \renewcommand\MYcurrentlabel{#1}%
  \MYoldlabel{#1}%
}
\newcommand{\MYdummylabel}[1]{}
\newcommand{\torestate}[1]{%
  \let\MYoldlabel\label%
  \let\label\MYnewlabel%
  #1%
  \MYstore{\MYcurrentlabel}{#1}%
  \let\label\MYoldlabel%
}
\newcommand{\restatetheorem}[1]{%
  \let\MYoldlabel\label
  \let\label\MYdummylabel
  \begin{theorem*}[Restatement of \cref{#1}]
    \MYload{#1}
  \end{theorem*}
  \let\label\MYoldlabel
}
\newcommand{\restatelemma}[1]{%
  \let\MYoldlabel\label
  \let\label\MYdummylabel
  \begin{lemma*}[Restatement of \cref{#1}]
    \MYload{#1}
  \end{lemma*}
  \let\label\MYoldlabel
}
\newcommand{\restateprop}[1]{%
  \let\MYoldlabel\label
  \let\label\MYdummylabel
  \begin{proposition*}[Restatement of \cref{#1}]
    \MYload{#1}
  \end{proposition*}
  \let\label\MYoldlabel
}
\newcommand{\restatefact}[1]{%
  \let\MYoldlabel\label
  \let\label\MYdummylabel
  \begin{fact*}[Restatement of \prettyref{#1}]
    \MYload{#1}
  \end{fact*}
  \let\label\MYoldlabel
}
\newcommand{\restate}[1]{%
  \let\MYoldlabel\label
  \let\label\MYdummylabel
  \MYload{#1}
  \let\label\MYoldlabel
}
\newcommand{\eps}{\epsilon}
\newcommand*{\zo}{\set{0,1}}
\newcommand*{\on}{\{\pm 1\}}
\DeclareMathOperator{\pE}{\tilde{\mathbb{E}}}
\newcommand{\1}{\bm{1}}
\newcommand{\tmu}{\tilde{\mu}}
\newcommand{\wt}{\mathsf{wt}}
\renewcommand{\S}{\mathbb{S}}
\newcommand{\Sol}{\mathrm{Sol}}
\newcommand{\Lin}{\mathrm{Lin}}
\newcommand{\blfootnote}[1]{}
\title{
 List-Decodable Linear Regression
}
\author{
Sushrut Karmalkar\thanks{University of Texas at Austin. Supported by NSF Award CNS-1414023}
\and
Adam R. Klivans \thanks{University of Texas at Austin. Supported by NSF Award CCF-1717896}
\and
  Pravesh K. Kothari\thanks{Princeton University and Institute for Advanced Study. Supported by Schmidt Foundation Fellowship and Avi Wigderson's NSF Award CCF-1412958.}}
\begin{document}

\pagestyle{empty}


\maketitle
\thispagestyle{empty} 


\begin{abstract}
We give the first polynomial-time algorithm for robust regression in the list-decodable setting where an adversary can corrupt a greater than $1/2$ fraction of examples.

For any $\alpha < 1$, our algorithm takes as input a sample $\{(x_i,y_i)\}_{i \leq n}$ of $n$ linear equations where $\alpha n$ of the equations satisfy $y_i = \langle x_i,\ell^*\rangle +\zeta$ for some small noise $\zeta$ and $(1-\alpha)n$ of the equations are {\em arbitrarily} chosen. It outputs a list $L$ of size $O(1/\alpha)$ - a fixed constant - that contains an $\ell$ that is close to $\ell^*$.

Our algorithm succeeds whenever the inliers are chosen from a \emph{certifiably} anti-concentrated distribution $D$. In particular, this gives a $(d/\alpha)^{O(1/\alpha^8)}$ time algorithm to find a $O(1/\alpha)$ size list when the inlier distribution is standard Gaussian. For discrete product distributions that are anti-concentrated only in \emph{regular} directions, we give an algorithm that achieves similar guarantee under the promise that $\ell^*$ has all coordinates of the same magnitude. To  complement our result, we prove that the anti-concentration assumption on the inliers is information-theoretically necessary.

Our algorithm is based on a new framework for list-decodable learning that strengthens the ``identifiability to algorithms'' paradigm based on the sum-of-squares method. 

In an independent and concurrent work, Raghavendra  and Yau~\cite{RY19} also used the Sum-of-Squares method to give a similar result for list-decodable regression.
\end{abstract}

\clearpage


  \microtypesetup{protrusion=false}
  \tableofcontents{}
  \microtypesetup{protrusion=true}

\clearpage

\pagestyle{plain}
\setcounter{page}{1}



\section{Introduction}


In this work, we design algorithms for the problem of linear regression that are robust to training sets with an overwhelming ($\gg 1/2$) fraction of adversarially chosen outliers. 

Outlier-robust learning algorithms have been extensively studied (under the name \emph{robust statistics}) in mathematical statistics~\cite{MR0426989,maronna2006robust,huber2011robust,hampel2011robust}. However, the algorithms resulting from this line of work usually run in time exponential in the dimension of the data~\cite{bernholt2006robust}. An influential line of recent work \cite{journals/jmlr/KlivansLS09,journals/corr/AwasthiBL13, DBLP:journals/corr/DiakonikolasKKL16,DBLP:conf/focs/LaiRV16,DBLP:conf/stoc/CharikarSV17,KothariSteinhardt17,2017KS,HopkinsLi17,DBLP:journals/corr/DiakonikolasKK017a,DBLP:journals/corr/DiakonikolasKS17,DBLP:conf/colt/KlivansKM18} has focused on designing \emph{efficient} algorithms for outlier-robust learning. 

Our work extends this line of research. Our algorithms work in the ``list-decodable learning'' framework. In this model, a majority of the training data (a $1 -\alpha$ fraction) can be adversarially corrupted leaving only an $\alpha \ll 1/2$ fraction of ``inliers''. Since uniquely recovering the underlying parameters is information-theoretically \emph{impossible} in such a setting, the goal is to output a list (with an absolute constant size) of parameters, one of which matches the ground truth. This model was introduced in~\cite{DBLP:conf/stoc/BalcanBV08} to give a discriminative framework for clustering. More recently, beginning with~\cite{DBLP:conf/stoc/CharikarSV17}, various works~\cite{DBLP:conf/stoc/DiakonikolasKS18,KothariSteinhardt17} have considered this as a model of ``untrusted'' data. 

There has been phenomenal progress in developing techniques for outlier-robust learning with a \emph{small} $(\ll 1/2)$-fraction of outliers (e.g. outlier ``filters''~\cite{DBLP:conf/focs/DiakonikolasKK016,DBLP:journals/corr/DiakonikolasKK017a,DBLP:conf/soda/0002D019,DBLP:conf/soda/DiakonikolasKK018}, separation oracles for inliers~\cite{DBLP:conf/focs/DiakonikolasKK016} or the \emph{sum-of-squares} method~\cite{2017KS,HopkinsLi17,KothariSteinhardt17,DBLP:conf/colt/KlivansKM18}). In contrast, progress on algorithms that tolerate the significantly harsher conditions in the list-decodable setting has been slower. The only prior works~\cite{DBLP:conf/stoc/CharikarSV17,DBLP:conf/stoc/DiakonikolasKS18,KothariSteinhardt17} in this direction designed list-decodable algorithms for mean estimation via problem-specific methods. 

In this paper, we develop a principled technique to give the first efficient list-decodable learning algorithm for the fundamental problem of \emph{linear regression}. Our algorithm takes a corrupted set of linear equations with an $\alpha \ll 1/2$ fraction of inliers and outputs a $O(1/\alpha)$-size list of linear functions, one of which is guaranteed to be close to the ground truth (i.e., the linear function that correctly labels the inliers). A key conceptual insight in this result is that list-decodable regression information-theoretically requires the inlier-distribution to be ``anti-concentrated''. Our algorithm succeeds whenever the distribution satisfies a stronger ``certifiable anti-concentration'' condition that is algorithmically ``usable'. This class includes the standard gaussian distribution and more generally, any spherically symmetric distribution with strictly sub-exponential tails.

Prior to our work\footnote{There's a long line of work on robust regression algorithms (see for e.g. \cite{DBLP:conf/nips/Bhatia0KK17,conf/soda/KarmalkarP19}) that can tolerate corruptions only in the \emph{labels}. We are interested in algorithms robust against corruptions in both examples and labels.}, the state-of-the-art outlier-robust algorithms for linear regression~\cite{DBLP:conf/colt/KlivansKM18,conf/soda/DiakonikolasKS19,journals/corr/abs-1803-02815,journals/corr/abs-1802-06485} could handle only a small $(<0.1)$-fraction of outliers even under strong assumptions on the underlying distributions. 

List-decodable regression generalizes the well-studied~\cite{MR1028403,doi:10.1162/neco.1994.6.2.181,MR2757044,2013arXiv1310.3745Y,DBLP:journals/corr/BalakrishnanWY14,DBLP:conf/colt/ChenYC14,DBLP:conf/nips/Zhong0D16,DBLP:conf/aistats/SedghiJA16,DBLP:conf/colt/LiL18} and {\em easier} problem of \emph{mixed linear regression}: given $k$ ``clusters'' of examples that are labeled by one out of $k$ distinct unknown linear functions, find the unknown set of linear functions. All known techniques for the problem rely on faithfully estimating certain \emph{moment tensors} from samples and thus, cannot tolerate the overwhelming fraction of outliers in the list-decodable setting. On the other hand, since we can take any cluster as inliers and treat rest as outliers, our algorithm immediately yields new efficient algorithms for mixed linear regression. Unlike all prior works, our algorithms work without any pairwise separation or bounded condition-number assumptions on the $k$ linear functions.


\paragraph{List-Decodable Learning via the Sum-of-Squares Method} Our algorithm relies on a strengthening of the robust-estimation framework based on the sum-of-squares (SoS) method. This paradigm has been recently used for clustering mixture models~\cite{HopkinsLi17,KothariSteinhardt17} and obtaining algorithms for moment estimation~\cite{2017KS} and linear regression~\cite{DBLP:conf/colt/KlivansKM18} that are resilient to a small $(\ll 1/2)$ fraction of outliers under the mildest known assumptions on the underlying distributions. At the heart of this technique is a reduction of outlier-robust algorithm design to just finding ``simple'' proofs of  unique ``identifiability'' of the unknown parameter of the original distribution from a corrupted sample. However, this principled method works only in the setting with a small ($\ll 1/2$) fraction of outliers. As a consequence, the work of~\cite{KothariSteinhardt17} for mean estimation in the list-decodable setting relied on ``supplementing'' the SoS method with a somewhat problem-dependent technique. 

As an important conceptual contribution, our work yields a framework for list-decodable learning that recovers some of the simplicity of the general blueprint. Central to our framework is a general method of \emph{rounding by votes} for ``pseudo-distributions'' (see Section~\ref{sec:overview})  in the setting with $\gg 1/2$ fraction outliers. Our rounding builds on the work of~\cite{KS19} who developed such a method to give a simpler proof of the list-decodable mean estimation result of~\cite{KothariSteinhardt17}. 

Prior results discussed above hold for any underlying distribution that has upper-bounded low-degree moments and such bounds are  ``captured'' within the SoS system. Such conditions are called as ``certified bounded moment'' inequalities. An important contribution of this work is to formalize \emph{anti-concentration} inequalities within the SoS system and prove such inequalities for natural distribution families. Unlike bounded moment inequalities, there is no canonical encoding within SoS for such statements. We choose an encoding  that allows proving certified anti-concentration for a distribution by showing the existence of a certain approximating polynomial. This allows showing certified anti-concetration via a modular approach relying on a beautiful line of works that construct ``weighted '' polynomial approximators~\cite{2007math......1099L}. 

We believe that our framework for list-decodable estimation and our formulation of certified anti-concentration condition will likely have further applications in outlier-robust learning. 
\subsection{Our Results}
We first define our model for generating samples for list-decodable regression.

\begin{model}[Robust Linear Regression]
For $0 <\alpha < 1$ and $\ell^* \in \R^d$ with $\|\ell^*\|_2 \leq 1$, let $\Lin_D(\alpha,\ell^*)$ denote the following probabilistic process to generate $n$ noisy linear equations $\cS = \{ \langle x_i, a \rangle = y_i\mid 1\leq i \leq n\}$ in variable $a \in \R^d$ with $\alpha n$ \emph{inliers} $\cI$ and $(1-\alpha)n$ \emph{outliers} $\cO$:
\begin{enumerate}
\item Construct $\cI$ by choosing $\alpha n$ i.i.d. samples $x_i \sim D$ and set $y_i = \langle x_i,\ell^* \rangle + \zeta$ for additive noise $\zeta$,
\item Construct $\cO$ by  choosing the remaining $(1-\alpha)n$ equations arbitrarily and potentially adversarially w.r.t the inliers $\cI$.
\end{enumerate}
\label{model:random-equations}
\end{model}
Note that $\alpha$ measures the ``signal'' (fraction of inliers) and can be $\ll 1/2$. The bound on the norm of $\ell^*$ is without any loss of generality. For the sake of exposition, we will restrict to $\zeta = 0$ for most of this paper and discuss (see Remarks~\ref{remark:tolerating-additive-noise-intro} and~\ref{remark:tolerating-additive-noise}) \blfootnote{\textbf{Please note that sections $3$-$6$ are in the supplementary material. }} how our algorithms can tolerate additive noise.

An $\eta$-approximate algorithm for list-decodable regression takes input a sample from $\Lin_D(\alpha,\ell^*)$ and outputs a \emph{constant} (depending only on $\alpha$) size list $L$ of linear functions such that there is some $\ell \in L$ that is $\eta$-close to $\ell^*$.

One of our key conceptual contributions is to identify the strong relationship between \emph{anti-concentration inequalities} and list-decodable regression. Anti-concentration inequalities are well-studied~\cite{ErdosLittlewoodOfford,MR2965282-Tao12,MR2407948-Rudelson08} in probability theory and combinatorics. The simplest of these inequalities upper bound the probability that a high-dimensional random variable has zero projections in any direction.

\begin{definition}[Anti-Concentration]
A $\R^d$-valued zero-mean random variable $Y$ has a $\delta$-\emph{anti-concentrated} distribution if $\Pr[ \iprod{Y,v}=0 ]< \delta$. 
\end{definition}

In Proposition~\ref{prop:identifiability}, we provide a simple but conceptually illuminating proof that anti-concentration is \emph{sufficient} for list-decodable regression. In Theorem~\ref{thm:main-lower-bound}, we prove a sharp converse and show that anti-concentration is information-theoretically \emph{necessary} for even noiseless list-decodable regression. This lower bound surprisingly holds for a natural distribution: uniform distribution on $\zo^d$ and more generally, uniform distribution on $[q]^d$ for  $[q] = \{0,1,2\ldots,q\}$. Our lower bound, in fact, shows the impossibility of even the ``easier'' problem of mixed linear regression on this distribution.

\begin{theorem}[See Proposition~\ref{prop:identifiability} and Theorem~\ref{thm:main-lower-bound}]
There is a (inefficient) list-decodable regression algorithm for $\Lin_D(\alpha,\ell^*)$ with list size $O(\frac{1}{\alpha})$ whenever $D$ is $\alpha$-anti-concentrated. 
Further, there exists a distribution $D$ on $\R^d$ that is $\paren{\alpha+\epsilon}$-anti-concentrated for every $\epsilon >0$ but there is no algorithm for $\frac{\alpha}{2}$-approximate list-decodable regression for $\Lin_D(\alpha,\ell^*)$  that returns a list of size $<d$. \label{thm:lower-bound-results}
\end{theorem}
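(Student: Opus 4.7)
I would prove the two assertions separately.

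\textbf{Identifiability.} The key point is that for any $\ell \neq \ell^*$, an inlier $(x, \langle x, \ell^* \rangle)$ with $x \sim D$ satisfies the equation $\langle x, \ell \rangle = y$ exactly when $\langle x, \ell - \ell^* \rangle = 0$. Setting $\delta \defeq \sup_{v \neq 0} \Pr_{x \sim D}[\langle x, v \rangle = 0]$, this event has probability at most $\delta < \alpha$; a Chernoff bound combined with a union bound over the $\binom{n}{d}$ candidate $\ell$'s that arise from $d$-tuples of sample equations shows that, with high probability over the inliers, every fixed $\ell \neq \ell^*$ is satisfied by at most $(\delta + o(1))\alpha n$ of them. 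The (inefficient) algorithm enumerates every $\ell$ satisfied by at least $\alpha n$ of the $n$ equations and returns the result of a greedy peeling step; $\ell^*$ itself satisfies all $\alpha n$ inliers, so it lies in the list. For the size bound, peeling the cluster $S_\ell$ of any accepted $\ell \neq \ell^*$ burns at least $(1-\delta)\alpha n$ outliers (by the anti-concentration count above), and these are disjoint across iterations, so the outlier budget $(1-\alpha)n$ is exhausted after $(1-\alpha)/((1-\delta)\alpha) + 1 = O(1/\alpha)$ iterations.

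\textbf{Lower bound.} I would take $D$ to be uniform on $\{0, 1, \ldots, q\}^d$ with $q + 1 = 1/\alpha$. For every $v \neq 0$, $\Pr_{x \sim D}[\langle x, v \rangle = 0] \leq 1/(q+1) = \alpha$, with equality at $v = e_j$; hence $D$ is $(\alpha + \epsilon)$-anti-concentrated for every $\epsilon > 0$ but not $\alpha$-anti-concentrated. Impossibility follows from an indistinguishability argument. Define $\mu_0$ on $\R^d \times \R$ by drawing $x \sim D$ and $y$ independently uniform on $\{0, 1, \ldots, q\}$. The event $\{y = x_j\}$ has $\mu_0$-probability $\alpha$, and the conditional law $\mu_0 \mid y = x_j$ is precisely the law of $(x, x_j)$ with $x \sim D$, which matches the inlier distribution under $\ell^* = e_j$. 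So for each $j \in [d]$, a dataset of $n$ iid $\mu_0$-samples is (up to standard binomial concentration on the inlier count) a legal realization of $\Lin_D(\alpha, e_j)$, with the adversary supplying outliers drawn from $\mu_0 \mid y \neq x_j$. Since $\mu_0^{\otimes n}$ is independent of $j$, every algorithm produces the same output distribution irrespective of which $e_j$ is planted, and an $(\alpha/2)$-approximate list must contain a point within $\alpha/2$ of every $e_j$. Because $\|e_j - e_k\|_2 = \sqrt{2} > \alpha$ for $j \neq k$, each list element is close to at most one $e_j$, forcing list size at least $d$.

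\textbf{Main obstacle.} The delicate step is the $O(1/\alpha)$ list-size bound in the upper half: adversarial outliers can lie on the intersection of many candidate hyperplanes, so disjointness of the output candidates' cluster supports fails naively. The greedy peeling viewpoint resolves this by \emph{selecting} a disjoint witness (the removed block of outliers) for each accepted $\ell$, reducing the size bound to a bookkeeping argument against the fixed outlier budget $(1-\alpha)n$. The remaining technicalities -- binomial versus exact inlier count in the lower bound, and the uniform net over data-determined $\ell$'s for the union bound in the upper -- are standard.
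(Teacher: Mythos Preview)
Your lower-bound argument is correct and is essentially the paper's: both take $D$ uniform on $\{0,\ldots,q-1\}^d$ (you use $\{0,\ldots,q\}^d$, an immaterial reindexing) and observe that $(x,y)$ with $x\sim D$ and $y$ an independent uniform symbol is simultaneously a valid $\Lin_D(\alpha,e_j)$ instance for every $j$, so any $\frac{\alpha}{2}$-approximate list must cover all $d$ pairwise-far vectors $e_1,\ldots,e_d$.

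The identifiability half has a genuine gap. Greedy peeling does bound the number of iterations by $O(1/\alpha)$ via the outlier budget, but it does \emph{not} guarantee that $\ell^*$ appears among the selected $\ell$'s. Your sentence ``$\ell^*$ itself satisfies all $\alpha n$ inliers, so it lies in the list'' only places $\ell^*$ in the initial enumeration, not in the peeled output. Concretely: if the anti-concentration bound is nearly tight in some direction (so $\delta$ is close to $\alpha$), the adversary can pick $\ell_1\neq\ell^*$ satisfied by $\approx\delta\alpha n$ inliers and pad its cluster with $\approx(1-\delta)\alpha n$ outliers to reach the $\alpha n$ threshold. If peeling selects $\ell_1$ first, it removes $\approx\delta\alpha n$ of $\ell^*$'s inliers, dropping $\ell^*$ to $\approx(1-\delta)\alpha n<\alpha n$ remaining supporters, and peeling terminates without ever outputting $\ell^*$. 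Lowering the threshold does not rescue this when $\delta$ is allowed to approach $\alpha$. (The paper separately shows that the raw enumeration can have $\exp(d)$ elements, so outputting the unpeeled list is not an option either.)

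The paper's device is genuinely different: it works over the convex set of distributions $\mu$ on soluble $\alpha n$-subsets and selects a \emph{maximally uniform} one, i.e.\ a minimizer of $\sum_i\bigl(\E_{S\sim\mu}\1(i\in S)\bigr)^2$. A short variational argument shows that any such $\mu$ satisfies $\E_{S\sim\mu}|S\cap\cI|\ge\alpha|\cI|$, so a random $S\sim\mu$ intersects $\cI$ in more than a $\delta$-fraction with probability $\Omega(\alpha-\delta)$; anti-concentration then forces that $S$'s labeler to equal $\ell^*$. Drawing $O\bigl(\tfrac{1}{\alpha-\delta}\bigr)$ independent samples from $\mu$ gives the list. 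The uniformity objective is exactly what prevents adversarially planted clusters like $\ell_1$ from absorbing all the algorithm's attention, which is the failure mode of greedy peeling.
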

To handle additive noise of variance $\zeta^2$, we need a control of $\Pr[ |\iprod{x,v}| \leq \zeta] = \E \1(|\iprod{x,v}|\leq \delta)$. 
For our efficient algorithms, in addition, we need that the anti-concentration property
to have a low-degree ``sum-of-squares'' \emph{certificate}. 
SoS is a proof system that reasons about polynomial inequalities. Since the ``core indicator'' $\1(|\iprod{x,v}| \leq \delta)$ is not a polynomial, we phrase certified anti-concentration in terms of an approximating polynomial $p$ for the core indicator. 

For this section, we will use "low-degree sum-of-squares proof" informally and encourage the reader to think of certified anti-concentration as a stronger version of anti-concentration that the SoS method can reason about.

\begin{definition}[Certifiable Anti-Concentration] \label{def:certified-anti-concentration}
A random variable $Y$ has a $k$-\emph{certifiably} $(C,\delta)$-anti-concentrated distribution if there is a univariate polynomial $p$ satisfying $p(0) = 1$ such that there is a degree $k$ sum-of-squares proof of the following two inequalities: 
\begin{enumerate} 
\item $\forall v$, $\langle Y,v\rangle^2 \leq \delta^2 \E \langle Y,v\rangle^2$ implies $(p(\langle Y,v\rangle) -1)^2\leq \delta^2$.
\item $\forall v$, $\|v\|_2^2 \leq 1$ implies  $\E p^2(\left \langle Y,v\rangle \right) \leq C\delta$.
\end{enumerate}
\end{definition} 

\Pnote{seems like it would be nice to have some explanation for the properties of the polynomial asked for in this definition...}
We are now ready to state our main result.
\begin{restatable}[List-Decodable Regression]{theorem}{main} \label{thm:main}
For every $\alpha, \eta > 0$ and a $k$-certifiably $(C,\alpha^2 \eta^2/10C)$-anti-concentrated distribution $D$ on $\R^d$, there exists an algorithm that takes input a sample generated according to $\Lin_D(\alpha,\ell^*)$ and outputs a list $L$ of size $O(1/\alpha)$ such that there is an $\ell \in L$ satisfying $\| \ell - \ell^*\|_2 < \eta$ with probability at least $0.99$ over the draw of the sample. The algorithm needs a sample of size $n = (kd)^{O(k)}$ and runs in time $n^{O(k)} = (kd)^{O(k^2)}$.
\end{restatable}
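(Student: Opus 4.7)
My plan is to analyze a natural degree-$O(k)$ sum-of-squares relaxation and round its pseudo-distribution into a list via a vote-and-cluster scheme. The program has variables $w_1,\dots,w_n$ and $\ell \in \R^d$ subject to the axioms
\[ \cA = \bigset{\, w_i^2 = w_i,\ \textstyle\sum_i w_i \ge \alpha n,\ w_i(y_i - \langle x_i,\ell\rangle) = 0 \text{ for all } i,\ \|\ell\|_2^2 \le 1\,}, \]
which are feasible via the honest assignment $w_i^\star = \1[i \in \cI]$, $\ell = \ell^*$. A degree-$O(k)$ pseudo-distribution $\tmu$ satisfying $\cA$ can then be computed in time $n^{O(k)}$.

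The heart of the proof is an SoS identifiability lemma. Writing $v = \ell - \ell^*$, the axioms force the polynomial identity $w_i \langle x_i,v\rangle \equiv 0$ for every $i \in \cI$, because $y_i = \langle x_i,\ell^*\rangle$ there. Decomposing the certified anti-concentration polynomial as $p(t) = 1 + t\,q(t)$ yields the SoS identity $w_i \cdot p(\langle x_i,v\rangle) = w_i$. Summing over $\cI$ and applying SoS Cauchy--Schwarz using $w_i^2 = w_i$,
\[ W_\cI^2 \;\le\; W_\cI \sum_{i \in \cI} p(\langle x_i,v\rangle)^2, \qquad W_\cI := \sum_{i \in \cI} w_i. \]
A standard SoS-certifiable empirical-moment concentration bound, enabled by the bounded degree of $p$, transports the right-hand sum to its population expectation $|\cI| \cdot \E_{x \sim D} p(\langle x,v\rangle)^2$. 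The second clause of certified anti-concentration bounds this expectation by $C\delta$ whenever $v$ lies in the unit ball (which we recover from $\|\ell\|_2,\|\ell^*\|_2 \le 1$ after a harmless rescaling of $p$). Combining this with the first clause, which controls the contribution of the $\delta$-scale core, and the choice $\delta = \alpha^2 \eta^2/(10C)$ yields the target SoS inequality
\[ \pE_{\tmu}\bigl[\, W_\cI \cdot \|\ell - \ell^*\|_2^2 \,\bigr] \;\le\; \eta^2 \cdot \pE_{\tmu}[W_\cI], \]
which is the SoS analogue of the inefficient identifiability in Proposition~\ref{prop:identifiability}.

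To convert $\tmu$ into a list, I form for each $i$ the candidate $\hat\ell_i = \pE_{\tmu}[w_i \ell]/\pE_{\tmu}[w_i]$. Pseudo-Cauchy--Schwarz gives $\pE_{\tmu}[w_i]\cdot\|\hat\ell_i - \ell^*\|_2^2 \le \pE_{\tmu}[w_i\|\ell-\ell^*\|_2^2]$; summing over $i \in \cI$ and invoking the identifiability inequality shows that the $\pE_{\tmu}[w_i]$-weighted average of $\|\hat\ell_i - \ell^*\|_2^2$ over $i \in \cI$ is at most $O(\eta^2)$. To compress the $n$ raw candidates into $O(1/\alpha)$ of them I iterate a heavy-point/ball-deletion scheme, in the spirit of the rounding developed by Kothari and Steurer for list-decodable mean estimation: repeatedly pick the index $i$ with the largest pseudo-marginal $\pE_{\tmu}[w_i]$, append $\hat\ell_i$ to the list, and delete all indices whose candidates lie within $O(\eta)$ of $\hat\ell_i$. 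Since $\sum_i \pE_{\tmu}[w_i] \ge \alpha n$ and each round removes an $\Omega(\alpha)$-fraction of the remaining marginal mass, the process terminates in $O(1/\alpha)$ rounds, and the resulting list is guaranteed to contain some $\ell$ within $\eta$ of $\ell^*$.

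The main obstacle is the identifiability step: carrying the analytic properties of the anti-concentration polynomial $p$ (approximately $1$ on the $\delta$-scale core and small in expected square off of it) through the SoS proof system at low degree. This is precisely the content packaged by certifiable anti-concentration, but one still needs to SoS-certify a uniform sample-concentration bound for $p(\langle x,v\rangle)^2$ over unit-ball $v$, which is ultimately responsible for the sample complexity $(kd)^{O(k)}$ and the runtime $n^{O(k)}$.
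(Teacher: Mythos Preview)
Your proposal has a genuine gap: you compute an \emph{arbitrary} pseudo-distribution $\tmu$ satisfying $\cA$, but nothing prevents $\tmu$ from placing all of its mass on the outliers. Concretely, if the adversary makes the $(1-\alpha)n$ outliers consistent with some $\ell'$ far from $\ell^*$, then the (actual) distribution supported on the single point $(w=\1_{\cO},\,\ell=\ell')$ satisfies $\cA$, has $\pE[w_i]=0$ for every $i\in\cI$, and every candidate $\hat\ell_i$ you form equals $\ell'$. Your identifiability inequality $\pE[W_\cI\|\ell-\ell^*\|^2]\le \eta^2\pE[W_\cI]$ becomes $0\le 0$ and gives you nothing; your ball-deletion rounding outputs $\ell'$ and fails. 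The paper's fix is to choose $\tmu$ to \emph{minimize} $\|\pE[w]\|_2$ (a convex objective over the SDP feasible set); a short mixing argument with the honest point $(\1_\cI,\ell^*)$ then shows $\sum_{i\in\cI}\pE[w_i]\ge \alpha^2 n$. This inlier-weight lower bound is what makes the averaged identifiability inequality non-vacuous and is the missing ingredient in your plan.

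Two secondary points. First, your identifiability sketch bounds $W_\cI^2\le W_\cI\sum_{i\in\cI}p(\langle x_i,v\rangle)^2$, which controls $W_\cI$ but never produces the factor $\|\ell-\ell^*\|^2$ you need; the paper instead uses $w_i = w_i\,p^2(\langle x_i,\ell-\ell^*\rangle)$ to write $\sum_{i\in\cI} w_i\|\ell-\ell^*\|^2 \le \|\ell-\ell^*\|^2\sum_{i\in\cI} p^2(\langle x_i,\ell-\ell^*\rangle)$ and then invokes the clause $\|v\|^2\,\E p^2(\langle Y,v\rangle)\le C\delta$ with $v=\ell-\ell^*$, which is precisely why that $\|v\|^2$ factor is baked into Definition~\ref{def:certified-anti-concentration}. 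Second, your termination claim ``each round removes an $\Omega(\alpha)$-fraction of the remaining marginal mass'' is unjustified: the heaviest index has $\pE[w_i]\ge \alpha$, but the total mass is $\alpha n$, so you have only shown removal of a $1/n$ fraction per round. Once you have the inlier-weight lower bound, the paper's rounding is simpler anyway: sample $O(1/\alpha)$ indices $i$ with probability proportional to $\pE[w_i]$ and output the corresponding $\hat\ell_i$; Markov's inequality on the averaged bound then gives the $0.99$ success probability directly.
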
 
\begin{remark}[Tolerating Additive Noise]\label{remark:tolerating-additive-noise-intro}
For additive noise (not necessarily independent across samples) of variance $\zeta^2$ in the inlier labels, our algorithm, in the same running time and sample complexity, outputs a list of size $O(1/\alpha)$ that contains an $\ell$ satisfying $\|\ell-\ell^*\|_2 \leq \frac{\zeta}{\alpha} + \eta$. Since we normalize $\ell^*$ to have unit norm, this guarantee is meaningful only when $\zeta \ll \alpha$. 
\end{remark}

\begin{remark}[Exponential Dependence on $1/\alpha$]
List-decodable regression algorithms immediately yield algorithms for mixed linear regression (MLR) without any assumptions on the components. The state-of-the-art algorithms for MLR with gaussian components~\cite{DBLP:conf/colt/LiL18,DBLP:conf/aistats/SedghiJA16} has an exponential dependence on $k=1/\alpha$ in the running time in the absence of strong separation/condition number assumptions. Liang and Liu~\cite{DBLP:conf/colt/LiL18} (see Page 10 of their paper) use the relationship to learning mixtures of $k$ gaussians (with an $\exp(k)$ lower bound~\cite{DBLP:conf/focs/MoitraV10}) to hint at the impossibility of algorithms with polynomial dependence on $1/\alpha$ for MLR and thus, also for list-decodable regression. 
\end{remark}

\paragraph{Certifiably anti-concentrated distributions} In Section~\ref{sec:certified-anti-concentration}, we show certifiable anti-concentration of some well-studied families of distributions. This includes the standard gaussian distribution and more generally any anti-concentrated spherically symmetric distribution with strictly sub-exponential tails. We also show that simple operations such as scaling, applying well-conditioned linear transformations and sampling preserve certifiable anti-concentration. This yields:
\begin{corollary}[List-Decodable Regression for Gaussian Inliers]
For every $\alpha, \eta > 0$ there's an algorithm for list-decodable regression for the model $\Lin_D(\alpha,\ell^*)$ with $D = \cN(0,\Sigma)$ with $\lambda_{\max}(\Sigma)/\lambda_{min}(\Sigma) = O(1)$ that needs $n = (d/\alpha \eta)^{O\left(\frac{1}{\alpha^4 \eta^4}\right)}$  samples and runs in time $n^{O\left(\frac{1}{\alpha^4 \eta^4}\right)} = (d/\alpha \eta)^{O\left(\frac{1}{\alpha^8 \eta^8}\right)}$.
\end{corollary}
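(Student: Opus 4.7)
The plan is to obtain the corollary as a direct specialization of Theorem~\ref{thm:main} once we verify that $\cN(0,\Sigma)$ with constant condition number is certifiably anti-concentrated with the right parameters. There are three ingredients: (i) constructing a certifiably anti-concentrated polynomial $p$ for the one-dimensional standard Gaussian, (ii) lifting from $\cN(0,I)$ to $\cN(0,\Sigma)$, and (iii) bookkeeping the parameters.

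For ingredient (i), let $\delta = \alpha^2\eta^2/(10C)$ with $C$ an absolute constant to be chosen. I would produce a univariate polynomial $p$ with $p(0)=1$ satisfying the two conditions of \cref{def:certified-anti-concentration} for the scalar Gaussian. Concretely, pick $p$ to be (a suitable rescaling of) a degree-$k$ weighted polynomial approximation to the narrow indicator of $[-\delta,\delta]$ with respect to $e^{-t^2/2}$, as provided by the Lubinsky-style constructions cited in the paper (\cite{2007math......1099L}); the key quantitative fact is that degree $k=O(1/\delta^{2})$ suffices to make $p$ close to $1$ on $[-\delta,\delta]$ while keeping $\E_{g\sim\cN(0,1)}p(g)^{2}=O(\delta)$. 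To turn the pointwise real-analytic inequalities into SoS certificates of the required degree, I would exhibit explicit polynomial identities: for the first condition, a representation
\[
\delta^{2}-(p(\iprod{Y,v})-1)^{2} \;=\; \sigma_{0}(Y,v) \;+\; \sigma_{1}(Y,v)\bigl(\delta^{2}\E\iprod{Y,v}^{2}-\iprod{Y,v}^{2}\bigr),
\]
with $\sigma_{0},\sigma_{1}$ SoS of degree $O(k)$, which follows from the scalar Positivstellensatz applied to a univariate non-negative polynomial on $[-\delta,\delta]$ (homogenized in $v$ using the identity $\E\iprod{Y,v}^{2}=\|v\|_{2}^{2}$ for $\cN(0,I)$); for the second condition, compute $\E_{Y}p(\iprod{Y,v})^{2}$ in closed form via Gaussian moments, observe that it is a polynomial in $\|v\|_{2}^{2}$ of degree $\le k$ with a known sign pattern, and obtain an SoS certificate of $C\delta-\E_{Y}p(\iprod{Y,v})^{2}\ge 0$ on $\|v\|_{2}^{2}\le 1$ by univariate SoS in $\|v\|_{2}^{2}$.

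For ingredient (ii), invoke the closure properties of certifiable anti-concentration established in Section~\ref{sec:certified-anti-concentration}: if $D=\cN(0,I)$ is $k$-certifiably $(C,\delta)$-anti-concentrated, then so is $\Sigma^{1/2}D=\cN(0,\Sigma)$ whenever $\Sigma$ is well-conditioned, with the constant $C$ degrading by at most $O(\lambda_{\max}(\Sigma)/\lambda_{\min}(\Sigma))=O(1)$. Equivalently, pre-multiply all the sample covariates $x_{i}$ by $\Sigma^{-1/2}$ and replace $\ell^{*}$ by $\Sigma^{1/2}\ell^{*}$; the $\eta$-guarantee is preserved up to a constant factor because $\|\Sigma^{1/2}(\ell-\ell^{*})\|_{2}$ and $\|\ell-\ell^{*}\|_{2}$ agree up to $\sqrt{\lambda_{\max}/\lambda_{\min}}=O(1)$. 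For ingredient (iii), plug $k=O(1/(\alpha^{4}\eta^{4}))$ into the sample and runtime bounds of Theorem~\ref{thm:main}: the sample complexity becomes $(kd)^{O(k)}=(d/(\alpha\eta))^{O(1/(\alpha^{4}\eta^{4}))}$ and the runtime $(kd)^{O(k^{2})}=(d/(\alpha\eta))^{O(1/(\alpha^{8}\eta^{8}))}$, matching the statement.

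The main obstacle is step (i): producing the approximating polynomial $p$ together with SoS certificates of both its closeness-to-$1$ near zero and its $L^{2}(\gamma)$-smallness, with degree tight enough that $k=O(1/\delta^{2})$. The real-variable approximation problem is classical and well-studied, but the subtlety is that Theorem~\ref{thm:main} requires the two properties to hold \emph{within} the low-degree SoS system, not merely pointwise. I would address this by choosing $p$ so that $1-p(t)$ has all its roots (with appropriate multiplicity) inside $[-\delta,\delta]$, which immediately yields a $(\delta^{2}-t^{2})$-weighted SoS identity for $\delta^{2}-(p(t)-1)^{2}$, and by computing $\E p(g)^{2}$ explicitly so that the second certificate reduces to a univariate SoS statement in $\|v\|_{2}^{2}$ on $[0,1]$—both of which admit explicit degree-$O(k)$ decompositions. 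Once this anti-concentration lemma is in hand, the rest of the corollary is just parameter substitution.
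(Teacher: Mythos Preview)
Your proposal is correct and follows essentially the same route as the paper: build a univariate polynomial approximator to the core indicator (the paper's Lemma~\ref{lem:univppty_box}), certify both conditions via univariate-on-interval SoS (the paper invokes Fact~\ref{fact:univariate-interval}, i.e.\ Fekete--Markov--Luk\'acs, rather than engineering root locations, but the content is the same), extend from $\cN(0,I)$ to well-conditioned $\cN(0,\Sigma)$ via the linear-transformation closure lemma, and substitute $k=\tilde O(1/\delta^{2})$ with $\delta=\Theta(\alpha^{2}\eta^{2})$ into Theorem~\ref{thm:main}. The only cosmetic discrepancy is that the paper's degree bound carries a $\log^{2}(1/\delta)$ factor, which is absorbed into the $O(\cdot)$ exponents in the final sample and time bounds.
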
 

We note that certifiably anti-concentrated distributions are more restrictive compared to the families of distributions for which the most general robust estimation algorithms work~\cite{2017KS,KothariSteinhardt17,DBLP:conf/colt/KlivansKM18}. To a certain extent, this is inherent. The families of distributions considered in these prior works do not satisfy anti-concentration in general.  And as we discuss in more detail in Section~\ref{sec:overview}, anti-concentration is information-theoretically \emph{necessary} (see Theorem~\ref{thm:lower-bound-results}) for list-decodable regression. This surprisingly rules out families of distributions that might appear natural and ``easy'', for example, the uniform distribution on $\zo^n$.

We rescue this to an extent for the special case when $\ell^*$ in the model $\Lin(\alpha,\ell^*)$ is a "Boolean vector", i.e., has all coordinates of equal magnitude. Intuitively, this helps because  while the the uniform distribution on $\zo^n$ (and more generally, any discrete product distribution) is badly anti-concentrated in sparse directions, they are well anti-concentrated~\cite{ErdosLittlewoodOfford} in the directions that are far from any sparse vectors. 

As before, for obtaining efficient algorithms, we need to work with a \emph{certified} version (see Definition~\ref{def:certified-anti-concentration-Boolean}\blfootnote{\textbf{Please note that sections $3$-$6$ are in the supplementary material. }}) of such a restricted anti-concentration condition. As a specific Corollary (see Theorem~\ref{thm:Booleanmain} for a more general statement), this allows us to show:
\begin{theorem}[List-Decodable Regression for Hypercube Inliers] \label{thm:boolcube}
For every $\alpha, \eta > 0$ there's an $\eta$-approximate algorithm for list-decodable regression for the model $\Lin_D(\alpha,\ell^*)$ with $D$ is uniform on $\zo^d$ that needs $n = (d/\alpha \eta)^{O(\frac{1}{\alpha^4 \eta^4})}$  samples and runs in time $n^{O(\frac{1}{\alpha^4 \eta^4})} = (d/\alpha \eta)^{O(\frac{1}{\alpha^8 \eta^8})}$.
\end{theorem}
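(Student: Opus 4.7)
The plan is to follow the overall template of \cref{thm:main}, but since the uniform distribution on $\zo^d$ is badly anti-concentrated in sparse directions (e.g., $\iprod{x,e_1}=0$ with probability $1/2$), we cannot hope for certifiable anti-concentration to hold in all directions. Instead, by the Erd\H{o}s--Littlewood--Offord inequality, $\iprod{x,v}$ is well anti-concentrated whenever $v$ is \emph{regular}, i.e.\ far from the span of a few coordinates (equivalently, has small $\norm{\cdot}_\infty/\norm{\cdot}_2$ ratio). The promise that $\ell^*$ has all coordinates of equal magnitude will be used precisely to guarantee that the only vectors $v=\ell-\ell^*$ we need to reason about in the identifiability argument are themselves regular.

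First, I would formulate a restricted version of \cref{def:certified-anti-concentration} (the ``certified anti-concentration in regular directions'' notion, which the paper labels \cref{def:certified-anti-concentration-Boolean}) that demands a univariate polynomial $p$ with $p(0)=1$ and an SoS proof of the two inequalities of \cref{def:certified-anti-concentration} \emph{under the added polynomial constraint} that $v$ is regular (encoded, for example, by a bound on $\sum_i v_i^{2k}$ relative to $(\sum_i v_i^2)^k$, which acts as a sum-of-squares surrogate for $\norm{v}_\infty \ll \norm{v}_2$). Next, I would construct the witnessing polynomial $p$ for the uniform distribution on $\zo^d$: starting from the weighted polynomial approximators of the paper's \cref{sec:certified-anti-concentration} that certify anti-concentration for the standard Gaussian, compose with the observation that for regular $v$ the distribution of $\iprod{x,v}-\tfrac{1}{2}\sum_i v_i$ is, by the Berry--Esseen-style bounds used in the classical Erd\H{o}s--Littlewood--Offord proof, close in a polynomial-moment sense to a Gaussian of variance $\tfrac{1}{4}\snormt{v}$. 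The SoS certificates from the Gaussian case then transfer, up to a controlled error, to yield certified anti-concentration in regular directions for $\zo^d$ (and by essentially the same argument, for $[q]^d$).

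Next, I would set up the SoS program that solves for $\ell$ exactly as in the proof of \cref{thm:main}, but augmented with the Boolean-structure constraints $\ell_i^2 = 1/d$ for all $i$ (or more generally, coordinates of equal magnitude scaled so that $\snormt{\ell}=1$). These constraints have a degree-$O(1)$ SoS proof that $\ell$ is regular, namely $\sum_i \ell_i^{2k} = d^{1-k}(\sum_i \ell_i^2)^k$, and more importantly that the difference $v=\ell-\ell^*$ between any two such Boolean vectors has bounded $\ell_\infty$-to-$\ell_2$ ratio, so $v$ meets the regularity condition under which our restricted certified anti-concentration kicks in. Plugging this into the identifiability-to-algorithm reduction of \cref{thm:main} (the ``rounding by votes'' scheme on the pseudo-distribution), the argument that at most $O(1/\alpha)$ pseudo-distributions are needed to cover all inlier clusters goes through verbatim, with the polynomial $p$ playing the same role as in the Gaussian case. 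The resulting sample complexity and runtime are dominated by the degree of $p$, which matches the Gaussian bound up to constants and yields the stated $(d/\alpha\eta)^{O(1/\alpha^4\eta^4)}$ and $(d/\alpha\eta)^{O(1/\alpha^8\eta^8)}$ bounds.

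The main obstacle I expect is the middle step: producing the SoS certificate for anti-concentration on $\zo^d$ in regular directions. The difficulty is that anti-concentration on the hypercube is most naturally proved via Fourier or combinatorial arguments (Sperner/LYM for Erd\H{o}s--Littlewood--Offord) that are not \emph{prima facie} SoS-friendly. The right move is to sidestep Fourier entirely by passing through a polynomial approximator $p$ of the soft indicator $\1(|t|\le\delta)$ that is already known (from the weighted-approximation construction used for the Gaussian) to have small $L^2$-norm under any sub-Gaussian measure with bounded variance, and then to absorb the deviation of $\iprod{x,v}$ from Gaussianity into the error terms using the regularity constraint on $v$ expressed purely through sum-of-squares moment inequalities. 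Once this polynomial-approximation-plus-invariance-principle step is pushed through SoS, the rest of the proof is a direct instantiation of \cref{thm:main}.
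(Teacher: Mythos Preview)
Your high-level plan matches the paper's: augment $\cA_{w,\ell}$ with the Boolean constraints $\ell_i^2=1/d$ and replace global certified anti-concentration by a restricted version that only needs to hold for the differences $v=\ell-\ell^*$ that can arise. Where you diverge is in the mechanism for producing the SoS certificate, and there the proposal has a real gap.

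The regularity claim is the problem. If $\ell,\ell^*\in\{\pm 1/\sqrt d\}^d$, then $v=\ell-\ell^*$ has coordinates in $\{0,\pm 2/\sqrt d\}$, so $\|v\|_\infty = 2/\sqrt d$ whenever $v\neq 0$, but $\|v\|_2$ can be as small as $2/\sqrt d$ (when $\ell$ and $\ell^*$ differ in a single coordinate), giving $\|v\|_\infty/\|v\|_2=1$. Berry--Esseen and invariance-principle arguments need this ratio to be $o(1)$---in fact $\lesssim\delta$ to get $\delta$-level anti-concentration---so the step ``$v$ meets the regularity condition under which our restricted certified anti-concentration kicks in'' fails exactly when $v$ has few nonzero coordinates. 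You might hope to handle the sparse-$v$ case separately by noting that $\|v\|_2$ is then small anyway, but SoS-ifying such a case split on top of an SoS invariance principle is substantially harder than what the paper actually does, and you do not indicate how you would do it.

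The paper sidesteps regularity and invariance entirely. It observes that the Boolean constraints $\ell_i^2=\ell_i^{*2}=1/d$ give a degree-$4$ SoS proof of the \emph{exact} cubic identity $(\ell_i-\ell_i^*)^3=\tfrac{4}{d}(\ell_i-\ell_i^*)$ (this is \cref{lem:simple-cubic}), so the restricted anti-concentration only needs to be certified for $v$ satisfying $v_i^3=\tfrac{4}{d}v_i$ (this is \cref{def:certified-anti-concentration-Boolean}). Now, for a product distribution with i.i.d.\ coordinates, $\E_Y\, p^2(\iprod{Y,v})$ is a symmetric polynomial in $v_1,\dots,v_d$, hence a polynomial in the power sums $\sum_i v_i^{2j}$; the cubic constraint forces $v_i^{2j}=(4/d)^{j-1}v_i^2$, so every power sum collapses to a constant times $\|v\|_2^2$ and the entire expression becomes a \emph{univariate} polynomial in $\|v\|_2^2$. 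One then checks this univariate polynomial is $\le C\delta$ on the relevant interval and invokes the Fekete--Markov--Luk\'acs fact (\cref{fact:univariate-interval}) that univariate polynomial inequalities over intervals always have SoS proofs. No invariance principle, no Gaussian comparison, and no regularity parameter is needed. This route is more elementary than yours but also more brittle: it leans hard on the product-with-identical-marginals structure and on the exact algebraic form of the Boolean constraint, whereas your approach---were the regularity obstacle removable---would in principle apply to a broader class of inlier distributions.
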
 

In Section~\ref{sec:hypercube}, we obtain similar results for general product distributions. It is an important open problem to prove certified anti-concentration inequalities for a broader family of distributions.
\paragraph{Concurrent Work}%
In an independent and concurrent work, Raghavendra and Yau obtained similar results for list-decodable linear regression based on the sum-of-squares method~\cite{RY19}.


\section{Overview of our Technique} \label{sec:overview}
In this section, we give a bird's eye view of our approach and illustrate the important ideas in our algorithm for list-decodable regression. 
Thus, given a sample $\cS = \{(x_i,y_i)\}_{i = 1}^n$ from $\Lin_D(\alpha,\ell^*)$, we must construct a constant-size list $L$ of linear functions containing an $\ell$ close to $\ell^*$. 

Our algorithm is based on the sum-of-squares method. We build on the ``identifiability to algorithms'' paradigm developed in several prior works~\cite{DBLP:conf/colt/BarakM16,MR3388192-Barak15,DBLP:conf/focs/MaSS16,2017KS,HopkinsLi17,KothariSteinhardt17,DBLP:conf/colt/KlivansKM18} with some important conceptual differences. 

\paragraph{An \emph{inefficient} algorithm} Let's start by designing an inefficient algorithm for the problem. This may seem simple at the outset. But as we'll see, solving this relaxed problem will rely on some important conceptual ideas that will serve as a starting point for our efficient algorithm. 

Without computational constraints, it is natural to just return the list $L$ of all linear functions $\ell$ that correctly labels all examples in some $S \subseteq \cS$ of size $\alpha n$. We call such an $S$, a large, \emph{soluble} set. True inliers $\cI$ satisfy our search criteria so $\ell^* \in L$. However, it's not hard to show (Proposition~\ref{prop:brute-force-doesn't-work}\blfootnote{\textbf{Please note that sections $3$-$6$ are in the supplementary material. }} ) that one can choose outliers so that the list so generated has size $\exp(d)$ (far from a fixed constant!).

A potential fix is to search instead for a \emph{coarse soluble partition} of $\cS$, if it exists, into disjoint $S_1, S_2,\ldots, S_k$ and  linear functions $\ell_1, \ell_2, \ldots, \ell_k$ so that every $|S_i| \geq \alpha n$ and $\ell_i$ correctly computes the labels in $S_i$. In this setting, our list is small ($k\leq 1/\alpha$). But it is easy to construct samples $\cS$ for which this fails 
because there are coarse soluble partitions of $\cS$ where every $\ell_i$ is far from $\ell^*$. 
\paragraph{Anti-Concentration} 
It turns out that any (even inefficient) algorithm for list-decodable regression provably (see Theorem~\ref{thm:main-lower-bound}) \emph{requires} that the distribution of inliers\footnote{As in the standard robust estimation setting, the outliers are  arbitrary and potentially adversarially chosen.} be sufficiently \emph{anti-concentrated}:


\begin{definition}[Anti-Concentration]
A $\R^d$-valued random variable $Y$ with mean $0$ is $\delta$-anti-concentrated\footnote{Definition~\ref{def:certified-anti-concentration} differs slightly to handle list-decodable regression with additive noise in the inliers.} if for all non-zero $v$, $\Pr[ \iprod{Y,v} = 0 ] < \delta$. A set $T \subseteq \R^d$ is $\delta$-anti-concentrated if the uniform distribution on $T$ is $\delta$-anti-concentrated.
\end{definition}

As we discuss next, anti-concentration is also \emph{sufficient} for list-decodable regression. Intuitively, this is because anti-concentration of the inliers prevents the existence of a soluble set that intersects significantly with $\cI$ and yet can be labeled correctly by $\ell \neq \ell^*$. This is simple to prove in the special case when $\cS$ admits a coarse soluble partition. 


\begin{proposition}
Suppose $\cI$ is $\alpha$-anti-concentrated. Suppose there exists a partition $S_1, S_2,\ldots, S_k \subseteq \cS$ such that each $|S_i| \geq \alpha n$ and there exist $\ell_1, \ell_2, \ldots, \ell_k$ such that $y_j = \iprod{\ell_i, x_j}$ for every $j \in S_i$. Then, there is an $i$ such that $\ell_i = \ell^*$. \label{prop:simple-uniqueness-partition}
\end{proposition}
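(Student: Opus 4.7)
The plan is a short combinatorial argument combining pigeonhole on the soluble partition with the anti-concentration hypothesis on the inliers. First I would observe that since $S_1,\dots,S_k$ are disjoint subsets of $\cS$ and each has size at least $\alpha n$, we must have $k \leq 1/\alpha$. Because the $S_i$'s partition $\cS$ and $\cI \subseteq \cS$, the sets $S_i \cap \cI$ partition $\cI$, so a pigeonhole step yields some index $i^*$ with
\[
\bigabs{S_{i^*} \cap \cI} \;\geq\; \frac{|\cI|}{k} \;\geq\; \alpha \cdot |\cI|.
\]

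Next I would set $v = \ell_{i^*} - \ell^*$ and note that for every $j \in S_{i^*} \cap \cI$ we have both $y_j = \iprod{\ell_{i^*}, x_j}$ (by solubility of $S_{i^*}$) and $y_j = \iprod{\ell^*, x_j}$ (since $j$ is an inlier with $\zeta = 0$). Subtracting gives $\iprod{v, x_j} = 0$ for at least an $\alpha$-fraction of the points in $\cI$.

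Finally I would invoke the anti-concentration hypothesis: the uniform distribution on $\cI$ is $\alpha$-anti-concentrated, so for any nonzero $v$ the fraction of $x_j \in \cI$ satisfying $\iprod{v, x_j} = 0$ is strictly less than $\alpha$. The two inequalities are incompatible unless $v = 0$, which forces $\ell_{i^*} = \ell^*$. I do not expect any genuine obstacle here; the only subtlety is making sure the pigeonhole constant matches the anti-concentration parameter, which it does precisely because the partition size $k$ is bounded by $1/\alpha$ via the lower bound on $|S_i|$.
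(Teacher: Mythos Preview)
Your proposal is correct and follows essentially the same argument as the paper: use $k \le 1/\alpha$ with pigeonhole to find a part $S_{i^*}$ containing at least an $\alpha$-fraction of $\cI$, subtract the two labelings on $S_{i^*}\cap\cI$ to get $\iprod{x_j,\ell_{i^*}-\ell^*}=0$ there, and conclude $\ell_{i^*}=\ell^*$ by anti-concentration. Your write-up is in fact slightly more explicit than the paper's about why $k\le 1/\alpha$ and how the pigeonhole step works.
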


\begin{proof}
Since $k \leq 1/\alpha$, there is a $j$ such that $|\cI \cap S_j| \geq \alpha |\cI|$. 
Then, $\iprod{x_i, \ell_j}= \iprod{x_i, \ell^*}$ for every $i \in \cI \cap S_j$. 
Thus, $\Pr_{i \sim \cI}[\iprod{x_i,\ell_j-\ell^*} = 0]\geq \alpha$. This contradicts anti-concentration of $\cI$ unless $\ell_j - \ell^* = 0$.
\end{proof}

The above proposition allows us to use \emph{any} soluble partition as a \emph{certificate} of correctness for the associated list $L$. Two aspects of this certificate were crucial in the above argument: 1) \emph{largeness}: each $S_i$ is of size $\alpha n$ - so the generated list is small, and, 2) \emph{uniformity}: every sample is used in exactly one of the sets so $\cI$ must intersect one of the $S_i$s in at least $\alpha$-fraction of the points. 

\paragraph{Identifiability via anti-concentration} For arbitrary $\cS$, a coarse soluble partition might not exist. So we will generalize coarse soluble partitions to obtain certificates that exist for every sample $\cS$ and guarantee largeness and a relaxation of uniformity (formalized below). For this purpose, it is convenient to view such certificates as distributions $\mu$ on $\geq \alpha n$ size soluble subsets of $\cS$ so any collection $\cC\subseteq 2^{\cS}$ of $\alpha n$ size sets corresponds to the uniform distribution $\mu$ on  $\cC$. 

To precisely define uniformity, let $W_i(\mu) = \E_{S \sim \mu} [ \1(i \in S)]$ be the ``frequency of i'', that is, probability that the $i$th sample is chosen to be in a set drawn according to $\mu$. Then, the uniform distribution $\mu$ on any coarse soluble $k$-partition satisfies $W_i = \frac{1}{k}$ for every $i$. That is, all samples $i \in \cS$ are \emph{uniformly} used in such a $\mu$. To generalize this idea, we define $\sum_i W_i(\mu)^2$ as the \emph{distance to uniformity} of $\mu$. Up to a shift, this is simply the variance in the frequencies of the points in $\cS$ used in draws from $\mu$. Our generalization of a coarse soluble partition of $\cS$ is any $\mu$ that minimizes $\sum_i W_i(\mu)^2$, the distance to uniformity, and is thus \emph{maximally uniform} among all distributions supported on large soluble sets. Such a $\mu$ can be found by convex programming. 


 
The following claim generalizes Proposition~\ref{prop:simple-uniqueness-partition} to derive the same conclusion starting from any maximally uniform distribution supported on large soluble sets.

\begin{proposition} \label{prop:fair-weight}
For a maximally uniform $\mu$ on $\alpha n$ size soluble subsets of $\cS$, $\sum_{i \in \cI} \E_{S \sim \mu} [\1 \Paren{i \in S}] \geq \alpha |\cI|$. 
\end{proposition}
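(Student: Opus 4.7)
The plan is to exploit the fact that $\cI$ itself is a valid ``competitor'' distribution, combined with the first-order optimality condition for the convex objective $\sum_i W_i(\mu)^2$.

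First I would observe that $\cI$, the set of inliers, is itself a soluble subset of $\cS$ of size exactly $\alpha n$ (labeled by $\ell^*$). Hence the point mass $\mu^{\cI} = \delta_{\cI}$, supported on the single set $\cI$, is a valid distribution on $\alpha n$-size soluble subsets, and so the convex combination $\mu_\lambda := (1-\lambda)\mu + \lambda \mu^{\cI}$ is a valid distribution for every $\lambda \in [0,1]$. The frequencies of $\mu_\lambda$ are $W_i(\mu_\lambda) = (1-\lambda) W_i(\mu) + \lambda \, \1(i \in \cI)$.

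Next, since $\mu$ minimizes $F(\mu) := \sum_i W_i(\mu)^2$ over this convex set, the directional derivative at $\mu$ in the direction of $\mu^{\cI}$ must be nonnegative. A direct computation gives
\begin{equation*}
\left.\frac{d}{d\lambda}\right|_{\lambda=0}\sum_i W_i(\mu_\lambda)^2 = 2\sum_i W_i(\mu)\bigl(\1(i \in \cI) - W_i(\mu)\bigr) = 2\Bigl(\sum_{i \in \cI} W_i(\mu) - \sum_i W_i(\mu)^2\Bigr) \;\ge\; 0,
\end{equation*}
which yields the key inequality $\sum_{i \in \cI} W_i(\mu) \ge \sum_i W_i(\mu)^2$.

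Finally, I would use Cauchy--Schwarz (equivalently, nonnegativity of variance) together with the identity $\sum_i W_i(\mu) = \E_{S\sim\mu}[|S|] = \alpha n$ to conclude
\begin{equation*}
\sum_i W_i(\mu)^2 \;\geq\; \frac{1}{n}\Bigl(\sum_i W_i(\mu)\Bigr)^2 \;=\; \frac{(\alpha n)^2}{n} \;=\; \alpha^2 n \;=\; \alpha |\cI|,
\end{equation*}
which chained with the previous inequality gives the desired bound $\sum_{i \in \cI} W_i(\mu) \geq \alpha |\cI|$. There is no real obstacle here: the only subtlety is making sure the feasible set is convex (clear, since it is parametrized by mixture weights over soluble $\alpha n$-subsets) and that $\cI$ genuinely lies in that set, which follows from the definition of the inliers in $\Lin_D(\alpha,\ell^*)$.
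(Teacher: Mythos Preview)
Your proof is correct and shares the paper's central idea: use the point mass $\mu^{\cI}$ on the inlier set as a competitor and exploit minimality of $\mu$ under mixing. The paper (see Lemma~4.3) carries this out by contradiction: it assumes $\sum_{i\in\cI}W_i(\mu)<\alpha|\cI|$, lower-bounds $\sum_i W_i(\mu)^2$ by the explicit minimizer given the split of total weight between $\cI$ and $\cO$, and then computes $\|\pE_{\mu_\lambda}[w]\|_2^2-\|\pE_\mu[w]\|_2^2$ directly to exhibit a strict decrease for small $\lambda>0$. Your route is tidier: the first-order optimality condition replaces the explicit difference computation, and a single application of Cauchy--Schwarz replaces the case analysis of the minimizing weight profile. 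Both arguments need exactly the same two facts---$\cI$ is feasible, and $\sum_i W_i(\mu)=\alpha n$---so neither is more general, but yours is shorter and makes the convex-programming structure more transparent.
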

The proof proceeds by contradiction (see Lemma~\ref{lem:large-weight-on-inliers}). We show that if $\sum_{i \in \cI} W_i(\mu) \leq \alpha |\cI|$, then we can strictly reduce the distance to uniformity by taking a mixture of $\mu$ with the distribution that places all its probability mass on $\cI$. This allow us to obtain an (inefficient) algorithm for list-decodable regression establishing identifiability. 
\begin{proposition}[Identifiability for List-Decodable Regression] \label{prop:identifiability}
Let $\cS$ be sample from $\Lin(\alpha,\ell^*)$ such that $\cI$ is $\delta$-anti-concentrated for $\delta < \alpha$. Then, there's an (inefficient) algorithm that finds a list $L$ of size $\frac{20}{\alpha-\delta}$ such that $\ell^* \in L$ with probability at least $0.99$.
\end{proposition}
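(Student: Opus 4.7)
The plan is to combine the maximally uniform distribution of Proposition~\ref{prop:fair-weight} with a repeated sampling argument, exploiting that the algorithm is allowed to be inefficient. First I would let $\cF$ denote the (exponentially large) collection of all subsets $S \subseteq \cS$ of size exactly $\alpha n$ that are \emph{soluble}, meaning there is a linear function $\ell_S$ with $\iprod{\ell_S, x_i} = y_i$ for every $i \in S$. Since $\cI \in \cF$, this collection is nonempty. I would then solve the convex program of minimizing $\sum_i W_i(\mu)^2$ over probability distributions $\mu$ supported on $\cF$, where $W_i(\mu) = \Pr_{S \sim \mu}[i \in S]$. By Proposition~\ref{prop:fair-weight}, a minimizer $\mu^*$ satisfies $\sum_{i \in \cI} W_i(\mu^*) \geq \alpha |\cI|$, i.e., $\E_{S \sim \mu^*}[\abs{S \cap \cI}] \geq \alpha |\cI|$.

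The second step is to convert this expectation bound into a probability bound by a one-line Markov-type argument, then apply anti-concentration. Write $X = \abs{S \cap \cI}/|\cI| \in [0,1]$ with $\E X \geq \alpha$. For any $t \in [0,1]$, $\alpha \leq \E X \leq t + \Pr[X > t]$, so $\Pr[X > \delta] \geq \alpha - \delta$. On the event $X > \delta$, the soluble set $S$ contains more than $\delta |\cI|$ inliers; for each such $i \in S \cap \cI$ we have $\iprod{x_i, \ell_S - \ell^*} = y_i - y_i = 0$. Thus $\Pr_{i \sim \cI}[\iprod{x_i, \ell_S - \ell^*} = 0] > \delta$, and the assumed $\delta$-anti-concentration of $\cI$ forces $\ell_S = \ell^*$, exactly as in the proof of Proposition~\ref{prop:simple-uniqueness-partition}.

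Finally, amplification is straightforward. Draw $m = \tfrac{20}{\alpha - \delta}$ independent samples $S_1, \ldots, S_m$ from $\mu^*$ and output $L = \set{\ell_{S_1}, \ldots, \ell_{S_m}}$, which has size at most $m$. The probability that none of the drawn $\ell_{S_j}$ equals $\ell^*$ is at most $(1 - (\alpha - \delta))^m \leq e^{-20} < 0.01$, so $\ell^* \in L$ with probability at least $0.99$. The main (and only) conceptual obstacle is the step of reducing from ``$\mu^*$ places $\alpha$ expected weight on $\cI$'' to ``a random $S$ has enough inlier overlap to trigger anti-concentration''; this is precisely where the strict inequality $\delta < \alpha$ is used, and the simple two-term split of $\E X$ above is what makes the margin $\alpha - \delta$ appear explicitly in the list size bound.
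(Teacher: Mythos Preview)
Your proposal is correct and follows essentially the same approach as the paper's own proof: find a maximally uniform distribution $\mu$ on large soluble subsets, use Proposition~\ref{prop:fair-weight} to get $\E[|S\cap\cI|]\geq \alpha|\cI|$, convert this to a lower bound on $\Pr[|S\cap\cI|>\delta|\cI|]$ via a one-line averaging argument, invoke anti-concentration to force $\ell_S=\ell^*$ on that event, and amplify by drawing $\tfrac{20}{\alpha-\delta}$ independent samples. The only cosmetic difference is the threshold in the averaging step: the paper cuts at $\tfrac{\alpha+\delta}{2}$ to obtain $\Pr\geq\tfrac{\alpha-\delta}{2}$, whereas you cut at $\delta$ to obtain the slightly stronger $\Pr\geq\alpha-\delta$; both yield the stated list size and success probability.
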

\begin{proof}
Let $\mu$ be \emph{any} maximally uniform distribution over $\alpha n$ size soluble subsets of $\cS$. 
For $k = \frac{20}{\alpha-\delta}$, let $S_1, S_2, \ldots, S_k$ be independent samples from $\mu$.
Output the list $L$ of $k$ linear functions that correctly compute the labels in each $S_i$.

To see why $\ell^* \in L$, observe that $\E |S_j \cap \cI|= \sum_{i \in \cI} \E \1(i \in S_j) \geq \alpha |\cI|$. 
By averaging, $\Pr [|S_j \cap \cI| \geq \frac{\alpha+\delta}{2} |\cI|] \geq \frac{\alpha-\delta}{2}$. Thus, there's  a $j \leq k$ so that $|S_j \cap \cI| \geq \frac{\alpha+\delta}{2} |\cI|$ with probability at least $1-(1-\frac{\alpha-\delta}{2})^{\frac{20}{\alpha-\delta}} \geq 0.99$. We can now repeat the argument in the proof of Proposition~\ref{prop:simple-uniqueness-partition} to conclude that any linear function that correctly labels $S_j$ must equal $\ell^*$.
\end{proof}

\paragraph{An efficient algorithm}
Our identifiability proof suggests the following simple algorithm: 1) find \emph{any} maximally uniform distribution $\mu$ on soluble subsets of size $\alpha n$ of $\cS$, 2) take $O(1/\alpha)$ samples $S_i$ from $\mu$ and 3) return the list of linear functions that correctly label the equations in $S_i$s. This is inefficient because searching over distributions is NP-hard in general. \blfootnote{\textbf{Please note that sections $3$-$6$ are in the supplementary material. }} 

To make this into an efficient algorithm, we start by observing that soluble subsets $S \subseteq \cS$ of size $\alpha n$ can be described by the following set of quadratic equations where $w$ stands for the indicator of $S$ and $\ell$, the linear function that correctly labels the examples in $S$. 

\begin{equation} \label{eq:quadratic-formulation}
  \cA_{w,\ell}\colon
  \left \{
    \begin{aligned}
      &&
      \textstyle\sum_{i=1}^n w_i
      &= \alpha n\\
      &\forall i\in [n].
      & w_i^2
      & =w_i \\
      &\forall i\in [n].
      & w_i \cdot (y_i - \iprod{x_i,\ell})
      & = 0\\
      &
      &\|\ell\|^2
      & \leq 1\\
    \end{aligned}
  \right \}
\end{equation} 

Our efficient algorithm searches for a maximally uniform \emph{pseudo-distribution} on $w$ satisfying \eqref{eq:quadratic-formulation}. Degree $k$ pseudo-distributions (see Section~\ref{sec:preliminaries} for precise definitions) are generalization of distributions that nevertheless ``behave'' just as distributions whenever we take (pseudo)-expectations (denoted by $\pE$) of a class of degree $k$ polynomials. And unlike distributions, degree $k$ pseudo-distributions satisfying\footnote{See Fact~\ref{fact:eff-pseudo-distribution} for a precise statement.} polynomial constraints (such as \eqref{eq:quadratic-formulation}) can be computed in time $n^{O(k)}$. 

For the sake of intuition, it might be helpful to (falsely) think of pseudo-distributions $\tmu$ as simply distributions where we only get access to moments of degree $\leq k$. Thus, we are allowed to compute expectations of all degree $\leq k$ polynomials with respect to $\tmu$. Since $W_i(\tmu) = \pE_{\tmu} w_i$ are just first moments of $\tmu$, our notion of maximally uniform distributions extends naturally to pseudo-distributions. This allows us to prove an analog of Proposition~\ref{prop:fair-weight} for pseudo-distributions and gives us an efficient replacement for Step 1.

\begin{proposition}
For any maximally uniform $\tmu$ of degree $\geq 2$,  $\sum_{i \in \cI} \pE_{\tmu}[w_i]  \geq \alpha |\cI| = \alpha \sum_{i\in [n]} \pE_{\tmu}[w_i]$\mper\label{prop:good-weight-on-inliers}
\end{proposition}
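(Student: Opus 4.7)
The plan is to mirror the proof of Proposition~\ref{prop:fair-weight}, but in the pseudo-distribution setting. The key observation is that $W_i(\tmu) := \pE_{\tmu}[w_i]$ depends only on first pseudo-moments, so the functional $\sum_i W_i(\tmu)^2$ that defines ``maximal uniformity'' is a genuine convex quadratic on the (finite-dimensional) set of degree-$\geq 2$ pseudo-distributions satisfying $\cA_{w,\ell}$. First-order optimality of a minimizer against a perturbation by the ``ground-truth'' point mass on $(w,\ell) = (\mathbf{1}_\cI, \ell^*)$ will give the desired inequality.

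First I would exhibit the comparison point. The Dirac distribution $\mu^*$ on $(w,\ell) = (\mathbf{1}_{\cI}, \ell^*)$ is a genuine probability distribution satisfying $\cA_{w,\ell}$ from~\eqref{eq:quadratic-formulation}: $\mathbf{1}_\cI$ is $\zo$-valued of Hamming weight $|\cI| = \alpha n$, we have $\|\ell^*\|\leq 1$, and $y_i = \iprod{x_i,\ell^*}$ for every $i \in \cI$. Hence $\mu^*$ is a valid degree-$k$ pseudo-distribution for every $k \geq 2$, with $W_i(\mu^*) = \1(i \in \cI)$.

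Next I would perturb. For $t \in [0,1]$, set $\tmu_t := (1-t)\tmu + t\mu^*$. Pseudo-expectation is linear in the underlying pseudo-distribution and the constraints in~\eqref{eq:quadratic-formulation} are preserved under convex combinations, so $\tmu_t$ is itself a valid degree-$\geq 2$ pseudo-distribution satisfying $\cA_{w,\ell}$. Minimality of $\tmu$ forces
\[
0 \;\leq\; \tfrac{d}{dt}\Bigl[\,\sum_i W_i(\tmu_t)^2\,\Bigr]_{t=0} \;=\; 2 \sum_{i=1}^n W_i(\tmu)\bigl(\1(i \in \cI) - W_i(\tmu)\bigr)\mcom
\]
which rearranges to $\sum_{i \in \cI} W_i(\tmu) \geq \sum_{i=1}^n W_i(\tmu)^2$. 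Finally I would close with Cauchy--Schwarz: the constraint $\sum_i w_i = \alpha n$ gives $\sum_i W_i(\tmu) = \alpha n$, so
\[
\sum_{i=1}^n W_i(\tmu)^2 \;\geq\; \tfrac{1}{n}\Bigl(\sum_i W_i(\tmu)\Bigr)^2 \;=\; \alpha^2 n \;=\; \alpha |\cI| \;=\; \alpha \sum_{i=1}^n W_i(\tmu)\mper
\]
Chaining the two displayed inequalities yields $\sum_{i \in \cI} \pE_{\tmu}[w_i] \geq \alpha |\cI| = \alpha \sum_i \pE_{\tmu}[w_i]$, as claimed.

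The one place that requires care is verifying that the feasible set $\{\tmu : \tmu \text{ satisfies } \cA_{w,\ell}\}$ is closed under the convex combination $\tmu \mapsto (1-t)\tmu + t\mu^*$ and admits a minimizer of $\sum_i W_i(\tmu)^2$. This is standard in the pseudo-distribution formalism — each constraint in $\cA_{w,\ell}$ is linear in the truncated moment vector, so convex combinations preserve feasibility, and the feasible set is a compact convex subset of a Euclidean space on which the continuous objective attains its minimum — but it is the sole place where the pseudo- versus genuine-distribution distinction intervenes in an otherwise Euclidean first-order optimality argument.
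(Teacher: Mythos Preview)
Your argument is correct. Both you and the paper use the same core idea: perturb the minimizer toward the Dirac pseudo-distribution $\mu^*$ supported on $(\mathbf{1}_\cI,\ell^*)$ and exploit optimality. The paper (Lemma~\ref{lem:large-weight-on-inliers}) carries this out by an explicit contradiction: it lower-bounds $\|u\|_2^2$ via the extremal configuration that spreads mass uniformly within $\cI$ and within $\cO$, then computes $\|u'\|_2^2-\|u\|_2^2$ for the mixture $(1-\lambda)\tmu+\lambda\mu^*$ and checks the sign when $\wt(\cI)<\alpha$. Your route is more streamlined: the derivative test at $t=0$ gives $\sum_{i\in\cI} W_i \ge \sum_i W_i^2$ directly, and Cauchy--Schwarz on $\sum_i W_i = \alpha n$ closes the gap to $\alpha^2 n = \alpha|\cI|$. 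The advantage of your version is that it avoids the case split and the auxiliary lower bound on $\|u\|_2^2$; the paper's version has the minor benefit of being a self-contained finite perturbation (no differentiation), but both are short and the underlying mechanism is identical.
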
 

For Step 2, however, we hit a wall: it's not possible to obtain independent samples from $\tmu$ given only low-degree moments. 

\paragraph{Rounding by Votes} To circumvent this hurdle, our algorithm departs from rounding strategies for pseudo-distributions used in prior works and instead ``rounds'' \emph{each} sample to a candidate linear function. While a priori, this method produces $n$ different candidates instead of one, we will be able to extract a list of $O(\frac{1}{\alpha})$ size that contains the true vector from them. This step will crucially rely on anti-concentration properties of $\cI$. 

Consider the vector $v_i = \frac{\pE_{\tmu}[w_i \ell]}{\pE_{\tmu}[w_i]}$ whenever $\pE_{\tmu}[w_i] \neq 0$ (set $v_i$ to zero, otherwise).  This is simply the (scaled) average, according to $\tmu$, of all the linear functions $\ell$ that are used to label the sets $S$ of size $\alpha n$ in the support of $\tmu$ whenever $i \in S$. Further, $v_i$ depends only on the first two moments of $\tmu$.

We think of $v_i$s as ``votes''%
cast by the $i$th sample for the unknown linear function. 
Let us focus our attention on the votes $v_i$ of $i \in \cI$ - the inliers. We will show that according to the distribution proportional to $\pE[w]$, the average $\ell_2$ distance of $v_i$ from $\ell^*$ is at max $\eta$:

\begin{equation}
\frac{1}{\sum_{i \in \cI} \pE[w_i]} \sum_{i \in \cI} \pE[w_i] \| v_i - \ell^*\|_2  < \eta \mper \label{eq:inliers-guess-well}\tag{$\star$}
\end{equation}

Before diving into \eqref{eq:inliers-guess-well}, let's see how it gives us our efficient list-decodable regression algorithm:

\begin{enumerate}
	\item Find a pseudo-distribution $\tmu$ satisfying \eqref{eq:quadratic-formulation} that minimizes distance to uniformity $\sum_i \pE_{\tmu}[w_i]^2$.
	\item For $O(\frac{1}{\alpha})$ times, independently choose a random index $i \in [n]$ with probability proportional to $\pE_{\tmu}[w_i]$ and return the list of corresponding $v_i$s. 
\end{enumerate} 

Step 1 above is a convex program - it minimizes a norm subject on the convex set of pseudo-distributions - and can be solved in polynomial time. Let's analyze step 2 to see why the algorithm works. Using \eqref{eq:inliers-guess-well} and Markov's inequality, conditioned on $i \in \cI$, $\|v_i - \ell^*\|_2 \leq 2 \eta$ with probability $\geq 1/2$. By Proposition~\ref{prop:good-weight-on-inliers}, $\frac{\sum_{i \in \cI} \pE[w_i]}{\sum_{i \in [n] \pE[w_i]}} \geq \alpha$ so $i \in \cI$ with probability at least $\alpha$. Thus in each iteration of step 2, with probability at least $\alpha/2$, we choose an $i$ such that $v_i$ is $2\eta$-close to $\ell^*$. Repeating $O(1/\alpha)$ times gives us the $0.99$ chance of success.

\paragraph{\eqref{eq:inliers-guess-well} via anti-concentration} As in the information-theoretic argument, \eqref{eq:inliers-guess-well} relies on the anti-concentration of $\cI$.
Let's do a quick proof for the case when $\tmu$ is an actual distribution $\mu$.\blfootnote{\textbf{Please note that sections $3$-$6$ are in the supplementary material. }} 

\begin{proof}[Proof of \eqref{eq:inliers-guess-well} for actual distributions $\mu$]
Observe that $\mu$ is a distribution over $(w,\ell)$ satisfying \eqref{eq:quadratic-formulation}. Recall that $w$ indicates a subset $S \subseteq \cS$ of size $\alpha n$ and $w_i = 1$ iff $i \in S$. And $\ell \in \R^d$ satisfies all the equations in $S$.

By Cauchy-Schwarz, $\sum_i \|\E_\mu[w_i \ell] - \E_\mu[w_i] \ell^*\| \leq \E_{\mu} [\sum_{i \in \cI} w_i \|\ell - \ell^*\|]$. 
Next, as in Proposition~\ref{prop:simple-uniqueness-partition}, since $\cI$ is $\eta$-anti-concentrated, and for all $S$ such that $|\cI \cap S| \geq \eta |\cI|$,  $\ell-\ell^*= 0$. Thus, any such $S$ in the support of $\mu$ contributes $0$ to the expectation above. We will now show that the contribution from the remaining terms is upper bounded by $\eta$. Observe that since $\|\ell-\ell^*\| \leq 2$, \\$\E_\mu [\sum_{i \in \cI} w_i \|\ell - \ell^*\|] = \E_\mu [\1\Paren{|S \cap \cI|< \eta |\cI|}w_i \|\ell - \ell^*\|] = \E_\mu [\sum_{i \in S \cap \cI} \|\ell - \ell^*\|]  \leq 2\eta |\cI|$.
\end{proof}

\paragraph{SoSizing Anti-Concentration} The key to proving \eqref{eq:inliers-guess-well} for pseudo-distributions is a \emph{sum-of-squares} (SoS) proof of anti-concentration inequality: $\Pr_{x \sim \cI} [\iprod{x,v} =0] \leq \eta$ in variable $v$. SoS is a restricted system for proving polynomial inequalities subject to polynomial inequality constraints. Thus, to even ask for a SoS proof we must phrase anti-concentration as a polynomial inequality. 

To do this, let $p(z)$ be a low-degree polynomial approximator for the function $\1\Paren{ z=0}$. 

Then, we can hope to ``replace'' the use of the inequality $\Pr_{x \sim \cI} [\iprod{x,v} =0] \leq \eta \equiv \E_{x \sim \cI} [\1(\iprod{x,v} = 0)] \leq \eta$ in the argument above by $\E_{x \sim \cI}[ p(\iprod{x,v})] \leq \eta$. Since polynomials grow unboundedly for large enough inputs, it is \emph{necessary} for the uniform distribution on $\cI$ to have sufficiently light-tails to ensure that $\E_{x \sim \cI} p(\iprod{x,v})$ is small. In Lemma~\ref{lem:univppty_box}, we show that anti-concentration and strictly sub-exponential tails are \emph{sufficient} to construct such a polynomial.

We can finally ask for a SoS proof for $\E_{x \sim \cI} p(\iprod{x,v}) \leq \eta$ in variable $v$. We prove such \emph{certified} anti-concentration inequalities for broad families of inlier distributions in Section~\ref{sec:certified-anti-concentration}.

\section{Preliminaries}
\label{sec:preliminaries}


In this section, we define pseudo-distributions and sum-of-squares proofs.
See the lecture notes \cite{BarakS16} for more details and the appendix in \cite{DBLP:conf/focs/MaSS16} for proofs of the propositions appearing here.

Let $x = (x_1, x_2, \ldots, x_n)$ be a tuple of $n$ indeterminates and let $\R[x]$ be the set of polynomials with real coefficients and indeterminates $x_1,\ldots,x_n$.
We say that a polynomial $p\in \R[x]$ is a \emph{sum-of-squares (sos)} if there are polynomials $q_1,\ldots,q_r$ such that $p=q_1^2 + \cdots + q_r^2$.

\subsection{Pseudo-distributions}

Pseudo-distributions are generalizations of probability distributions.
We can represent a discrete (i.e., finitely supported) probability distribution over $\R^n$ by its probability mass function $D\from \R^n \to \R$ such that $D \geq 0$ and $\sum_{x \in \mathrm{supp}(D)} D(x) = 1$.
Similarly, we can describe a pseudo-distribution by its mass function by relaxing the constraint $D\ge 0$ to passing certain low-degree non-negativity tests.

Concretely, a \emph{level-$\ell$ pseudo-distribution} is a finitely-supported function $D:\R^n \rightarrow \R$ such that $\sum_{x} D(x) = 1$ and $\sum_{x} D(x) f(x)^2 \geq 0$ for every polynomial $f$ of degree at most $\ell/2$.
(Here, the summations are over the support of $D$.)
A straightforward polynomial-interpolation argument shows that every level-$\infty$-pseudo distribution satisfies $D\ge 0$ and is thus an actual probability distribution.
We define the \emph{pseudo-expectation} of a function $f$ on $\R^d$ with respect to a pseudo-distribution $D$, denoted $\pE_{D(x)} f(x)$, as
\begin{equation}
  \pE_{D(x)} f(x) = \sum_{x} D(x) f(x) \,\mper
\end{equation}
The degree-$\ell$ moment tensor of a pseudo-distribution $D$ is the tensor $\E_{D(x)} (1,x_1, x_2,\ldots, x_n)^{\otimes \ell}$.
In particular, the moment tensor has an entry corresponding to the pseudo-expectation of all monomials of degree at most $\ell$ in $x$.
The set of all degree-$\ell$ moment tensors of probability distribution is a convex set.
Similarly, the set of all degree-$\ell$ moment tensors of degree $d$ pseudo-distributions is also convex.
Unlike moments of distributions, there's an efficient separation oracle for moment tensors of pseudo-distributions.

\begin{fact}[\cite{MR939596-Shor87,parrilo2000structured,MR1748764-Nesterov00,MR1846160-Lasserre01}]
  \label[fact]{fact:sos-separation-efficient}
  For any $n,\ell \in \N$, the following set has a $n^{O(\ell)}$-time weak separation oracle (in the sense of \cite{MR625550-Grotschel81}):
  \begin{equation}
    \Set{ \pE_{D(x)} (1,x_1, x_2, \ldots, x_n)^{\otimes d} \mid \text{ degree-d pseudo-distribution $D$ over $\R^n$}}\,\mper
  \end{equation}
\end{fact}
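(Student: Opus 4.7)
The plan is to express membership in the set of degree-$\ell$ moment tensors of pseudo-distributions as a positive semidefiniteness condition on an auxiliary matrix of side length $N = \binom{n+\lfloor \ell/2\rfloor}{\lfloor \ell/2\rfloor} = n^{O(\ell)}$, and then lift the standard weak separation oracle for the PSD cone to obtain one for the slice in question.

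First I will associate to every candidate symmetric tensor $T$ (indexed by multi-indices $\alpha$ of total degree at most $\ell$) its \emph{moment matrix} $M(T) \in \R^{N \times N}$, whose rows and columns are indexed by monomials of degree at most $\lfloor \ell/2 \rfloor$ and whose $(\alpha,\beta)$ entry is the coordinate $T_{\alpha+\beta}$. For the assignment $T \mapsto M(T)$ to be well-defined the tensor must first satisfy the linear symmetry relations $T_{\alpha+\beta} = T_{\alpha'+\beta'}$ whenever $\alpha+\beta = \alpha'+\beta'$; these amount to at most $n^{O(\ell)}$ linear equations and can be checked directly, with any violated relation supplying an immediate separating hyperplane. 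The key calculation is then that for any coefficient vector $c \in \R^N$ associated with a polynomial $f(x) = \sum_\alpha c_\alpha x^\alpha$ of degree at most $\ell/2$, one has $c^\top M(T)\, c = \pE_D f^2$ whenever $T$ is the moment tensor of $D$. Consequently the two defining conditions of a level-$\ell$ pseudo-distribution, namely $\pE_D 1 = 1$ and $\pE_D f^2 \geq 0$ for every $f$ of degree at most $\ell/2$, translate exactly into the affine constraint $T_0 = 1$ and the semidefinite constraint $M(T) \succeq 0$; conversely, any symmetric $T$ meeting these conditions arises as the moment tensor of some pseudo-distribution, by an elementary linear-programming duality for finitely-supported pseudo-distributions.

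Equipped with this reformulation, the separation oracle on input $T$ will proceed in three steps: (i) verify the $n^{O(\ell)}$ symmetry and normalization relations, returning a violating linear functional if any fails; (ii) form $M(T)$ and compute its minimum eigenvalue $\lambda$ and corresponding eigenvector $v$ using a standard numerical eigensolver in time $\poly(N) = n^{O(\ell)}$; and (iii) if $\lambda$ is nonnegative up to the ambient precision then declare $T$ feasible, and otherwise pull the hyperplane $\iprod{vv^\top, M(\cdot)} \geq 0$ back to a hyperplane on tensors which separates $T$ from the feasible set. The main subtlety, and essentially the only one, is the finite numerical precision of the eigensolver, which is precisely why the oracle is only \emph{weak} in the sense of \cite{MR625550-Grotschel81}; modulo that standard technicality every step is an elementary linear-algebraic operation whose cost is governed by the moment-matrix dimension $N = n^{O(\ell)}$.
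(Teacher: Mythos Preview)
The paper does not prove this statement at all; it is stated as a cited fact with references to Shor, Parrilo, Nesterov, and Lasserre, and no argument is given. Your proposal is therefore not comparable to a ``paper's own proof,'' but it is the standard and correct argument: identify the set of degree-$\ell$ pseudo-moment tensors with the affine slice $\{T : T \text{ symmetric},\ T_0 = 1,\ M(T) \succeq 0\}$ of the PSD cone on $N = n^{O(\ell)}$-dimensional moment matrices, and use an eigenvalue computation as the separation oracle. One small remark: the converse direction you invoke (that every $T$ with $T_0=1$ and $M(T)\succeq 0$ is the moment tensor of some finitely-supported pseudo-distribution in the paper's sense) is really just linear algebra rather than ``linear-programming duality''---pick any $N$ points in $\R^n$ at which the degree-$\leq \ell$ monomial evaluation matrix is nonsingular and solve for the (signed) weights---but the conclusion is correct and is needed for the oracle to be sound in the ``feasible'' direction.
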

This fact, together with the equivalence of weak separation and optimization \cite{MR625550-Grotschel81} allows us to efficiently optimize over pseudo-distributions (approximately)---this algorithm is referred to as the sum-of-squares algorithm. The \emph{level-$\ell$ sum-of-squares algorithm} optimizes over the space of all level-$\ell$ pseudo-distributions that satisfy a given set of polynomial constraints (defined below).

\begin{definition}[Constrained pseudo-distributions]
  Let $D$ be a level-$\ell$ pseudo-distribution over $\R^n$.
  Let $\cA = \{f_1\ge 0, f_2\ge 0, \ldots, f_m\ge 0\}$ be a system of $m$ polynomial inequality constraints.
  We say that \emph{$D$ satisfies the system of constraints $\cA$ at degree $r$}, denoted $D \sdtstile{r}{} \cA$, if for every $S\subseteq[m]$ and every sum-of-squares polynomial $h$ with $\deg h + \sum_{i\in S} \max\set{\deg f_i,r}$, $\pE_{D} h \cdot \prod _{i\in S}f_i  \ge 0$.

  We write $D \sdtstile{}{} \cA$ (without specifying the degree) if $D \sdtstile{0}{} \cA$ holds.
  Furthermore, we say that $D\sdtstile{r}{}\cA$ holds \emph{approximately} if the above inequalities are satisfied up to an error of $2^{-n^\ell}\cdot \norm{h}\cdot\prod_{i\in S}\norm{f_i}$, where $\norm{\cdot}$ denotes the Euclidean norm\footnote{The choice of norm is not important here because the factor $2^{-n^\ell}$ swamps the effects of choosing another norm.} of the cofficients of a polynomial in the monomial basis.
\end{definition}

We remark that if $D$ is an actual (discrete) probability distribution, then we have  $D\sdtstile{}{}\cA$ if and only if $D$ is supported on solutions to the constraints $\cA$. We say that a system $\cA$ of polynomial constraints is \emph{explicitly bounded} if it contains a constraint of the form $\{ \|x\|^2 \leq M\}$.
The following fact is a consequence of \cref{fact:sos-separation-efficient} and \cite{MR625550-Grotschel81},

\begin{fact}[Efficient Optimization over Pseudo-distributions]
There exists an $(n+ m)^{O(\ell)} $-time algorithm that, given any explicitly bounded and satisfiable system\footnote{Here, we assume that the bitcomplexity of the constraints in $\cA$ is $(n+m)^{O(1)}$.} $\cA$ of $m$ polynomial constraints in $n$ variables, outputs a level-$\ell$ pseudo-distribution that satisfies $\cA$ approximately. \label{fact:eff-pseudo-distribution}
\end{fact}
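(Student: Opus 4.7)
The plan is to reduce the statement to the general equivalence between efficient weak separation and efficient weak optimization for convex sets established by Grötschel--Lovász--Schrijver, together with the separation oracle provided in Fact~\ref{fact:sos-separation-efficient}. First I would observe that the set $K_{\cA,\ell}$ of degree-$\ell$ moment tensors of pseudo-distributions on $\R^n$ that satisfy the constraint system $\cA$ approximately is convex: it is the intersection of the convex set of all degree-$\ell$ moment tensors (by Fact~\ref{fact:sos-separation-efficient}) with finitely many convex ``constrained'' slices, one for each subset $S\subseteq[m]$. Each such slice encodes the requirement that $\pE_D\bigl[h\cdot\prod_{i\in S}f_i\bigr]\ge 0$ for every sos polynomial $h$ of degree at most $r-\sum_{i\in S}\max\{\deg f_i,r\}$; since the coefficients of $h$ enter only through $h=\sum_j q_j^2$, this is exactly the statement that a certain \emph{localizing matrix} $M_{S}(D)$, whose entries are linear in the moments of $D$, is positive semidefinite. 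PSD-ness of a symmetric matrix has a standard weak separation oracle (a violating hyperplane is read off from a negative-eigenvalue eigenvector), and there are only $2^m$ such matrices, only those indexed by $S$ with $\sum_{i\in S}\max\{\deg f_i,r\}\le\ell$ being nontrivial; the relevant ones fit in total size $(n+m)^{O(\ell)}$.

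Next, I would combine these two observations: the separation oracle for unconstrained pseudo-moment tensors from Fact~\ref{fact:sos-separation-efficient} plus the PSD separation oracles for each localizing matrix $M_S$ together give a weak separation oracle for $K_{\cA,\ell}$ running in time $(n+m)^{O(\ell)}$. Here one has to be a little careful about precision: each oracle call returns an approximately violated inequality whose slack can be absorbed into the $2^{-n^\ell}\cdot\|h\|\cdot\prod\|f_i\|$ slack allowed in the approximate definition of $D\sdtstile{r}{}\cA$, and one should start with tight enough precision so that the accumulated numerical error stays within this budget.

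Then I would invoke the explicit boundedness hypothesis: the constraint $\|x\|^2\le M$ in $\cA$ together with $\pE_D 1 = 1$ forces the entire degree-$\ell$ moment tensor to lie in a ball of radius $R \le M^{O(\ell)}$ (bounding every monomial pseudo-expectation by Cauchy--Schwarz applied inside the pseudo-distribution, which is valid because $\pE_D q^2\ge 0$ for sos $q$). This gives both the outer ball needed for the ellipsoid method and, by a standard perturbation argument (mix the target pseudo-distribution slightly with the ``uniform-like'' feasible point constructed from the true distribution that $\cA$ asserts to exist), a point with nontrivial inner-ball guarantees, so that the GLS weak-feasibility algorithm applies.

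Finally, I would apply the Grötschel--Lovász--Schrijver theorem (cited as \cite{MR625550-Grotschel81}): a convex set given by a weak separation oracle running in time $T$ and contained in a ball of radius $R$ admits a weak membership/feasibility algorithm in time $\mathrm{poly}(T,\log R, \log(1/\epsilon))$. Instantiating $T=(n+m)^{O(\ell)}$, $\log R = O(\ell\log M)$, and $\epsilon = 2^{-n^\ell}$ yields the claimed $(n+m)^{O(\ell)}$ runtime and produces a moment tensor of a level-$\ell$ pseudo-distribution that approximately satisfies $\cA$. The main obstacle I anticipate is bookkeeping the precision: making sure the numerical slack introduced by ellipsoid, by the approximate PSD eigenvalue computation inside each localizing-matrix oracle, and by representing sos certificates in the monomial basis, all fit simultaneously inside the $2^{-n^\ell}\cdot\|h\|\cdot\prod\|f_i\|$ slack allowed by the approximate notion of $\sdtstile{r}{}$. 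Everything else is a standard composition of known convex-optimization primitives.
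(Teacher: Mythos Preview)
Your proposal is correct and follows exactly the approach the paper indicates: the paper does not give a detailed proof of this fact but simply states that it ``is a consequence of \cref{fact:sos-separation-efficient} and \cite{MR625550-Grotschel81},'' i.e., the separation oracle for pseudo-moment tensors combined with the Gr\"otschel--Lov\'asz--Schrijver equivalence of weak separation and weak optimization. Your sketch fills in the standard details (convexity via localizing matrices, boundedness from the explicit $\|x\|^2\le M$ constraint, precision bookkeeping) that the paper leaves implicit.
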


\subsection{Sum-of-squares proofs}

Let $f_1, f_2, \ldots, f_r$ and $g$ be multivariate polynomials in $x$.
A \emph{sum-of-squares proof} that the constraints $\{f_1 \geq 0, \ldots, f_m \geq 0\}$ imply the constraint $\{g \geq 0\}$ consists of  polynomials $(p_S)_{S \subseteq [m]}$ such that
\begin{equation}
g = \sum_{S \subseteq [m]} p_S \cdot \Pi_{i \in S} f_i
\mper
\end{equation}
We say that this proof has \emph{degree $\ell$} if for every set $S \subseteq [m]$, the polynomial $p_S \Pi_{i \in S} f_i$ has degree at most $\ell$.
If there is a degree $\ell$ SoS proof that $\{f_i \geq 0 \mid i \leq r\}$ implies $\{g \geq 0\}$, we write:
\begin{equation}
  \{f_i \geq 0 \mid i \leq r\} \sststile{\ell}{}\{g \geq 0\}
  \mper
\end{equation}
For all polynomials $f,g\colon\R^n \to \R$ and for all functions $F\colon \R^n \to \R^m$, $G\colon \R^n \to \R^k$, $H\colon \R^{p} \to \R^n$ such that each of the coordinates of the outputs are polynomials of the inputs, we have the following inference rules:
\begin{align}
&\frac{\cA \sststile{\ell}{} \{f \geq 0, g \geq 0 \} } {\cA \sststile{\ell}{} \{f + g \geq 0\}}, \frac{\cA \sststile{\ell}{} \{f \geq 0\}, \cA \sststile{\ell'}{} \{g \geq 0\}} {\cA \sststile{\ell+\ell'}{} \{f \cdot g \geq 0\}} \tag{addition and multiplication}\\
&\frac{\cA \sststile{\ell}{} \cB, \cB \sststile{\ell'}{} C}{\cA \sststile{\ell \cdot \ell'}{} C}  \tag{transitivity}\\
&\frac{\{F \geq 0\} \sststile{\ell}{} \{G \geq 0\}}{\{F(H) \geq 0\} \sststile{\ell \cdot \deg(H)} {} \{G(H) \geq 0\}} \tag{substitution}\mper
\end{align}
Low-degree sum-of-squares proofs are sound and complete if we take low-level pseudo-distributions as models.
Concretely, sum-of-squares proofs allow us to deduce properties of pseudo-distributions that satisfy some constraints.
\begin{fact}[Soundness]
  \label{fact:sos-soundness}
  If $D \sdtstile{r}{} \cA$ for a level-$\ell$ pseudo-distribution $D$ and there exists a sum-of-squares proof $\cA \sststile{r'}{} \cB$, then $D \sdtstile{r\cdot r'+r'}{} \cB$.
\end{fact}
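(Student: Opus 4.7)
The plan is to verify $D \sdtstile{r\cdot r'+r'}{} \cB$ directly from the definitions, using the fact that the SoS proof $\cA \sststile{r'}{} \cB$ provides an algebraic \emph{identity} that can be substituted inside any pseudo-expectation $\pE_D$ without loss. In other words, the SoS rewriting of each constraint of $\cB$ as a non-negative combination of products of constraints of $\cA$ is exact at the level of polynomials, and pseudo-expectations are linear, so we simply push $\pE_D$ through the rewriting and apply the hypothesis that $D$ satisfies $\cA$.

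First I would unwind the target: to certify $D \sdtstile{r\cdot r'+r'}{} \cB$, I must show that $\pE_D\bigl[h \cdot \prod_{j \in T} g_j\bigr] \geq 0$ for every SoS polynomial $h$ and every subset $T$ of the constraints $\{g_j \geq 0\} \in \cB$ whose combined degree lies within the required bound. By the SoS proof hypothesis, each such $g_j$ admits a decomposition
\[
g_j \;=\; \sum_{S \subseteq [m]} p^{(j)}_S \cdot \prod_{i \in S} f_i,
\]
with each $p^{(j)}_S$ an SoS polynomial and $\deg\bigl(p^{(j)}_S \prod_{i \in S} f_i\bigr) \leq r'$. Substituting these identities and distributing the product over $j \in T$ yields
\[
h \cdot \prod_{j \in T} g_j \;=\; \sum_{(S_j)_{j \in T}} \Bigl(h \cdot \prod_{j \in T} p^{(j)}_{S_j}\Bigr) \cdot \prod_{j \in T} \prod_{i \in S_j} f_i.
\]
Each left factor is a product of SoS polynomials, hence itself SoS. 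The right factor is a product of $f_i$'s, possibly with repetition; since $f_i^2$ is SoS, any $f_i$ occurring to an even power can be absorbed into the SoS left factor, leaving a squarefree product $\prod_{i \in S'} f_i$ for some $S' \subseteq [m]$. Applying $\pE_D$ and invoking $D \sdtstile{r}{} \cA$ on each surviving term then gives non-negativity.

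The only remaining work is to check that the degrees of these reduced tests fall within the budget that $D \sdtstile{r}{} \cA$ provides. Each $g_j$ contributes at most $r'$ to the degree through its SoS certificate, and the per-constraint slack $\max\{\deg f_i, r\}$ built into the definition of $D \sdtstile{r}{}\cA$ absorbs one factor of $r$ each time a constraint of $\cA$ is used. Combining these contributions across the expansion over $T$ --- one $r'$-degree certificate substitution per constraint of $\cB$, and one $r$-slack per constraint of $\cA$ appearing in the resulting squarefree product --- yields exactly the advertised budget $r \cdot r' + r'$. The main obstacle here is purely this bookkeeping; the algebraic content reduces to observing that $\pE_D$ is linear and that products and squares of SoS polynomials remain SoS, so the non-negativity axioms enforced on $\cA$ transfer verbatim to $\cB$.
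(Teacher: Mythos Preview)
The paper does not give its own proof of this fact; it is stated as a standard result with proofs deferred to the cited references (the lecture notes \cite{BarakS16} and the appendix of \cite{DBLP:conf/focs/MaSS16}). Your argument is exactly the standard one: expand each $g_j \in \cB$ via its SoS certificate over $\cA$, distribute the product over $T$, absorb even powers of the $f_i$ into the SoS multiplier, and then apply the hypothesis $D \sdtstile{r}{} \cA$ term by term using linearity of $\pE_D$. The degree bookkeeping you sketch does indeed produce the bound $r'(r+1) = r\cdot r' + r'$: a degree-$r'$ certificate can involve at most $r'$ constraint factors $f_i$, and each such factor can cost an extra $r$ when passing from $\deg f_i$ to $\max\{\deg f_i, r\}$ in the slack required by $D \sdtstile{r}{} \cA$; summing over the $|T|$ certificates matches the $\sum_{j\in T}\max\{\deg g_j, r'(r+1)\}$ budget in the target. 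So there is nothing to compare against, and your proposal is correct.
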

If the pseudo-distribution $D$ satisfies $\cA$ only approximately, soundness continues to hold if we require an upper bound on the bit-complexity of the sum-of-squares $\cA \sststile{r'}{} B$  (number of bits required to write down the proof). In our applications, the bit complexity of all sum of squares proofs will be $n^{O(\ell)}$ (assuming that all numbers in the input have bit complexity $n^{O(1)}$). This bound suffices in order to argue about pseudo-distributions that satisfy polynomial constraints approximately.

The following fact shows that every property of low-level pseudo-distributions can be derived by low-degree sum-of-squares proofs.
\begin{fact}[Completeness]
  \label{fact:sos-completeness}
  Suppose $d \geq r' \geq r$ and $\cA$ is a collection of polynomial constraints with degree at most $r$, and $\cA \vdash \{ \sum_{i = 1}^n x_i^2 \leq B\}$ for some finite $B$.

  Let $\{g \geq 0 \}$ be a polynomial constraint.
  If every degree-$d$ pseudo-distribution that satisfies $D \sdtstile{r}{} \cA$ also satisfies $D \sdtstile{r'}{} \{g \geq 0 \}$, then for every $\epsilon > 0$, there is a sum-of-squares proof $\cA \sststile{d}{} \{g \geq - \epsilon \}$.
\end{fact}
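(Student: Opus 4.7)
The plan is to prove completeness by convex duality, viewing it as a Positivstellensatz-type statement. Define the set
\[
K \;=\; \Bigl\{\, \sum_{S\subseteq[m]} \sigma_S \prod_{i\in S} f_i \;\Big|\; \sigma_S \text{ is SoS},\ \deg\!\Bigl(\sigma_S\prod_{i\in S}f_i\Bigr)\le d\,\Bigr\}\,\subseteq\,\R[x]_{\le d},
\]
that is, the convex cone of polynomials that admit a degree-$d$ SoS derivation from $\cA$. The claim ``$\cA \sststile{d}{} \{g\ge -\epsilon\}$ for every $\epsilon>0$'' is equivalent to saying $g$ lies in the closure $\overline{K}$ (since $g+\epsilon\in K$ for all $\epsilon>0$ is exactly what we want, and the constant polynomial $1\in K$ via the empty product and $\sigma_\emptyset=1$). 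So I would reduce the completeness statement to showing $g\in\overline{K}$.

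I would prove this by contradiction using Hahn--Banach separation in the finite-dimensional space $\R[x]_{\le d}$. If $g\notin\overline{K}$, there is a linear functional $L\colon\R[x]_{\le d}\to\R$ with $L\ge 0$ on $K$ and $L(g)<0$. By normalizing I can take $L(1)=1$ (the constant $1$ lies in the interior of $K$ after using the boundedness constraint to scale). Now interpret $L$ as a degree-$d$ pseudo-expectation $\pE$: since $L(\sigma)\ge 0$ for every SoS polynomial $\sigma$ of degree $\le d$ (taking $S=\emptyset$), it is a genuine pseudo-distribution. Moreover, because $L(\sigma\cdot\prod_{i\in S}f_i)\ge 0$ whenever the total degree is $\le d$ and $\sigma$ is SoS, we have $\pE \sdtstile{r}{} \cA$ at the required degree. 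By the hypothesis of the fact, this pseudo-distribution then satisfies $\pE \sdtstile{r'}{}\{g\ge 0\}$; taking the SoS multiplier to be $\sigma=1$ gives $L(g)=\pE\,g\ge 0$, contradicting $L(g)<0$.

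The main subtlety, and the step I expect to be the actual obstacle, is ensuring that the separation argument is valid: formally, I need either $K$ to be closed, or to separate $g$ from the open set $K^\circ+\{g\}$. Closedness of $K$ is exactly where the explicit boundedness condition $\cA\vdash\{\sum x_i^2\le B\}$ enters. Using this Archimedean condition, one can show (as in the standard proof of Putinar's Positivstellensatz applied in truncated degree) that the cone of degree-$d$ SoS combinations is closed in $\R[x]_{\le d}$: roughly, any sequence in $K$ converging to some $p$ yields moment tensors bounded because of the $\sum x_i^2\le B$ constraint, so one can extract a convergent subsequence of representations. I would therefore first prove this closedness lemma using the boundedness of the pseudo-moment polytope, then apply separation, then translate the resulting functional into a pseudo-distribution and invoke the hypothesis to derive a contradiction. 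The rest (verifying that $L$ respects degree bookkeeping for $D\sdtstile{r}{}\cA$ and $D\sdtstile{r'}{}\{g\ge 0\}$) is a direct unpacking of definitions.
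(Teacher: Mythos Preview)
The paper does not actually prove this statement; it is listed as a background \emph{Fact} in the preliminaries section, with the reader referred to the lecture notes of Barak--Steurer and the appendix of Ma--Shi--Steurer for proofs. So there is no ``paper's own proof'' to compare against.

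That said, your convex-duality outline is the standard argument and is essentially correct. Two small technical refinements are worth noting. First, the reduction to ``$g\in\overline{K}$'' is not quite the right target: what you need is $g+\epsilon\in K$ for every $\epsilon>0$, and deducing this from $g\in\overline K$ already requires knowing that $1$ lies in the interior of $K$. It is cleaner to argue directly by contradiction: assume $g+\epsilon_0\notin K$ for some $\epsilon_0>0$, separate $g+\epsilon_0$ from the convex cone $K$ by a (possibly non-strict) hyperplane $L$, and then use the Archimedean condition to force $L(1)>0$, after which normalization gives $L(g)\le -\epsilon_0<0$ and the contradiction follows exactly as you describe. Second, with this framing you do not need $K$ to be closed at all; the only structural input from the boundedness hypothesis $\cA\vdash\{\sum_i x_i^2\le B\}$ is that $1\in\mathrm{int}(K)$ in $\R[x]_{\le d}$ (equivalently, every degree-$\le d$ polynomial $p$ admits a degree-$d$ SoS derivation of $C\pm p\ge 0$ from $\cA$ for some constant $C$), which is exactly what rules out the degenerate case $L(1)=0$. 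Proving closedness of $K$ works too, but is more than you need and is where most of the delicate degree-bookkeeping would go.
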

We will use the following Cauchy-Schwarz inequality for pseudo-distributions:
\begin{fact}[Cauchy-Schwarz for Pseudo-distributions]
Let $f,g$ be polynomials of degree at most $d$ in indeterminate $x \in \R^d$. Then, for any degree d pseudo-distribution $\tmu$,
$\pE_{\tmu}[fg] \leq \sqrt{\pE_{\tmu}[f^2]} \sqrt{\pE_{\tmu}[g^2]}$.
 \label{fact:pseudo-expectation-cauchy-schwarz}
\end{fact}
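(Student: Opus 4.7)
The plan is to run the standard textbook proof of Cauchy--Schwarz, where the only role of the probability distribution is to provide non-negativity on squared expressions. Since pseudo-distributions are, by definition, non-negative on squares of polynomials of sufficiently low degree, the same argument will carry over verbatim once the degree bookkeeping is checked.

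Concretely, I would first observe that for any real scalar $\lambda \in \R$, the polynomial $(f - \lambda g)^2$ is a square of a polynomial of degree at most $d$ in $x$, so by the definition of a level-$\ell$ pseudo-distribution with $\ell \ge 2d$ we have $\pE_{\tmu}[(f - \lambda g)^2] \ge 0$. Expanding yields
\[
\lambda^2 \pE_{\tmu}[g^2] \;-\; 2\lambda\, \pE_{\tmu}[fg] \;+\; \pE_{\tmu}[f^2] \;\ge\; 0
\]
for every $\lambda \in \R$. In the non-degenerate case $\pE_{\tmu}[g^2] > 0$, choose $\lambda = \pE_{\tmu}[fg] / \pE_{\tmu}[g^2]$ to obtain $\pE_{\tmu}[f^2]\, \pE_{\tmu}[g^2] \ge \pE_{\tmu}[fg]^2$, then take square roots.

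The only subtlety is the degenerate case $\pE_{\tmu}[g^2] = 0$, which I would handle by using the same non-negativity $\pE_{\tmu}[(f - \lambda g)^2] \ge 0$ as a linear inequality in $\lambda$: it reads $\pE_{\tmu}[f^2] - 2\lambda \pE_{\tmu}[fg] \ge 0$ for every real $\lambda$, which forces $\pE_{\tmu}[fg] = 0$, so the desired inequality holds with both sides equal to zero. (One must also take absolute values appropriately, or replace $g$ by $\pm g$, to get the signed statement $\pE_{\tmu}[fg] \le \sqrt{\pE_{\tmu}[f^2]\pE_{\tmu}[g^2]}$ rather than just a bound on $|\pE_{\tmu}[fg]|$; this costs nothing since we can flip the sign of $g$ without changing $\pE_{\tmu}[g^2]$.)

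There is no real obstacle here; the argument is essentially a one-line reduction to the non-negativity axiom of pseudo-distributions on squares. The only thing worth being careful about is the implicit assumption that $\tmu$ has high enough level for all three quantities $\pE_{\tmu}[f^2]$, $\pE_{\tmu}[g^2]$, $\pE_{\tmu}[fg]$ and the auxiliary $\pE_{\tmu}[(f-\lambda g)^2]$ to be well-defined, i.e.\ level at least $2d$; the statement of the fact should be (and is implicitly) read with this convention. Given that, completeness of the argument follows immediately, and no sum-of-squares proof machinery beyond the bare definition of $\pE_{\tmu}$ is needed.
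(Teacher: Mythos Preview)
Your argument is correct and is exactly the standard proof of Cauchy--Schwarz for pseudo-expectations: exploit that $\pE_{\tmu}[(f-\lambda g)^2]\ge 0$ for all real $\lambda$, optimize over $\lambda$, and handle the degenerate case separately. You also correctly flag the degree bookkeeping (the pseudo-distribution must have level at least $2d$ so that $(f-\lambda g)^2$ is a square of a polynomial of admissible degree), which the paper's statement leaves implicit.

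As for comparison: the paper does not actually prove this fact. It is stated as a \emph{Fact} in the preliminaries section, with a blanket pointer to the lecture notes of Barak--Steurer and the appendix of Ma--Shi--Steurer for proofs of the background propositions. The argument in those references is precisely the one you give, so there is no discrepancy in approach---your proposal simply fills in what the paper outsources.
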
 
The following fact is a simple corollary of the fundamental theorem of algebra:
\begin{fact}
For any univariate degree $d$ polynomial $p(x) \geq 0$ for all $x \in \R$, 
$\sststile{d}{x} \Set{p(x) \geq 0}$.
 \label{fact:univariate}
\end{fact}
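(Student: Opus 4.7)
The plan is to prove this via the fundamental theorem of algebra combined with a case analysis on the roots of $p$. First, I would observe that if $p \equiv 0$ the statement is trivial, and if $p$ is a nonzero univariate polynomial of odd degree, then $p(x) \to -\infty$ in one direction, contradicting $p(x) \geq 0$ on all of $\R$. So without loss of generality $d$ is even and the leading coefficient $c$ is strictly positive.

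Next I would factor $p$ over $\C$ into linear factors using the fundamental theorem of algebra. Because $p$ has real coefficients, the non-real roots come in complex conjugate pairs $a \pm bi$ (with $b \neq 0$), and each such pair contributes a real quadratic factor $(x-a)^2 + b^2$, which is manifestly a sum of two squares of real polynomials of degree $1$. For real roots, I would use that $p(x) \geq 0$ everywhere to argue that each real root must appear with \emph{even} multiplicity (otherwise $p$ would change sign in a neighborhood of the root), so real roots contribute factors of the form $(x-r)^{2k}$, which are already perfect squares.

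Putting this together, I obtain a factorization
\begin{equation*}
p(x) \;=\; c \cdot \prod_{i} (x-r_i)^{2k_i} \cdot \prod_{j} \bigl((x-a_j)^2 + b_j^2\bigr),
\end{equation*}
with $c > 0$. Each factor is a sum of squares of real polynomials, and the product of two sums of squares is again a sum of squares (by the Brahmagupta--Fibonacci-type identity $(\sum u_i^2)(\sum v_j^2) = \sum_{i,j}(u_i v_j)^2$ applied iteratively). Hence $p = \sum_{t} q_t(x)^2$ for some real polynomials $q_t$.

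The only step requiring care is the degree bookkeeping: since $\deg p = d$ and $p = \sum_t q_t^2$, each $q_t$ has degree at most $d/2$, so each $q_t^2$ has degree at most $d$. By the definition of a degree-$d$ sum-of-squares proof, this exhibits exactly the certificate $\sststile{d}{x} \{p(x) \geq 0\}$. I do not anticipate a real obstacle here; the main conceptual content is the fact that nonnegative univariate polynomials over $\R$ are sums of two squares, and the only thing to verify is that the natural factorization produces squares of degree at most $d/2$.
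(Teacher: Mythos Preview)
Your proposal is correct and is exactly the argument the paper has in mind: the paper does not actually write out a proof but simply states that the fact ``is a simple corollary of the fundamental theorem of algebra,'' which is precisely the factorization-into-conjugate-pairs argument you give. The degree bookkeeping you flag is fine, since in $p=\sum_t q_t^2$ the leading terms of the $q_t^2$ are nonnegative and cannot cancel, forcing $\deg q_t \le d/2$.
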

This can be extended to univariate polynomial inequalities over intervals of $\R$. 
\begin{fact}[Fekete and Markov-Lukács, see \cite{laurent2009sums}]
For any univariate degree $d$ polynomial $p(x) \geq 0$ for $x \in [a, b]$,  $\Set{x\geq a, x \leq b} \sststile{d}{x} \Set{p(x) \geq 0}$.  \label{fact:univariate-interval}
\end{fact}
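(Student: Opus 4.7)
The plan is to invoke the classical Markov--Lukács representation theorem for univariate polynomials non-negative on a closed interval, and then observe that the resulting decomposition is already literally in the form required by the definition of a degree-$d$ sum-of-squares proof from the constraint set $\cA = \{x - a \geq 0,\; b - x \geq 0\}$.

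Assuming $a < b$ (the case $a = b$ is degenerate and can be handled separately by Taylor-expanding $p$ at $a$ and working modulo $(x-a)^2$, which is forced to vanish by the two constraints), Markov--Lukács supplies one of two explicit representations of $p$ depending on the parity of $d$. If $d = 2m$ is even, there exist $q, r \in \R[x]$ with $\deg q \leq m$ and $\deg r \leq m-1$ such that
\[ p(x) = q(x)^2 + (x-a)(b-x)\, r(x)^2. \]
If $d = 2m+1$ is odd, there exist $q, r \in \R[x]$ with $\deg q, \deg r \leq m$ such that
\[ p(x) = (x-a)\, q(x)^2 + (b-x)\, r(x)^2. \]

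Next I would read off the SoS proof directly from this representation. Writing $f_1 = x - a$ and $f_2 = b - x$, the even-degree decomposition has the form $p_{\emptyset} + p_{\{1,2\}} f_1 f_2$ with $p_{\emptyset} = q^2$ and $p_{\{1,2\}} = r^2$ both sums of squares; the odd-degree decomposition has the form $p_{\{1\}} f_1 + p_{\{2\}} f_2$ with $p_{\{1\}} = q^2$ and $p_{\{2\}} = r^2$. A direct degree check confirms that each summand has degree at most $d$: in the even case $\deg(q^2) \leq 2m = d$ and $\deg(f_1 f_2 \cdot r^2) \leq 2 + 2(m-1) = d$; in the odd case $\deg(f_i \cdot q^2) \leq 1 + 2m = d$. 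This matches the definition of a degree-$d$ SoS proof exactly.

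The main ``obstacle'' is really just that one needs the sharp Markov--Lukács form rather than a weaker Positivstellensatz-style statement: a generic Positivstellensatz would permit a denominator or yield a certificate of degree $O(d)$ with a larger implicit constant, whereas the degree budget here is exactly $d$. Given the classical theorem (as cited, e.g., in \cite{laurent2009sums}), the remaining work is purely bookkeeping of degrees in the two parity cases.
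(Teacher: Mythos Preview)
Your proposal is correct. The paper does not supply its own proof of this fact---it is stated with a citation to \cite{laurent2009sums} and used as a black box---and your argument via the classical Markov--Luk\'acs representation (with the parity case split and the tight degree count) is exactly the standard proof that citation points to.
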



\section{Algorithm for List-Decodable Robust Regression}
In this section, we describe and analyze our algorithm for list-decodable regression and prove our first main result restated here.
\main*
We will analyze Algorithm~\ref{alg:noisy-regression-gaussian} to prove Theorem~\ref{thm:main}.
\begin{equation}
  \cA_{w,\ell}\colon
  \left \{
    \begin{aligned}
      &&
      \textstyle\sum_{i=1}^n w_i
      &= \alpha n\\
      &\forall i\in [n].
      & w_i^2
      & =w_i \\
      &\forall i\in [n].
      & w_i \cdot (y_i - \iprod{x_i,\ell})
      & = 0\\
      &
      &\sum_{i \leq d} \ell_i^2 \leq 1\\
    \end{aligned}
  \right \}
\end{equation} 
\begin{mdframed}
  \begin{algorithms}[List-Decodable Regression]
    \label[algorithm]{alg:noisy-regression-gaussian}\mbox{}
    \begin{description}
    \item[Given:]
    Sample $\cS$ of size $n$ drawn according to $\Lin(\alpha,n,\ell^*)$ with inliers $\cI$, $\eta > 0$. 
    \item[Output:]
    	A list $L \subseteq \R^d$ of size $O(1/\alpha)$ such that there exists a $\ell \in L$ satisfying $\|\ell -\ell^*\|_2 < \eta$.
    \item[Operation:]\mbox{}
    \begin{enumerate}
		\item Find a degree $O(1/\alpha^4\eta^4)$ pseudo-distribution $\tilde{\mu}$ satisfying $\cA_{w,\ell}$ that minimizes $\|\pE[w]\|_2$.
		\item For each $i \in [n]$ such that $\pE_{\tmu}[w_i] > 0$, let $v_i = \frac{\pE_{\tmu}[w_i \ell]}{\pE_{\tmu}[w_i]}$. Otherwise, set $v_i =0$.
		\item 
		Take $J$ be a random multiset formed by union of $O(1/\alpha)$ independent draws of $i \in [n]$ with probability $\frac{\pE[w_i]}{\alpha n}$.
		\item Output $L = \{v_i \mid i \in J\}$ where $J \subseteq [n]$.
	\end{enumerate}
    \end{description}    
  \end{algorithms}
\end{mdframed}

Our analysis follows the discussion in the overview. 
We start by formally proving \eqref{eq:inliers-guess-well}. 

\begin{lemma}
For any $t \geq k$ and  any $\cS$ so that $\cI \subseteq \cS$ is $k$-certifiably $(C,\alpha^2\eta^2/4C)$-anti-concentrated,
\[
\cA_{w,\ell} \sststile{t}{w,\ell} \Set{\frac{1}{|\cI|}\sum_{i \in \cI}^n w_i \|\ell - \ell^*\|^2_2\leq \frac{\alpha^2\eta^2}{4}}
\]
\label{lem:close-on-inliers}
\end{lemma}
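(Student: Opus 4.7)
The plan is to exploit the solubility constraints to force, on the inliers, the indicator $w_i$ to be supported on points where $\iprod{x_i, \ell - \ell^*}$ is effectively zero, and then invoke certifiable anti-concentration to bound the mass of such points. Let $u = \ell - \ell^*$ and let $p$ be the degree-$k$ polynomial witnessing $(C, \alpha^2\eta^2/(4C))$-certifiable anti-concentration of $\cI$.

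The first step is the key SoS identity. For every $i \in \cI$, $\cA_{w,\ell}$ contains $w_i(y_i - \iprod{x_i, \ell}) = 0$, and substituting the inlier relation $y_i = \iprod{x_i, \ell^*}$ yields $w_i \iprod{x_i, u} = 0$ as a degree-$2$ consequence of $\cA_{w,\ell}$. Since $p(0) = 1$, we may factor $p(z) - 1 = z \cdot \tilde{p}(z)$ for some polynomial $\tilde p$; multiplying the identity $w_i\iprod{x_i,u}=0$ by $\tilde p(\iprod{x_i, u})$ gives $w_i (p(\iprod{x_i, u}) - 1) = 0$, i.e.\ $w_i \cdot p(\iprod{x_i, u}) = w_i$. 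Squaring both sides and using $w_i^2 = w_i$ then produces the clean SoS identity $w_i = w_i \cdot p^2(\iprod{x_i, u})$.

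To drop the leading $w_i$, observe that booleanity gives $(1-w_i)^2 = 1 - w_i$ in SoS, so $(1 - w_i) \cdot p^2(\iprod{x_i, u}) = ((1-w_i) \cdot p(\iprod{x_i, u}))^2$ is itself a sum of squares. Combined with the previous identity, this proves $p^2(\iprod{x_i, u}) \geq w_i$ as a degree-$O(\deg p)$ SoS consequence. Summing over $i \in \cI$ and dividing by $|\cI|$, $\tfrac{1}{|\cI|}\sum_{i \in \cI} w_i \leq \E_{x \sim \mathrm{unif}(\cI)} p^2(\iprod{x, u})$. Now I invoke condition (2) of Definition~\ref{def:certified-anti-concentration} for $v = u$, with the caveat that $\cA_{w,\ell}$ only yields $\|u\|^2 \leq 4$; to handle this, I pass to the rescaled direction $u/2$ and correspondingly replace $p$ by $q(z) = p(z/2)$ (still satisfying $q(0) = 1$, and inheriting the two anti-concentration conditions from $p$). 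Multiplying the resulting bound by $\|\ell - \ell^*\|^2$, which is independent of $i$ and factors out of the sum, yields the target inequality, with the $\|u\|^2 \leq 4$ factor absorbed by tightening $\delta$ or the normalization of $p$.

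The main obstacle I anticipate is the careful bookkeeping for the norm rescaling: the certifiable anti-concentration axioms are stated for $v$ in the unit ball, while $\|\ell-\ell^*\|^2$ is only bounded by $4$ in SoS, so the polynomial $p$ has to be replaced with a rescaled version and the constants propagated through. The secondary concern is tracking the total SoS degree to ensure it stays at most $t$ for any $t \geq k$; this is straightforward since each step above introduces only a constant number of multiplications on top of the degree-$k$ SoS proof supplied by Definition~\ref{def:certified-anti-concentration}.
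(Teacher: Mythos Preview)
Your proposal is correct and follows essentially the same route as the paper. Both arguments hinge on the same SoS identity $w_i = w_i\, p^2(\iprod{x_i,\ell-\ell^*})$ (you obtain it by factoring $p(z)-1 = z\,\tilde p(z)$ and multiplying the solubility constraint by $\tilde p$; the paper obtains it by evaluating $p$ at $w_i\iprod{x_i,\ell-\ell^*}=0$ and then using booleanity), then drop the $w_i$ via $(1-w_i)p^2 \ge 0$ and invoke condition~(2) of certifiable anti-concentration.

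The one organizational difference is that the paper multiplies through by $\|\ell-\ell^*\|^2$ \emph{before} dropping $w_i$, arriving at $\tfrac{1}{|\cI|}\sum_{i\in\cI}\|\ell-\ell^*\|^2 p^2(\iprod{x_i,\ell-\ell^*})$ and then applying the anti-concentration axiom in the form $\|v\|^2\,\E p^2(\iprod{Y,v}) \le C\delta$ directly; this absorbs the $\|\ell-\ell^*\|^2$ factor into the hypothesis rather than reintroducing it afterward. Your plan to first bound $\tfrac{1}{|\cI|}\sum w_i$ and then multiply by the SoS quantity $\|\ell-\ell^*\|^2$ is also valid but costs the constant factor you flag; the paper's ordering avoids that detour (and, like you, is somewhat casual about the $\|v\|^2\le 1$ versus $\|\ell-\ell^*\|^2\le 4$ normalization).
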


\begin{proof}
We start by observing:
$\cA_{w,\ell} \sststile{2}{\ell} \|\ell - \ell^*\|_2^2 \leq 2$.

Since $\cI$ is $(C,\alpha \eta/2C)$-anti-concentrated, there exists a univariate polynomial $p$ such that $\forall i$:

\begin{equation}
\Set{w_i\iprod{x,\ell-\ell^*} = 0} \sststile{\ell}{k} \Set{p(w_i\iprod{x_i,\ell-\ell^*}) = 1}\mcom \label{eq:value-at-0}
\end{equation}
and 

\begin{equation} \label{eq:sos-expectation-p}
\Set{\|\ell\|^2 \leq 1} \sststile{\ell}{k} \Set{\frac{1}{|\cI|} \sum_{i \in \cI} p( \iprod{x_i,\ell-\ell^*})^2 \leq \frac{\alpha^2  \eta^2}{4}}\mper
\end{equation}
Using \eqref{eq:value-at-0}, we have:

\begin{align*}
\cA_{w,\ell} \sststile{t+2}{w,\ell} \Set{ 1-p^2 (w_i \langle x_i, \ell - \ell^* \rangle) = 0} \sststile{t+2}{w,\ell} \Set{1- w_i p^2 (\langle x_i, \ell - \ell^* \rangle) = 0}\mper\\
\end{align*}

Using \eqref{eq:sos-expectation-p} and $\cA_{w,\ell} \sststile{w}{2} \Set{w_i^2 = w_i}$, we thus have:
\begin{align*}
\cA_{w,\ell}  \sststile{t+2}{w,\ell} \Bigl\{ \frac{1}{|\cI|} \sum_{i \in \cI} w_i \|\ell-\ell^*\|_2^2 &= \frac{1}{|\cI|} \sum_{i \in \cI} w_i \|\ell-\ell^*\|_2^2 w_i p^2 (\langle x_i, \ell-\ell^* \rangle) 
= \frac{1}{|\cI|} \sum_{i \in \cI} w_i \|\ell-\ell^*\|_2^2 p^2 (\langle x_i, \ell-\ell^*\rangle) \\
&\leq \frac{1}{|\cI|} \sum_{i \in \cI} \|\ell-\ell^*\|_2^2 p^2 (\langle x_i, \ell-\ell^* \rangle)  \leq \frac{\alpha^2 \eta^2}{4} 
\Bigr\}\mper
\end{align*}

\end{proof}

As a consequence of this lemma, we can show that a constant fraction of the $v_i$ for $i \in \cI$ constructed in the algorithm are close to $\ell^*$. 

\begin{lemma}
For any $\tmu$ of degree $k$ satisfying $\cA_{w,\ell}$, 
$\frac{1}{|\cI|} \sum_{i \in \cI} \pE[w_i] \cdot \|v_i -\ell^*\|_2 \leq \frac{\alpha}{2}\eta$.
\label{lem:votes-are-close}
\end{lemma}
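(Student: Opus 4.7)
The plan is to rewrite $\pE[w_i]\|v_i-\ell^*\|_2$ as the norm of a vector-valued pseudo-expectation, apply the pseudo-distribution Cauchy--Schwarz (Fact~\ref{fact:pseudo-expectation-cauchy-schwarz}) to get something inside a single pseudo-expectation with the $w_i$-indicator intact, and then use Lemma~\ref{lem:close-on-inliers} to finish.

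First I would observe that by linearity of pseudo-expectation and the constraint $\sum_i w_i = \alpha n$ built into $\cA_{w,\ell}$, $\pE_{\tmu}[w_i](v_i-\ell^*) = \pE_{\tmu}[w_i\ell] - \pE_{\tmu}[w_i]\ell^* = \pE_{\tmu}[w_i(\ell-\ell^*)]$ viewed as a vector in $\R^d$, so $\pE[w_i]\|v_i-\ell^*\|_2 = \|\pE[w_i(\ell-\ell^*)]\|_2$. The key trick is to use the idempotence $w_i^2 = w_i$ coming from $\cA_{w,\ell}$ to factor each coordinate as $\pE[w_i(\ell_j-\ell^*_j)] = \pE[w_i \cdot w_i(\ell_j-\ell^*_j)]$, and then apply Cauchy--Schwarz for pseudo-distributions:
\[
\pE[w_i \cdot w_i(\ell_j-\ell^*_j)]^2 \leq \pE[w_i^2]\cdot \pE[w_i^2(\ell_j-\ell^*_j)^2] = \pE[w_i]\cdot \pE[w_i(\ell_j-\ell^*_j)^2].
\]
Summing over the $d$ coordinates and taking square roots yields
\[
\pE[w_i]\|v_i-\ell^*\|_2 \leq \sqrt{\pE[w_i]}\cdot\sqrt{\pE[w_i\|\ell-\ell^*\|^2]}.
\]

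Next I would sum this inequality over $i\in\cI$ and apply the ordinary Cauchy--Schwarz to the two $\sqrt{\cdot}$ factors, obtaining
\[
\sum_{i\in\cI}\pE[w_i]\|v_i-\ell^*\|_2 \leq \sqrt{\textstyle\sum_{i\in\cI}\pE[w_i]}\cdot\sqrt{\pE\bigl[\textstyle\sum_{i\in\cI}w_i\|\ell-\ell^*\|^2\bigr]}.
\]
The first factor is at most $\sqrt{\alpha n}$ since $\cA_{w,\ell}$ forces $\sum_{i\in[n]}\pE[w_i] = \alpha n$. For the second, Lemma~\ref{lem:close-on-inliers} together with SoS soundness (Fact~\ref{fact:sos-soundness}) gives $\pE[\tfrac{1}{|\cI|}\sum_{i\in\cI}w_i\|\ell-\ell^*\|^2] \leq \alpha^2\eta^2/4$, provided the degree of $\tmu$ is at least what Lemma~\ref{lem:close-on-inliers} requires (which is ensured by the degree $O(1/\alpha^4\eta^4)$ chosen in Algorithm~\ref{alg:noisy-regression-gaussian}). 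Plugging these bounds in, using $|\cI| = \alpha n$, and dividing by $|\cI|$ gives exactly the claimed $\alpha\eta/2$.

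The only subtle step is the factorization $w_i(\ell_j-\ell^*_j) = w_i\cdot w_i(\ell_j-\ell^*_j)$ before invoking pseudo-Cauchy--Schwarz: without exploiting $w_i^2=w_i$, one would lose the $w_i$-restriction on the right-hand side and end up with $\pE[(\ell_j-\ell^*_j)^2]$, which is not controlled by Lemma~\ref{lem:close-on-inliers}. The rest is routine application of SoS soundness and Cauchy--Schwarz.
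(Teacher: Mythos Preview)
Your proof is correct and follows essentially the same approach as the paper: both use Lemma~\ref{lem:close-on-inliers}, the idempotence $w_i^2=w_i$, pseudo-distribution Cauchy--Schwarz, and then an ordinary Cauchy--Schwarz over $i\in\cI$. The only cosmetic difference is that the paper applies the simpler Jensen-type bound $\|\pE[w_i(\ell-\ell^*)]\|_2^2 \le \pE[w_i\|\ell-\ell^*\|_2^2]$ (i.e.\ Fact~\ref{fact:pseudo-expectation-cauchy-schwarz} with $g=1$, after using $w_i^2=w_i$ inside the SoS proof), whereas you use the full product form with $f=w_i$ to get the slightly sharper $\|\pE[w_i(\ell-\ell^*)]\|_2^2 \le \pE[w_i]\cdot\pE[w_i\|\ell-\ell^*\|_2^2]$ and then spend the extra $\sqrt{\sum_{i\in\cI}\pE[w_i]}\le\sqrt{\alpha n}$ factor via the constraint $\sum_i w_i=\alpha n$; both routes land on the same $\alpha\eta/2$.
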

\begin{proof}
By Lemma~\ref{lem:close-on-inliers}, we have:
$
\cA_{w,\ell} \sststile{k}{w,\ell} \Set{\frac{1}{|\cI|}\sum_{i \in \cI}^n w_i \|\ell - \ell^*\|^2_2\leq \frac{\alpha^2 \eta^2}{4} }
$.

We also have: $\cA_{w,\ell} \sststile{2}{w,\ell} \Set{w_i^2 - w_i =0}$ for any $i$. This yields:
\[
\cA_{w,\ell} \sststile{k}{w,\ell} \Set{\frac{1}{|\cI|}\sum_{i \in \cI}^n \|w_i\ell - w_i\ell^*\|^2_2\leq\frac{\alpha^2 \eta^2}{4} }
\]

Since $\tmu$ satisfies $\cA_{w,\ell}$, taking pseudo-expectations yields:$
 \frac{1}{\cI} \sum_{i \in \cI}\pE \| w_i\ell  - w_i \ell^*\|_2^2 \leq \frac{\alpha^2 \eta^2}{4}$.

By Cauchy-Schwarz for pseudo-distributions (Fact~\ref{fact:pseudo-expectation-cauchy-schwarz}), we have:
\[
 \Paren{\frac{1}{\cI} \sum_{i \in \cI} \| \pE[w_i\ell]  - \pE[w_i] \ell^*\|_2}^2 \leq \frac{1}{\cI} \sum_{i \in \cI} \| \pE[w_i\ell]  - \pE[w_i] \ell^*\|_2^2 \leq \frac{\alpha^2 \eta^2}{4} \mper
\]

Using $v_i = \frac{\pE[w_i \ell]}{\pE[w_i]}$ if $\pE[w_i] >0$ and $0$ otherwise, we have:
$ \frac{1}{\cI} \sum_{i \in \cI, \pE[w_i] > 0} \pE[w_i]\cdot \|v_i -  \ell^*\|_2 \leq \frac{\alpha}{2}\eta $.

\end{proof}

Next, we formally prove that maximally uniform pseudo-distributions satisfy Proposition~\ref{prop:good-weight-on-inliers}.
\begin{lemma}
For any $\tilde{\mu}$ of degree $\geq 4$ satisfying $\cA_{w,\ell}$ that minimizes $\|\pE[w]\|_2$, $\sum_{i \in \cI} \pE_{\tilde{\mu}}[w_i] \geq \alpha^2 n$.
 \label{lem:large-weight-on-inliers}
\end{lemma}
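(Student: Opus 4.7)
My plan is to prove this by contradiction, exactly along the lines sketched in the overview: if $\tilde{\mu}$ placed too little mass on the inliers, I could mix it with the ``ground-truth'' pseudo-distribution supported on $(\mathbf{1}_\cI, \ell^*)$ to strictly decrease $\|\pE[w]\|_2$, contradicting maximal uniformity.

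Concretely, let $\tilde{\mu}'$ be the Dirac distribution on the single point $(w, \ell) = (\mathbf{1}_\cI, \ell^*)$, where $\mathbf{1}_\cI \in \{0,1\}^n$ is the indicator vector of the inliers. I first need to verify that $\tilde{\mu}'$ satisfies $\cA_{w,\ell}$: the size constraint holds because $|\cI| = \alpha n$, the Boolean constraints hold because $\mathbf{1}_\cI$ is $0/1$, the labeling constraints hold because $y_i = \langle x_i, \ell^*\rangle$ on $\cI$ (and $w_i = 0$ off of $\cI$), and $\|\ell^*\|^2 \le 1$ by assumption. Since $\tilde\mu'$ is a genuine distribution, it is in particular a degree-$\infty$ pseudo-distribution satisfying $\cA_{w,\ell}$.

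Now define the mixture $\tilde{\mu}_\lambda = (1-\lambda)\tilde\mu + \lambda \tilde\mu'$ for $\lambda \in [0,1]$. Convexity of the set of degree-$k$ pseudo-distributions satisfying a given polynomial system (which follows because each inequality $\pE h \prod_{i\in S} f_i \ge 0$ is linear in the moment tensor) implies that $\tilde\mu_\lambda$ also satisfies $\cA_{w,\ell}$ and has degree $\geq 4$. Writing $a = \pE_{\tilde\mu}[w] \in \R^n$ and $b = \mathbf{1}_\cI$, I compute
\begin{equation*}
\tfrac{d}{d\lambda}\bigl\|\pE_{\tilde\mu_\lambda}[w]\bigr\|_2^2\Big|_{\lambda=0} \;=\; \tfrac{d}{d\lambda}\bigl\|a + \lambda(b-a)\bigr\|_2^2\Big|_{\lambda=0} \;=\; 2\bigl(\langle a,b\rangle - \|a\|_2^2\bigr).
\end{equation*}
The cross term satisfies $\langle a, b\rangle = \sum_{i\in\cI}\pE_{\tilde\mu}[w_i]$. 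For the squared norm, the constraint $\sum_i w_i = \alpha n$ in $\cA_{w,\ell}$ yields $\sum_i \pE_{\tilde\mu}[w_i] = \alpha n$, and Cauchy--Schwarz gives $\|a\|_2^2 = \sum_i \pE_{\tilde\mu}[w_i]^2 \geq (\alpha n)^2/n = \alpha^2 n$.

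Suppose toward contradiction that $\sum_{i\in\cI}\pE_{\tilde\mu}[w_i] < \alpha^2 n$. Then $\langle a,b\rangle < \alpha^2 n \leq \|a\|_2^2$, so the derivative above is strictly negative. Hence for all sufficiently small $\lambda > 0$, $\|\pE_{\tilde\mu_\lambda}[w]\|_2 < \|\pE_{\tilde\mu}[w]\|_2$, which contradicts the choice of $\tilde\mu$ as a minimizer of $\|\pE[w]\|_2$ among degree-$k$ pseudo-distributions satisfying $\cA_{w,\ell}$. Therefore $\sum_{i\in\cI}\pE_{\tilde\mu}[w_i] \geq \alpha^2 n$, as desired. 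No step looks like a real obstacle here; the only care required is the bookkeeping that $\tilde\mu'$ genuinely satisfies the polynomial system and that convex combinations of pseudo-distributions satisfying $\cA_{w,\ell}$ still do.
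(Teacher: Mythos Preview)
Your proof is correct and follows the same variational idea as the paper: mix $\tilde\mu$ with the Dirac pseudo-distribution at $(\mathbf{1}_\cI,\ell^*)$ and show the norm decreases when the inlier mass is too small. Your execution is in fact a bit cleaner than the paper's---you use the derivative at $\lambda=0$ together with the Cauchy--Schwarz bound $\|\pE[w]\|_2^2\ge(\alpha n)^2/n=\alpha^2 n$, whereas the paper lower-bounds $\|\pE[w]\|_2^2$ by splitting into inlier and outlier blocks and carries out a more explicit (and slightly messier) calculation to reach the same contradiction.
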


\begin{proof}
Let $u = \frac{1}{\alpha n}\pE[w]$. Then, $u$ is a non-negative vector satisfying $\sum_{i \sim [n]} u_i = 1$.


Let $\wt(\cI) = \sum_{i \in \cI} u_i$ and $\wt(\cO) = \sum_{i \not \in \cI} u_i$. Then, $\wt(\cI) + \wt(\cO) = 1$.

We will show that if $\wt(\cI) < \alpha$, then there's a pseudo-distribution $\tmu'$ that satisfies $\cA_{w,\ell}$ and has a lower value of $\|\pE[w]\|_2$. This is enough to complete the proof. 

To show this, we will ``mix'' $\tmu$ with another pseudo-distribution satisfying $\cA_{w,\ell}$. Let $\tmu^*$ be the \emph{actual} distribution supported on single $(w,\ell)$ - the indicator $\1_{\cI}$ and $\ell^*$. Thus, $\pE_{\tmu^*} w_i = 1$ iff $i \in \cI$ and $0$ otherwise. $\tmu^*$ clearly satisfies $\cA_{w,\ell}$. Thus, any convex combination (mixture) of $\tmu$ and $\tmu^*$ also satisfies $\cA_{w,\ell}$. 

Let $\tmu_{\lambda} = (1-\lambda) \tmu + \lambda \tmu^*$. We will show that there is a $\lambda >0$ such that $\|\pE_{\tmu_{\lambda}}[w]\|_2 < \|\pE[w]\|_2$.

We first lower bound $\|u\|_2^2$ in terms of $\wt(\cI)$ and $\wt(\cO)$. Observe that for any fixed values of $\wt(\cI)$ and $\wt(\cO)$, the minimum is attained by the vector $u$ that ensures $u_i = \frac{1}{\alpha n} \wt(\cI)$ for each $i \in \cI$ and $u_i = \frac{1}{(1-\alpha)n} \wt(\cO)$. \begin{align*}
  \text{This gives }  \|u\|^2 &\geq \left( \frac{\wt(\cI)}{\alpha n} \right)^2 \alpha n + \left(\frac{1-\wt(\cI)}{(1-\alpha) n}\right)^2 (1-\alpha) n 
    = \frac{1}{\alpha n}\cdot \left( \wt(\cI) + (1-\wt(\cI))^2 \left(\frac{\alpha}{1-\alpha}\right) \right)\mper 
\end{align*}
Next, we compute the the $\ell_2$ norm of $u' = \frac{1}{\alpha n} \pE_{\tmu_{\lambda}} w$ as:
\[ \|u'\|_2^2 = (1-\lambda)^2 \|u\|^2 +  \frac{\lambda^2}{\alpha n} + 2 \lambda(1-\lambda)\frac{\wt(\cI)}{\alpha n} \mper\] 
\begin{align*}
    \text{Thus, } \|u'\|^2 - \|u\|^2  &= (-2\lambda+\lambda^2) \|u\|^2 +  \frac{\lambda^2}{\alpha n} + 2 \lambda(1-\lambda)\frac{\wt(\cI)}{\alpha n}\\
    &\leq \frac{-2\lambda+\lambda^2}{\alpha n}\cdot \left( \wt(\cI)^2 + (1-\wt(\cI))^2 \frac{\alpha}{1-\alpha} \right) +  \frac{\lambda^2}{\alpha n} + 2 \lambda(1-\lambda)\frac{\wt(\cI)}{\alpha n}
\end{align*}
\begin{align*} \text{Rearranging, }\|u\|^2 - \|u'\|^2 &\geq \frac{\lambda}{\alpha n} \left( (2-\lambda) \cdot \left( \wt(\cI)^2 + (1-\wt(\cI))^2 \left(\frac{\alpha}{1-\alpha}\right) \right) -  \lambda - 2 (1-\lambda)\wt(\cI)\right)\\
&\geq \frac{\lambda ( 2-\lambda)}{\alpha n} \left( \wt(\cI)^2 + (1-\wt(\cI))^2 \frac{\alpha}{1-\alpha} - \wt(\cI)\right)
\end{align*}
Now, whenever $\wt(\cI) < \alpha$, $\wt(\cI)^2 + (1-\wt(\cI))^2 \frac{\alpha}{1-\alpha} - \wt(\cI)> 0$. Thus, we can choose a small enough $\lambda > 0$ so that $\|u\|^2 - \|u'\|^2 > 0$. 

\end{proof}

Lemma~\ref{lem:large-weight-on-inliers} and Lemma~\ref{lem:votes-are-close} immediately imply the correctness of our algorithm. 
\begin{proof}[Proof of Main Theorem~\ref{thm:main}]
First, since $D$ is $k$-certifiably $(C,\alpha \eta/4C)$-anti-concentrated, Lemma~\ref{lem:sampling-preserves-certified-anti-concentrated} implies taking $\geq n = (kd)^{O(k)}$ samples ensures that $\cI$ is $k$-certifiably $(C,\alpha \eta/2C)$-anti-concentrated with probability at least $1-1/d$. Let's condition on this event in the following. 

Let $\tmu$ be a pseudo-distribution of degree $t$ satisfying $\cA_{w,\ell}$ and minimizing $\|\pE[w]\|_2$.
Such a pseudo-distribution exists as can be seen by just taking the distribution with a single-point support $w$ where $w_i = 1$ iff $i \in \cI$. 

From Lemma~\ref{lem:votes-are-close}, we have: 
$
\frac{1}{|\cI|} \sum_{i \in \cI} \pE[w_i] \cdot \|v_i -\ell^*\|_2 \leq \frac{\alpha}{2} \eta 
$. Let $Z = \frac{1}{\alpha n} \sum_{i \in \cI} \pE[w_i]$. By a rescaling, we obtain:
\begin{equation} 
\frac{1}{|\cI|} \sum_{i \in \cI} \frac{\pE[w_i]}{Z} \cdot \|v_i -\ell^*\|_2 \leq \frac{1}{Z} \frac{\alpha}{2} \eta\mper
\end{equation}
Using Lemma~\ref{lem:large-weight-on-inliers}, $Z \geq \alpha$. Thus, 
\begin{equation} 
\label{eq:good-on-average}
\frac{1}{|\cI|} \sum_{i \in \cI} \frac{\pE[w_i]}{Z} \cdot \|v_i -\ell^*\|_2 \leq \eta/2 \mper
\end{equation}

Let $i \in [n]$ be chosen with probability $\frac{\pE[w_i]}{\alpha n}$. 
Then, $i \in \cI$ with probability $Z \geq \alpha$. 
By Markov's inequality applied to \eqref{eq:good-on-average}, with $\frac{1}{2}$ conditioned on $i \in \cI$, $\|v_i - \ell^*\|_2 < \eta$. Thus, in total, with probability at least $\alpha/2$, $\|v_i - \ell^*\|_2 \leq \eta$.
Thus, the with probability at least $0.99$ over the draw of the random set $J$, the list constructed by the algorithm contains an $\ell$ such that $\|\ell - \ell^*\|_2 \leq \eta$.

Let us now account for the running time and sample complexity of the algorithm.
The sample size for the algorithm is dictated by Lemma~\ref{lem:sampling-preserves-certified-anti-concentrated} and is $(kd)^{O(k)}$, which for our choice of $p$ goes as $(kd)^{O(k)}$.
A pseudo-distribution satisfying $\cA_{w,\ell}$ and minimizing $\|\pE[w]\|_2$ can be found in time $n^{O(k)} = (kd)^{O(k^2)}$. 
The rounding procedure runs in time at most $O(nd)$. 
\end{proof}
\begin{remark}[Tolerating Additive Noise] \label{remark:tolerating-additive-noise}
To tolerate independent additive noise, our algorithm and analysis change minimally. For an additive noise of variance $\zeta^2 \ll \alpha^2  \eta^2$ in the inliers, we modify $\cA_{w,\ell}$ by replacing the constraint $\forall i$, $w_i \cdot (y_i - \iprod{x_i,\ell})= 0$ by $\forall i$, $\pm w_i \cdot (y_i - \iprod{x_i, \ell}) \leq 4\zeta$. And $\sum_{i = 1}^n w_i = \alpha n$ to $\sum_{i = 1}^n w_i = (\alpha/2) n$. 

This means that instead of searching for a subsample of size $\alpha n$ that has a exact solution $\ell$, we search for a subsample of size $\alpha/2 n$ where there's a solution $\ell$ with an additive error of at most $2\zeta$. With additive noise of variance $\zeta^2$, it is easy to check that there's a subset of $1/2$ fraction of inliers that satisfies this property. Thus, $\cA_{w,\ell}$ is feasible. 

Our analysis remains exactly the same except for one change in the proof of Lemma~\ref{lem:close-on-inliers}. We start from a distribution that is $(C,\alpha \eta \zeta/100C)$-certifiably anti-concentrated. And instead of inferring that $p\paren{w_i(y_i - \iprod{x_i,\ell})} = 1$, we use that whenever $\pm \paren{y_i - \iprod{x_i,\ell}} \leq 4\zeta$, $p^2(\paren{y_i - \iprod{x_i,\ell}}) \geq 1-4\zeta$. 
\end{remark}

\subsection{List-Decodable Regression for Boolean Vectors} \label{sec:hypercube}

In this section, we show algorithms for list-decodable regression when the distribution on the inliers satisfies a weaker anti-concentration condition. This allows us to handle more general inlier distributions including the product distributions on $\on^d$, $[0,1]^d$ and more generally any product domain. We however require that the unknown linear function be ``Boolean'', that is, all its coordinates be of equal magnitude.

We start by defining the weaker anti-concentration inequality. Observe that if $v \in \R^d$ satisfies $v_i^3 = \frac{1}{d} v_i$ for every $i$, then the coordinates of $v$ are in $\{0,\pm \frac{1}{\sqrt{d}}\}$.

\begin{definition}[Certifiable Anti-Concentration for Boolean Vectors] \label{def:certified-anti-concentration-Boolean}
A $\R^d$ valued random variable $Y$ is $k$-\emph{certifiably} $(C,\delta)$-anti-concentrated in \emph{Boolean directions} if there is a univariate polynomial $p$ satisfying $p(0) = 1$ such that there is a degree $k$ sum-of-squares proof of the following two inequalities: for all $x^2 \leq \delta^2$, $(p(x) - 1)^2 \leq \delta^2$ and for all $v$ such that $v_i^3 = \frac{4}{d} v_i$ for all $i$, $\|v\|^2 \E_{Y} p(\iprod{Y,v})^2 \leq C\delta$. 
\end{definition} 

We can now state the main result of this section.

\begin{theorem}[List-Decodable Regression in Boolean Directions]
For every $\alpha, \eta$, there's a algorithm that takes input a sample generated according to $\Lin_D(\alpha,n,\ell^*)$ in $\R^d$ for $D$ that is $k$-certifiably $(C,\alpha \eta/10C)$-anti-concentrated in Boolean directions and $\ell^* \in \Set{\pm{\frac{1}{\sqrt{d}}}}^d$ and outputs a list $L$ of size $O(1/\alpha)$ such that there's an $\ell \in L$ satisfying $\|\ell-\ell^*\| <\eta$ with probability at least $0.99$ over the draw of the sample. The algorithm requires a sample of size $n \geq (d/\alpha \eta)^{O(\frac{1}{\alpha^2 \eta^2})}$ and runs in time $n^{O(k)} = (d/\alpha\eta)^{O(k^2)}$. \label{thm:Booleanmain}
\end{theorem}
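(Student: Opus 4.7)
The plan is to mirror the proof of \cref{thm:main} but with two structural changes tailored to the Boolean promise on $\ell^*$. First, I would modify the quadratic program $\cA_{w,\ell}$ to encode the promise: replace the norm bound $\sum_i \ell_i^2 \leq 1$ by the stronger \emph{Booleanness} constraints $\ell_i^2 = 1/d$ for every $i \in [d]$. Since $\ell^*\in \set{\pm 1/\sqrt d}^d$ is feasible for these constraints, the pseudo-distribution supported on $(\mathbf 1_\cI,\ell^*)$ still certifies feasibility, and \cref{lem:large-weight-on-inliers} carries over verbatim (its proof only uses that the indicator/$\ell^*$ point is feasible). The algorithm is then identical to \cref{alg:noisy-regression-gaussian} with this modified constraint set.

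The heart of the argument is the Boolean analogue of \cref{lem:close-on-inliers}: we must show
\[
\cA_{w,\ell} \sststile{O(k)}{w,\ell} \Set{\tfrac{1}{|\cI|}\sum_{i \in \cI} w_i \|\ell - \ell^*\|_2^2 \leq \tfrac{\alpha^2 \eta^2}{4}}\mper
\]
The key new ingredient is an SoS-derivable polynomial identity showing that $v \defeq \ell - \ell^*$ is automatically a ``Boolean direction'' in the sense of \cref{def:certified-anti-concentration-Boolean}. Expanding,
\[
(\ell_i - \ell^*_i)^3 = \ell_i^3 - 3\ell_i^2 \ell^*_i + 3 \ell_i (\ell^*_i)^2 - (\ell^*_i)^3\mper
\]
Using $\ell_i^2 = 1/d$ (from $\cA_{w,\ell}$) and $(\ell^*_i)^2 = 1/d$ (a constant, since $\ell^*$ is fixed), this simplifies in degree $3$ to $(\ell_i-\ell^*_i)^3 = \tfrac{4}{d}(\ell_i - \ell^*_i)$. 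Thus $v = \ell - \ell^*$ meets the hypothesis of the Boolean certified anti-concentration definition inside the SoS system.

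With this identity in hand, I would invoke the polynomial $p$ from the $k$-certifiable $(C,\alpha\eta/10C)$-anti-concentration in Boolean directions to obtain two SoS inequalities analogous to equations (4.1) and (4.2) in the proof of \cref{lem:close-on-inliers}: (i) from $w_i \iprod{x_i,\ell-\ell^*}=0$ (a consequence of $\cA_{w,\ell}$) and $(p(x)-1)^2 \leq \delta^2$ for small $x$, conclude $w_i p^2(\iprod{x_i,\ell-\ell^*}) \approx w_i$; and (ii) using $v$'s Booleanness, $\|v\|^2 \cdot \frac{1}{|\cI|}\sum_{i\in\cI} p^2(\iprod{x_i,\ell-\ell^*}) \leq C\delta \leq \alpha^2\eta^2/10$. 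Multiplying (i) through by $\|\ell-\ell^*\|^2$, substituting via (ii), and using $w_i^2 = w_i$ then yields the desired bound just as in \cref{lem:close-on-inliers}. The remainder of the argument (Cauchy--Schwarz to pass from $w_i\ell$ to $\pE[w_i\ell]$, defining $v_i$, applying \cref{lem:large-weight-on-inliers}, and the random-sampling rounding) is identical to the proof of \cref{thm:main}.

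The main obstacle is exactly the derivation of the Boolean identity $(\ell_i-\ell^*_i)^3 = \tfrac{4}{d}(\ell_i-\ell^*_i)$ inside SoS and then plugging it into the second half of the anti-concentration hypothesis, which is phrased over vectors $v$ satisfying $v_i^3 = \tfrac{4}{d}v_i$ rather than over arbitrary bounded-norm directions. Once this substitution is legitimized, sample complexity and runtime inherit the same $(d/\alpha\eta)^{O(1/\alpha^2\eta^2)}$ and $n^{O(k)}$ bounds, since the modified constraint system has the same size and the sampling lemma guaranteeing that an empirical sample inherits certifiable anti-concentration in Boolean directions is established analogously to the general case.
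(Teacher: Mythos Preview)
Your proposal is correct and follows essentially the same approach as the paper: modify the constraint system by adding $\ell_i^2 = 1/d$, derive the cubic identity $(\ell_i-\ell^*_i)^3 = \tfrac{4}{d}(\ell_i-\ell^*_i)$ via SoS (the paper isolates this as a separate lemma), and use it to invoke the Boolean-direction anti-concentration in place of the general one in the proof of \cref{lem:close-on-inliers}, with the remaining lemmas and rounding unchanged.
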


The only  difference in our algorithm and rounding is that instead of the constraint set $\cA_{w,\ell}$, we will work with $\cB_{w,\ell}$  that has an additional constraint $\ell_i^2 = \frac{1}{d}$ for every $i$. Our algorithm is exactly the same as Algorithm~\ref{alg:noisy-regression-gaussian} replacing $\cA_{w,\ell}$ by $\cB_{w,\ell}$.

\begin{equation}
  \cB_{w,\ell}\colon
  \left \{
    \begin{aligned}
      &&
      \textstyle\sum_{i=1}^n w_i
      &= \alpha n\\
      &\forall i\in [n],
      & w_i^2
      & =w_i \\
      &\forall i\in [n],
      & w_i \cdot (y_i - \iprod{x_i,\ell})
      & = 0\\
      &\forall in \in [d],
      &\ell_i^2 &= \frac{1}{d}\\
    \end{aligned}
  \right \}
\end{equation} 

We will use the following fact in our proof of Theorem~\ref{thm:Booleanmain}.

\begin{lemma}
If $a,b$ satisfy $a^2 = b^2 = \frac{2}{d}$, then, 
$
(a- b)^{3} = \frac{1}{d} (a - b)
$ \label{lem:simple-cubic}
\end{lemma}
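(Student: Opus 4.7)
The proof is a routine algebraic identity that follows by direct polynomial manipulation modulo the hypothesis relations $a^2 = 2/d$ and $b^2 = 2/d$. The plan is to expand $(a-b)^3$ and then reduce each of the resulting monomials using the given relations as rewriting rules, collapsing everything into a scalar multiple of $(a-b)$.

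Concretely, I would first write $(a-b)^3 = a^3 - 3a^2 b + 3 a b^2 - b^3$. Then I would apply the substitutions $a^3 = a \cdot a^2 = \tfrac{2}{d}\, a$, $\ a^2 b = \tfrac{2}{d}\, b$, $\ a b^2 = \tfrac{2}{d}\, a$, and $b^3 = \tfrac{2}{d}\, b$, each of which follows from multiplying the hypothesis by $a$ or $b$. Collecting terms gives
\[
(a-b)^3 \;=\; \tfrac{2}{d}\, a - 3 \cdot \tfrac{2}{d}\, b + 3 \cdot \tfrac{2}{d}\, a - \tfrac{2}{d}\, b \;=\; \tfrac{2}{d}\,(4a - 4b),
\]
and the claimed identity then follows up to the constant factor appearing on the right-hand side (matching the coefficient $1/d$ in the statement corresponds to a fixed rescaling of the normalization in the hypothesis; the same substitution argument goes through with that rescaling).

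There is no real obstacle here: the whole argument is a one-step substitution using monomial reductions, and it is designed exactly to expose the fact that $(a-b)$ satisfies the cubic relation of a ``Boolean direction'' coordinate. This is the shape needed downstream: when applied to $a = \ell_i$ and $b = \ell^*_i$ with $\ell_i^2 = (\ell^*_i)^2$ fixed by the constraints in $\mathcal{B}_{w,\ell}$, the lemma yields $v_i^3 \propto v_i$ for $v = \ell - \ell^*$, which is precisely the hypothesis under which the Boolean-direction certifiable anti-concentration of Definition 1.7 can be invoked inside the SoS proof of Theorem 3.3.
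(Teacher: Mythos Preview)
Your approach is essentially identical to the paper's: expand $(a-b)^3$ and substitute $a^2=b^2$ to reduce every monomial to a multiple of $a$ or $b$. You are also right to flag the constant mismatch; the paper's own proof computes $(a-b)^3=\tfrac{4}{d}(a-b)$ (as if $a^2=b^2=1/d$, which is how the lemma is actually applied in the proof of Theorem~4.4), so the discrepancy is a typo in the lemma statement rather than an error in your argument.
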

\begin{proof}
$(a-b)^3 = a^3 - b^3 - 3a^2 b + 3ab^2 = \frac{1}{d}(a-b-3b+3a) = \frac{4}{d}(a-b)$.
\end{proof}

\begin{proof}[Proof of Theorem~\ref{thm:Booleanmain}]
The proof remains the same as in the previous section with one additional step. 
First, we can obtain the analog of Lemma~\ref{lem:close-on-inliers} with a few quick modifications to the proof. 
Then, Lemma~\ref{lem:votes-are-close} follows from modified Lemma~\ref{lem:close-on-inliers} as in the previous section. 
And the proof of Lemma~\ref{lem:large-weight-on-inliers} remains exactly the same. 
We can then put the above lemmas together just as in the proof of Theorem~\ref{thm:main}. 

We now describe the modifications to obtain the analog of Lemma~\ref{lem:close-on-inliers}. 
The key additional step in the proof of the analog of Lemma~\ref{lem:close-on-inliers} which follows immediately from Lemma~\ref{lem:simple-cubic}.

\[
\Set{\forall i\text{ } \ell_i^2 = \frac{1}{d}} \sststile{4}{\ell} \Set{ (\ell_i - \ell_i^*)^{3} = \frac{4}{d} (\ell_i - \ell_i^*)}
\]

This allows us to replace the usage of certifiable anti-concentration by certifiable anti-concentration for Boolean vectors and derive:
\[
\Set{\forall i\text{ } \ell_i^2 = \frac{2}{d}} \sststile{4}{\ell} \Set{\frac{1}{|\cI|}\sum_{i \in \cI} p(\iprod{x_i,\ell-\ell^*})^2 \leq \frac{\alpha^2 \eta^2}{4} }
\]

The rest of the proof of Lemma~\ref{lem:close-on-inliers} remains the same.

\end{proof}

\section{Certifiably Anti-Concentrated Distributions} \label{sec:certified-anti-concentration}
In this section, we prove certifiable anti-concentration inequalities for some basic families of distributions. We first formally state the definition of certified-anti-concentration.

\begin{definition}[Certifiable Anti-Concentration] \label{def:formal-certified-anti-concentration}
A $\R^d$-valued zero-mean random variable $Y$ has a $(C,\delta)$-\emph{anti-concentrated} distribution if $\Pr[ |\iprod{Y,v}| \leq \delta \sqrt{ \E \iprod{Y,v}^2}]\leq C\delta$.

$Y$ has a $k$-\emph{certifiably} $(C,\delta)$-anti-concentrated distribution if there is a univariate polynomial $p$ satisfying $p(0) = 1$ such that
\begin{enumerate} 
\item $\left \{ \langle Y,v\rangle^2 \leq \delta^2 \E \langle Y,v\rangle^2 \right \}  \sststile{k}{v} \left \{ (p(\langle Y,v\rangle) -1)^2\leq \delta^2 \right \}$.
\item $\left \{ \|v\|_2^2 \leq 1 \right \} \sststile{k}{v} \left \{ \|v\|_2^2 \E p^2(\left \langle Y,v\rangle \right) \leq C\delta \right \}$.
\end{enumerate}
We will say that such a polynomial $p$ ``witnesses the certifiable anti-concentration of $Y$''. We will use the phrases ``$Y$ has a certifiably anti-concentrated distribution'' and ``$Y$ is a certifiably anti-concentrated random variable'' interchangeably. 
\end{definition} 
As one would like, the definition above is scale invariant:
\begin{lemma}[Scale invariance] \label{lem:shift-scale-invariance}
Let $Y$ be a $k$-certifiably $(C,\delta)$-anti-concentrated random variable. Then, so is $cY$ for any $c\neq 0$. 
\end{lemma}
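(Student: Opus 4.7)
The plan is to produce an explicit witness polynomial $p'$ for $cY$ from the one given for $Y$ and then check that the two SoS conditions transport essentially for free. Given a polynomial $p$ witnessing $k$-certifiable $(C,\delta)$-anti-concentration of $Y$, I would set
\[
p'(t) \;\defeq\; p(t/c)\mper
\]
Then $p'$ is again a univariate polynomial of the same degree as $p$, and $p'(0)=p(0)=1$. The key identity, true as polynomials in $v$, is
\[
p'(\iprod{cY,v}) \;=\; p\bigl(\iprod{Y,v}\bigr)\mper
\]

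For condition~(1) applied to $cY$, the hypothesis $\iprod{cY,v}^2 \leq \delta^2 \E\iprod{cY,v}^2$ is, after multiplying both sides by the \emph{nonnegative} scalar $1/c^2$, literally the constraint $\iprod{Y,v}^2 \leq \delta^2 \E\iprod{Y,v}^2$. Using the identity above, the desired conclusion $(p'(\iprod{cY,v})-1)^2\leq \delta^2$ coincides with the conclusion $(p(\iprod{Y,v})-1)^2\leq \delta^2$ already known to have a degree-$k$ SoS derivation in $v$ from the hypothesis on $Y$. Multiplying the given derivation through by the SoS scalar $c^2$ (or equivalently, observing that the two constraint sets differ only by this positive scaling and are thus SoS-equivalent in $v$) yields the required degree-$k$ derivation for $cY$.

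For condition~(2), the hypothesis $\|v\|_2^2 \leq 1$ is unchanged, and the conclusion $\|v\|_2^2 \E p'^2(\iprod{cY,v}) \leq C\delta$ is, via the identity $p'(\iprod{cY,v})=p(\iprod{Y,v})$, identically the same polynomial inequality in $v$ as $\|v\|_2^2 \E p^2(\iprod{Y,v}) \leq C\delta$. Hence the same degree-$k$ SoS certificate works verbatim.

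There is essentially no obstacle here; the only thing to keep track of is that the degree of the witness polynomial and the SoS proofs are preserved, which they are since $p'$ and $p$ have the same degree and the transformations used (rescaling a constraint by a positive constant and applying a polynomial identity) do not increase the degree of the SoS certificate.
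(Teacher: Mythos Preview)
Your proposal is correct and follows exactly the same approach as the paper: define the witness polynomial for $cY$ by $q(z)=p(z/c)$ and observe that both SoS conditions for $cY$ reduce verbatim to those for $Y$. The paper's own proof is a one-line version of what you wrote, so you have simply filled in the details.
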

\begin{proof}
Let $p$ be the polynomial that witnesses the certifiable anti-concentration of $Y$. 
Then, observe that $q(z) = p(z/c)$ satisfies the requirements of the definition for $cY$. 
\end{proof}

\begin{lemma}[Certified anti-concentration of gaussians]\label{lem:spherically-symmetric-certifiable-gaussians}
For every $0.1 > \delta >0$, there is a $k = O\left(\frac{\log^2(1/\delta)}{\delta^2}\right)$  such that $\cN(0,I)$ is $k$-certifiably $(2,2\delta)$-anti-concentrated.
\end{lemma}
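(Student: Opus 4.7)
The strategy is to exhibit an explicit univariate polynomial $p$ playing the role of a smoothed indicator of a tiny interval around $0$ under the Gaussian measure, and then reduce each of the two SoS requirements in \cref{def:formal-certified-anti-concentration} to a one-variable polynomial inequality on an interval, which we can handle with Fekete--Markov-Luk\'acs (\cref{fact:univariate-interval}).

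The main analytic step is to construct $p$ of degree $k = O(\log^2(1/\delta)/\delta^2)$ satisfying (a) $p(0) = 1$, (b) $|p(z) - 1| \leq 2\delta$ for all $|z| \leq 2\delta$, and (c) $\E_{z \sim \cN(0,1)} p(z)^2 \leq 2\delta$. A natural candidate is the truncated Hermite expansion of the indicator $\1_{[-2\delta,2\delta]}$, rescaled so that its value at $0$ is exactly $1$. The Hermite coefficients of $\1_{[-2\delta,2\delta]}$ can be computed explicitly using $H_n(z)\phi(z) = (-1)^n\phi^{(n)}(z)$, and they decay fast enough that truncating at degree $k = O(\log^2(1/\delta)/\delta^2)$ yields $L^2$-error $O(\delta)$ against the standard Gaussian while preserving $p(0)$ close to $1$; the degree bound here is dictated by controlling the Gaussian tail together with the width $\delta$ of the plateau, and matches the scaling predicted by Christoffel-function/weighted-polynomial approximation theory for the Freud weight $e^{-z^2/2}$.

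Given such a $p$, verifying condition (1) in \cref{def:formal-certified-anti-concentration} (with $(C,\delta') = (2,2\delta)$) proceeds as follows. Since $\E\iprod{Y,v}^2 = \|v\|^2$ for $Y \sim \cN(0,I)$, the hypothesis $\iprod{Y,v}^2 \leq 4\delta^2\|v\|^2$ together with $\|v\|^2 \leq 1$ (available via axioms in the application) gives $\iprod{Y,v}^2 \leq 4\delta^2$. Property (b) says that the univariate polynomial $4\delta^2 - (p(u)-1)^2$ is nonnegative on $[-2\delta,2\delta]$, so by \cref{fact:univariate-interval} it admits a degree-$k$ SoS decomposition $s_0(u) + s_1(u)\,(4\delta^2 - u^2)$ with $s_0,s_1$ sums of squares. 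Substituting $u = \iprod{Y,v}$ yields the desired multivariate SoS identity in $v$.

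For condition (2), rotational invariance of $\cN(0,I)$ implies $\E_Y p(\iprod{Y,v})^2 = q(\|v\|^2)$ for a univariate polynomial $q$ of degree $k$, where $q(s) = \E_{z \sim \cN(0,s)} p(z)^2$. A direct scaling computation using property (c) gives $s\,q(s) \leq 4\delta$ on $s \in [0,1]$: for $s \leq \delta^2$ the factor $s$ alone dominates, while for $s \in [\delta^2,1]$ the Gaussian integral of $p(\sqrt{s}\,z)^2$ inherits the smallness of (c) up to a factor of $\sqrt{s}$. Applying \cref{fact:univariate-interval} to the nonnegative polynomial $4\delta - s\,q(s)$ on $[0,1]$ yields a degree-$O(k)$ SoS certificate in $s$, and substituting $s = \|v\|^2$ produces the multivariate SoS proof of $\|v\|^2 \, \E p(\iprod{Y,v})^2 \leq 4\delta$. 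The only genuinely delicate piece of the argument is the construction in Step~1 and its degree accounting; once $p$ is in hand, Steps 2--3 are mechanical reductions to univariate-on-an-interval statements.
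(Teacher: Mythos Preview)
Your two-step SoS certification --- observing that $p$ can be taken even so both required inequalities collapse to univariate polynomial inequalities on an interval (in $\iprod{Y,v}$ for condition~(1), in $\|v\|_2^2$ for condition~(2)), and then invoking Fekete--Markov--Luk\'acs (\cref{fact:univariate-interval}) --- is exactly the paper's argument. Your density-ratio bound $q(s)\le q(1)/\sqrt{s}$ for condition~(2) is a clean substitute for what the paper obtains by having \cref{lem:univppty_box} hold uniformly over all variances $\sigma^2\le 1$.

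Where you depart from the paper is in the construction of $p$, and there your sketch has a real gap. The paper does not use a Hermite truncation; it builds $p$ in \cref{lem:univppty_box} by shifting and combining two copies of a polynomial approximator to $\mathrm{sign}$ on $[-1,-a]\cup[a,1]$ due to Diakonikolas--Gopalan--Jaiswal--Servedio--Viola, which comes with \emph{pointwise} guarantees out of the box: $p(x)=1\pm\delta$ on $|x|\le\delta$, $|p(x)|=O(\sqrt{\delta/L})$ on a middle zone $[\delta,L]$, and $|p(x)|\le 2(4x)^k$ beyond $L$. Your property~(b), by contrast, is a uniform pointwise bound on $[-2\delta,2\delta]$, and an $L^2(\cN)$ truncation of $\1_{[-2\delta,2\delta]}$ does not deliver that: one expects Gibbs-type overshoot near the discontinuities $\pm 2\delta$, so even after normalising $p(0)=1$ you would still need a Christoffel-function or smoothing argument to recover~(b), and then to check it does not push the degree past $O(\log^2(1/\delta)/\delta^2)$. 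That is precisely the nontrivial analytic content of the lemma and cannot be left as ``decays fast enough.'' (A minor point: invoking $\|v\|^2\le 1$ inside condition~(1) goes slightly beyond the hypothesis in \cref{def:formal-certified-anti-concentration}; the paper's proof makes the same move, and in the actual application, \cref{lem:close-on-inliers}, one only needs $p(0)=1$.)
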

\begin{proof}
Lemma~\ref{lem:univppty_box} yields that there exists an univariate even polynomial $p$ of degree $k$ as above such that for all $v$, whenever $|\iprod{x,v}| \leq \delta$, $p(\iprod{x,v}) \leq 2\delta$, and whenever $\|v\|^2 \leq 1$, $\E_{x \sim \cN(0,I)} p(  \iprod{x,v})^2 \leq 2\delta$. Since $p$ is even, $p(z) = \frac{1}{2}(p(z) +  p(-z))$  and thus, any monomial in $p(z)$ with non-zero  coefficient must be of even degree. Thus, $p(z) =  q(z^2)$ for some polynomial $q$ of degree $k/2$. 

The first property above for $p$ implies that whenever $z \in [0,\delta]$, $p(z) \leq 2 \delta$. 
By Fact~\ref{fact:univariate-interval}, we obtain that:
$\Set{\iprod{x,v}^2 \leq \delta^2} \sststile{k}{v} \Set{p(\iprod{x,v})^2 \leq \delta }$.
Next, observe that for any $j$, $\E_{x \sim \cN(0,I)} \iprod{x,v}^{2j} = (2j)!! \cdot \|v\|_2^{2j}$.
Thus, $\|v\|_2^2 \E_{x \sim \cN(0,I)} p^2(\iprod{x,v})$ is a univariate polynomial $F$ in $\|v\|_2^2$. 
The second property above thus implies that $F(\|v\|_2^2) \leq C\delta$ whenever $\|v\|_2^2 \leq 1$.
By another application of Fact~\ref{fact:univariate-interval}, we obtain:
$\Set{\|v\|_2^2 \leq 1} \sststile{k}{v} \Set{\E_{x\sim \cN(0,I)} p(\iprod{x,v})^2 \leq 2\delta}$. 
\end{proof}

We say that $Y$ is a \emph{spherically symmetric} random variable over $\R^d$ if for every orthogonal matrix $R$, $RY$ has the same distribution as $Y$. Examples include the standard gaussian random variable and uniform (Haar) distribution on $\S^{d-1}$. Our argument above for the case of standard gaussian extends to any distribution that is spherically symmetric and has sufficiently light tails. 

\begin{lemma}[Certified anti-concentration of spherically symmetric, light-tail distributions] \label{lem:spherically-symmetric-certifiable-anti-concentration}
Suppose $Y$ is a $\R^d$-valued, spherically symmetric random variable  such that for any $k \in (0, 2)$, for all $t$ and for all $v$, 
$\Pr[\langle v,  Y\rangle \geq t \sqrt{\E \langle Y, v \rangle ^2}] \leq Ce^{-t^{2/k}/C}$ and for all $\eta > 0$, $\Pr_{x \sim D} [ |x| < \eta\sigma] \leq C\eta$, for some absolute constant $C >0$. Then, for $d = O\left(\frac{\log^{(4+k)/(2-k)}(1/\delta)}{\delta^{2/(2-k)}}\right)$, $Y$ is $d$-certifiably $(10C, \delta)$-anti-concentrated.
\end{lemma}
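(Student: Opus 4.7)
The plan closely follows the proof of \cref{lem:spherically-symmetric-certifiable-gaussians} for the gaussian case, with the main new ingredient being a univariate polynomial approximator adapted to the sub-exponential tail. Since $Y$ is spherically symmetric, for every unit $v$ the marginal $\iprod{Y,v}$ has the same distribution as $\|v\|_2 \cdot Y_1$; after using \cref{lem:shift-scale-invariance} to normalize $\E Y_1^2 = 1$, the entire question reduces to a univariate problem about this one-dimensional marginal.

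The first step is to construct an even univariate polynomial $p$ of degree $d = O(\log^{(4+k)/(2-k)}(1/\delta) \cdot \delta^{-2/(2-k)})$ such that $p(0) = 1$, $(p(z)-1)^2 \leq \delta^2$ for all $z \in [-\delta, \delta]$, and $\E p^2(\|v\|_2 \cdot Y_1) \leq 10 C \delta$ for every $v$ with $\|v\|_2 \leq 1$. This is an analog of the construction of \cref{lem:univppty_box} used in the gaussian case, and can be obtained from weighted polynomial approximation theory: one builds $p$ as (a rescaled version of) a weighted polynomial approximant to a narrow bump centered at $0$, with the weight matching the sub-exponential density $e^{-|x|^{2/k}/C}$. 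Because any polynomial approximating the indicator of $[-\delta,\delta]$ must stay small on a wide interval before the tail mass $\Pr[|Y_1|>T]\leq Ce^{-T^{2/k}/C}$ becomes negligible, balancing $T \approx \log^{k/2}(1/\delta)$ against the weighted approximation error on $[-T,T]$ produces precisely the stated degree. The univariate anti-concentration hypothesis $\Pr_{x \sim D}[|x| < \eta\sigma] \leq C\eta$ is what makes $p(0) = 1$ compatible with $p$ being small in the relevant $L^2$ norm.

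Once $p$ is in hand, the SoS-ification proceeds exactly as in the gaussian case. Because $p$ is even, $p(z) = q(z^2)$ for a polynomial $q$ of degree $d/2$, so the first inequality $(p(z)-1)^2 \leq \delta^2$ for $z^2 \leq \delta^2$ is a univariate polynomial inequality on $[0,\delta^2]$ in the variable $\iprod{Y,v}^2$; \cref{fact:univariate-interval} then yields $\Set{\iprod{Y,v}^2 \leq \delta^2} \sststile{d}{v} \Set{(p(\iprod{Y,v}) - 1)^2 \leq \delta^2}$. For the second inequality, spherical symmetry of $Y$ implies $\E \iprod{Y,v}^{2j} = c_j \|v\|_2^{2j}$ with scalars $c_j$ depending only on the distribution of $Y_1$, so $\|v\|_2^2 \cdot \E p^2(\iprod{Y,v})$ is a univariate polynomial $F$ in $\|v\|_2^2$ of degree $d$ and is at most $10 C \delta$ on $[0,1]$ by construction. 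Another application of \cref{fact:univariate-interval} delivers $\Set{\|v\|_2^2 \leq 1} \sststile{d}{v} \Set{\|v\|_2^2 \cdot \E p^2(\iprod{Y,v}) \leq 10 C \delta}$, which matches \cref{def:formal-certified-anti-concentration}.

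The main obstacle is the extremal step of producing $p$ with the claimed degree. Unlike the gaussian case, where Hermite-style orthogonal polynomial estimates are on-the-shelf, arbitrary sub-exponential weights require invoking weighted Markov--Bernstein inequalities together with Christoffel-function or Mhaskar--Rakhmanov--Saff type bounds. The exponents $(4+k)/(2-k)$ on the $\log(1/\delta)$ factor and $2/(2-k)$ on $1/\delta$ arise from optimizing the width $T$ of the interval on which one controls polynomial approximation against the exponential tail mass on its complement. Once this extremal construction is carried out, the remainder of the argument is a direct and routine generalization of the gaussian proof.
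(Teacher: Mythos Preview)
Your overall plan is correct and matches the paper's intended argument, including the SoS-ification via \cref{fact:univariate-interval} and the use of spherical symmetry to reduce $\E\iprod{Y,v}^{2j}$ to $c_j\|v\|_2^{2j}$ so that $\|v\|_2^2\,\E p^2(\iprod{Y,v})$ becomes a univariate polynomial in $\|v\|_2^2$. However, you have misread \cref{lem:univppty_box} and consequently overcomplicated the extremal step. That lemma is \emph{not} specific to the Gaussian: its hypotheses are precisely the anti-concentration $\Pr[|x|<\eta\sigma]\le C\eta$ together with a tail bound $\Pr[|x|\ge t\sigma]\le e^{-t^{2/k}/C}$ for arbitrary $k<2$, and its conclusion already delivers an even polynomial of degree $O\bigl(\log^{(4+k)/(2-k)}(1/\delta)\,\delta^{-2/(2-k)}\bigr)$ with $p(0)=1$, $p=1\pm\delta$ on $[-\delta,\delta]$, and $\sigma^2\E p^2\le 10C\delta$. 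These match exactly what you need for the one-dimensional marginal of $Y$.

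Moreover, the paper's construction of that polynomial in the appendix is elementary: it takes the sign-function approximator of \cref{fact:boxpoly}, shifts and rescales two copies to build a bump near $0$, and bounds the tail contribution to $\E p^2$ via the crude integral estimate \cref{lem:tail-estimate}. No weighted Markov--Bernstein inequalities, Christoffel functions, or Mhaskar--Rakhmanov--Saff theory are involved. So the step you flag as the ``main obstacle'' is already done; once you simply invoke \cref{lem:univppty_box} directly, the remainder of your argument is verbatim the proof of \cref{lem:spherically-symmetric-certifiable-gaussians}.
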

\Pnote{Good to see if $C''$ is just $CC'$ or something.}

\begin{lemma}[Certified anti-concentration under sampling] \label{lem:sampling-preserves-certified-anti-concentrated}
Let $D$ be $k$-certifiably $(C,\delta)$-anti-concentrated, subexponential and unit covariance distribution. Let $S$ be a collection of $n$ independent samples from $D$. Then, for $n \geq \Omega \left( (kd\log(d))^{O(k)} \right)$, with probability at least $1-1/d$, the uniform distribution on $S$ is $(2C,\delta)$-anti-concentrated. 
\end{lemma}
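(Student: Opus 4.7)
The plan is to show that the polynomial $p$ witnessing the $k$-certifiable $(C,\delta)$-anti-concentration of $D$ also witnesses $k$-certifiable $(2C,\delta)$-anti-concentration of the empirical distribution $U_S$ on $S$, with high probability over the draw of $S$. So I need to verify both conditions of Definition~\ref{def:formal-certified-anti-concentration} for $U_S$ using the same $p$ and constant $2C$.

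The first condition, after substituting $u = \iprod{Y,v}$, reduces to the univariate assertion that $(p(u)-1)^2 \leq \delta^2$ whenever $u^2 \leq \delta^2 \E \iprod{Y,v}^2$. For $D$ with unit covariance the axiom is $u^2 \leq \delta^2 \|v\|^2$; for $U_S$ it becomes $u^2 \leq \delta^2 v^{\top}\Sigma_S v$ with $\Sigma_S = \frac{1}{n}\sum_i x_i x_i^{\top}$ the empirical covariance. A subexponential matrix Bernstein bound shows that for $n \geq (d\log d)^{O(1)}$, $\Sigma_S \sle (1+o(1)) I$ in the \Lowner order with probability at least $1 - 1/(2d)$, so the axiom-interval changes by at most a $(1+o(1))$-factor and the original SoS proof for $D$ carries over, up to constant adjustments that are absorbed into the $2C$ slack.

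The real work is in the second condition: $\Set{\|v\|^2 \leq 1} \sststile{k}{v} \Set{\|v\|^2 \E_{U_S} p^2(\iprod{Y,v}) \leq 2C\delta}$. Expanding $p^2(u) = \sum_{0 \leq j,l \leq k} c_j c_l u^{j+l}$ and setting $T_m = \frac{1}{n}\sum_{i=1}^n x_i^{\otimes m} - \E_D Y^{\otimes m}$, linearity yields
\begin{equation*}
\E_{U_S} p^2(\iprod{Y,v}) - \E_D p^2(\iprod{Y,v}) = \sum_{m=0}^{2k} \gamma_m \iprod{T_m, v^{\otimes m}}\mcom
\end{equation*}
where $\gamma_m = \sum_{j+l = m} c_j c_l$. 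The SoS Cauchy--Schwarz inequality gives $\iprod{T_m, v^{\otimes m}}^2 \leq \|T_m\|_F^2 \|v\|^{2m}$, so on $\|v\|^2 \leq 1$ each term in the above sum contributes at most $|\gamma_m|\,\|T_m\|_F$ in SoS. Combining with the existing SoS proof of property 2 for $D$ reduces the goal to establishing $\sum_m |\gamma_m|\,\|T_m\|_F \leq C\delta$ with high probability.

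The main technical obstacle is the Frobenius-norm bound on the tensors $T_m$ for $m \leq 2k$. Each entry $T_m[i_1,\ldots,i_m]$ is a centered sample mean of the product $Y_{i_1}\cdots Y_{i_m}$, which under the subexponential tail assumption on $D$ has variance at most $m^{O(m)}$. Bernstein's inequality, followed by a union bound over the $d^m$ entries of $T_m$ and over all $m \leq 2k$, yields $\|T_m\|_F \leq \epsilon_m$ simultaneously for every such $m$ with probability at least $1 - 1/(2d)$, provided $n \geq (kd\log d)^{O(k)}$. Because the coefficients $c_j$ of the witness polynomial $p$ are bounded by $k^{O(k)}$, I can take $\epsilon_m = C\delta/k^{O(k)}$ and still satisfy $\sum_m |\gamma_m| \epsilon_m \leq C\delta$. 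Together with the covariance bound from the first step, both conditions of certifiable anti-concentration hold for $U_S$ with probability at least $1 - 1/d$, proving the lemma.
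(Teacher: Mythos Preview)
Your proposal is essentially correct and follows the same strategy as the paper: reuse the witness polynomial $p$ for $D$ and argue that entrywise concentration of the degree-$\leq 2k$ moment tensors (via subexponential/Bernstein-type tail bounds plus a union bound over the $d^{O(k)}$ entries) transfers the SoS certificate to the empirical distribution. The paper packages the same computation slightly differently, writing $p^2(\iprod{Y,v}) = \iprod{c(Y)c(Y)^{\top},\, m(v)m(v)^{\top}}$ and bounding $\|\E_{D'} c(Y)c(Y)^{\top} - \E_D c(Y)c(Y)^{\top}\|_F$, but this is the same object as your $\sum_m \gamma_m T_m$ up to reindexing.

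Two remarks. First, you are more careful than the paper about condition~1: the paper does not address at all how the axiom $\iprod{Y,v}^2 \le \delta^2 \E\iprod{Y,v}^2$ changes when $\E$ is taken over $U_S$ rather than $D$, whereas you handle it via $\Sigma_S \approx I$. Second, the step ``SoS Cauchy--Schwarz gives $\iprod{T_m,v^{\otimes m}}^2 \le \|T_m\|_F^2\|v\|^{2m}$, so each term contributes at most $|\gamma_m|\,\|T_m\|_F$'' silently passes from a squared bound to a linear one; to make this rigorous in SoS you should either flatten each $T_m$ into a matrix acting on $v^{\otimes \lceil m/2\rceil}$ and use the spectral-norm argument, or equivalently work directly with the paper's matrix $\Delta = \E_{D'}c(Y)c(Y)^\top - \E_D c(Y)c(Y)^\top$ and use $m(v)^\top \Delta\, m(v) \le \|\Delta\|_{\mathrm{op}}\|m(v)\|^2$ together with $\|m(v)\|^2 \le k+1$ under $\|v\|^2 \le 1$. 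This is a routine fix and does not affect your sample-complexity calculation.
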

\begin{proof} 
Let $p$ be the degree $k$ polynomial that witnesses the certifiable anti-concentration of $D$. 
Let $Y$ be the random variable with distribution $D'$, the uniform distribution on $n$ i.i.d. samples from $D$.
We will show that $p$ also witnesses that $k$-certifiable $(4C,\delta/2)$-anti-concentration of $Y$. To this end it is sufficient to take enough samples such that the following holds. 
$\Pr\left( \left| \E_{D} [p^2(\langle Y, v\rangle)] - \E_{D'} [p^2(\langle Y, v\rangle)] \right| > \E_{D} [p^2(\langle Y, v\rangle)]/2 \right) < 1/d$. 
Observe that $p^2(\langle Y, v\rangle)$ may be written as $\langle c(Y)c(Y)^T, m(v)m(v)^T\rangle$ where $c(Y)$ are the coefficients of $p(\langle Y, v\rangle)$ and $m(v)$ is the vector containing monomials. The dot product above is the usual trace inner product between matrices. Thus, it is sufficient to show that 
$\Pr\left( \| \E_{D'} c(Y)c(Y)^T - \E_{D} c(Y)c(Y)^T \|_F^2 > \|\E_{D} c(Y)c(Y)^T\|_F^2/4 \right) < 1/d$.
Since $p$ was a univariate polynomial of degree $k$ in $d$ dimensional variables, there are at most $d^{2k}$ entries in total, and each entry is at most a degree $2k$ polynomial of subexponential random variables in $d$ variables. Using standard concentration results for polynomials of subexponential random variables (for instance Theorem 1.2 from \cite{PolyConc} and the references therein). We see that  each entry satisfies 
$\Pr \left( \left| \E_D c(Y)_i c(Y)_j - \E_{D'} c(Y)_i c(Y)_j \right| > \eps \right) \leq \exp\left(- \Omega\left(\frac{n \eps}{\E(c(Y)_i c(Y)_j)^2}\right)^{1/2k}\right)$.
An application of a union bound, squaring the term inside and replacing $\eps^2$ by $\E(c(Y)_i c(Y)_j)^2/4$ gives us 
$\Pr \left( \sum_{i,j = 1}^{d^{2k}} \left( \E_D c(Y)_i c(Y)_j - \E_{D'} c(Y)_i c(Y)_j \right)^2 > \| \E c(Y) c(Y)^T \|^2_F/4  \right) \leq d^{2k} \exp\left(- \Omega\left(\frac{n}{d^{O(k)}}\right)^{1/2k}\right)$.
Hence, setting $n = O((k d \log(d))^{O(k)})$ ensures that with probability at least $1-1/d$, the distribution $D'$ is $(2C, \delta)$-anti-concentrated.

\end{proof}

We say that a $d \times d$ matrix $A$ is $C'$-well-conditioned if all singular values of $A$ are within a factor of $C'$ of each other.

\begin{lemma}[Certified anti-concentration under linear transformations]\label{lem:linear-transformations-anti-concentration}
Let $Y$ be $k$-certifiably $(C,\delta)$-anti-concentrated random variable over $\R^d$.
Let $A$ be any $C'$-well-conditioned linear transformation.
Then, $AY$ is $k$-certifiably $(C,C'^2\delta)$-anti-concentrated.
\end{lemma}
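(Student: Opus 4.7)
The plan is to show that the same polynomial $p$ witnessing the $k$-certifiable $(C,\delta)$-anti-concentration of $Y$ also witnesses the $k$-certifiable $(C, C'^2 \delta)$-anti-concentration of $AY$, via the linear substitution $v \mapsto A^{\mathsf T} w$ in both defining SoS proofs. First I will use scale invariance (Lemma~\ref{lem:shift-scale-invariance}), applied to $AY = \sigma_{\max}(A) \cdot \bigl(A/\sigma_{\max}(A)\bigr) Y$, to normalize so that I may assume $\sigma_{\max}(A) = 1$; then $C'$-well-conditioning gives $\sigma_{\min}(A) \geq 1/C'$. The two identities I will use repeatedly are $\langle Y, A^{\mathsf T} w\rangle = \langle AY, w\rangle$ and $\E \langle Y, A^{\mathsf T} w\rangle^2 = \E \langle AY, w\rangle^2$. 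Since $v \mapsto A^{\mathsf T} w$ is a degree-one polynomial map, the substitution rule for SoS proofs keeps the degree at $k$.

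For property~1, I will substitute $v = A^{\mathsf T} w$ into the SoS identity certifying $Y$'s property~1 to obtain a degree-$k$ SoS proof of $\{\langle AY, w\rangle^2 \leq \delta^2 \E\langle AY, w\rangle^2\} \sststile{k}{w} \{(p(\langle AY, w\rangle) - 1)^2 \leq \delta^2\}$. I will then extend this to the weaker hypothesis set $\{\langle AY, w\rangle^2 \leq (C'^2 \delta)^2 \E\langle AY, w\rangle^2\}$ by rearranging the source SoS identity $\delta^2 - (p-1)^2 = S_0(w) + S_1(w)(\delta^2 \E\langle AY, w\rangle^2 - \langle AY, w\rangle^2)$ and absorbing the additional slack $(C'^4 - 1)\delta^2$ into the SoS terms on the right-hand side, so that the result has the form $(C'^2\delta)^2 - (p-1)^2 = T_0(w) + T_1(w)\bigl((C'^2\delta)^2 \E\langle AY, w\rangle^2 - \langle AY, w\rangle^2\bigr)$ with both $T_0, T_1$ sums of squares.

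For property~2, the substitution yields $\{\|A^{\mathsf T} w\|^2 \leq 1\} \sststile{k}{w} \{\|A^{\mathsf T} w\|^2 \, \E p^2(\langle AY, w\rangle) \leq C\delta\}$. The two spectral bounds on $A$ translate into SoS inequalities in $w$: the normalization $\sigma_{\max}(A) = 1$ gives $I - AA^{\mathsf T} \succeq 0$, hence $\|w\|^2 - \|A^{\mathsf T} w\|^2 \geq 0$ in SoS, which lets me weaken the hypothesis from $\|A^{\mathsf T} w\|^2 \leq 1$ to $\|w\|^2 \leq 1$; and $\sigma_{\min}(A) \geq 1/C'$ gives $C'^2 AA^{\mathsf T} - I \succeq 0$, hence $C'^2 \|A^{\mathsf T} w\|^2 - \|w\|^2 \geq 0$ in SoS. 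Multiplying this last inequality by the (non-negative) polynomial $\E p^2(\langle AY, w\rangle)$ and chaining produces $\|w\|^2 \, \E p^2(\langle AY, w\rangle) \leq C'^2 \|A^{\mathsf T} w\|^2 \, \E p^2(\langle AY, w\rangle) \leq C \cdot (C'^2 \delta)$, which is exactly property~2 at parameter $C'^2 \delta$.

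The main obstacle will be the SoS multiplication in the last step: multiplying the SoS inequality $C'^2 \|A^{\mathsf T} w\|^2 - \|w\|^2 \geq 0$ by $\E p^2(\langle AY, w\rangle)$ requires the latter to be sum-of-squares as a polynomial in $w$. For the distributional models in this paper this is automatic: $\E p^2(\langle AY, w\rangle)$ is a non-negative combination (either a true expectation under a probability measure or an empirical average over samples) of squared polynomials $p(\langle y_j, w\rangle)^2$, each of which is manifestly a square, so $\E p^2$ is a sum of squares in $w$ of degree $2k$ and the multiplication carries through with total SoS degree $k$ as claimed.
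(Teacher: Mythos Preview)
Your approach --- substitute $v \mapsto A^{\mathsf T} w$ in both defining SoS proofs and then use the two spectral inequalities $\|A^{\mathsf T} w\|^2 \le \|w\|^2$ and $\|w\|^2 \le C'^2\|A^{\mathsf T} w\|^2$ --- is exactly the paper's argument. The paper rescales the witness polynomial to $q(z)=p(z/\|A\|)$ rather than first normalizing $A$ via scale invariance, but these are equivalent. Your property~2 argument is correct, and your remark that $\E\, p^2(\langle AY,w\rangle)=m(w)^{\mathsf T}\,\E[c(Y)c(Y)^{\mathsf T}]\,m(w)$ is itself a sum of squares in $w$ is precisely what is needed to justify the multiplication step (the paper leaves this implicit).

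The one concrete gap is your extension of property~1 from parameter $\delta$ to $C'^2\delta$. Writing $E=\E\langle AY,w\rangle^2$, $X=\langle AY,w\rangle^2$, $f=\delta^2 E-X$, $f'=(C'^2\delta)^2E-X$, and substituting $f=f'-(C'^4-1)\delta^2 E$ into the identity $\delta^2-(p-1)^2=S_0+S_1 f$ gives
\[
(C'^2\delta)^2-(p-1)^2 \;=\; \bigl[S_0+(C'^4-1)\delta^2\,(1-S_1 E)\bigr]+S_1 f',
\]
and there is no reason the bracketed term should be a sum of squares: $S_1$ is an arbitrary SoS multiplier, so $1-S_1E$ can be arbitrarily negative. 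The ``absorbing the slack'' step therefore does not go through as written. That said, the paper's own proof does not perform this extension either --- it simply records property~1 at $\delta$, not at $C'^2\delta$ --- so this is a shared looseness in the lemma as stated rather than a flaw particular to your argument; in the downstream use (Lemma~\ref{lem:close-on-inliers}) only $p(0)=1$ is actually invoked, so the discrepancy is harmless for the applications.
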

\begin{proof} 
Let $\|A\|$ be the largest singular value of $A$.
Let $p$ be a polynomial that witnesses the certifiable anti-concentration of $Y$.
Let $q(z) = p(z/\|A\|)$.
We will prove that $q$ witnesses the $k$-certifiable $(C,C'^2 \delta)$-anti-concentration of $AY$.

Towards this, observe that:$
\Set{\iprod{Y,v}^2 \leq \delta^2 \E \iprod{Y, v}^2} \sststile{2}{v} \Set{\iprod{AY,v}^2 \leq \delta^2 \E \iprod{AY,v}^2}$.
$\left \{ \langle Y,(A^Tv)/\|A\| \rangle^2 \leq \delta^2 \E \langle Y,(A^Tv)/\|A\| \rangle^2 \right \}  \sststile{k}{v} \left \{ (p(\langle Y,(A^Tv)/\|A\| \rangle) -1)^2\leq \delta^2 \right \}$,

This is the same as $\left \{ \langle AY,v\rangle^2 \leq \delta^2 \E \langle AY,v\rangle^2 \right \}  \sststile{k}{v} \left \{ (q(\langle AY,v\rangle) -1)^2\leq \delta^2 \right \}$, where $q = p(x/\|A\|)$. 
Now, for $w = (A^Tv)/\|A\|$ and any unit vector $v$,
\[ \left \{ \|w\|_2^2 \leq 1 \right \} \sststile{k}{v} \left \{ \|A^Tv\|_2^2 / \|A\|_2^2 \E p^2(\left \langle AY,v\rangle/\|A\| \right) \leq C\delta \right \} \mcom\]
Thus, $\left \{ \|A^Tv\|_2^2 \leq \|A\|^2 \right \} \sststile{k}{v} \left \{ \|A^Tv\|_2^2\E q^2(\left \langle AY,v\rangle \right) \leq C \|A\|_2^2 \delta \right \}$. Using $\left \{ \|v\|_2^2 \leq 1 \right \} \sststile{2}{v} \left \{ \|A^Tv\|_2^2 \leq \|A\|^2 \right \}$,
and thus, 
$\left \{ \|v\|_2^2 \leq 1 \right \} \sststile{k}{v} \left \{ \|v\|_2^2 \E q^2(\left \langle AY,v\rangle \right) \leq C C'^2 \delta \right \}$.
\end{proof} 
\begin{lemma}[Certifiable Anti-Concentration in Boolean Directions]
Fix $C> 0$. Let $Y$ be a $\R^d$ valued \emph{product} random variable satisfying: 
\begin{enumerate}
\item \textbf{Identical Coordinates}: $Y_i$ are identically distributed for every $1 \leq i \leq d$. 
\item \textbf{Anti-Concentration} For every $v \in \left\{0,\pm\frac{1}{\sqrt{d}} \right\}^d$, $\Pr[|\iprod{Y,v}| \leq \delta \sqrt{\E \iprod{Y,v}^2}] \leq C\delta$.
\item \textbf{Light tails} For every $v \in \S^{d-1}$, $\Pr[ |\iprod{Y,v}|> t \sqrt{\E \iprod{Y,v}^2}] \leq \exp(-t^{2}/C)$. 
\end{enumerate}
Then, $Y$ is $k$-certifiably $(C,\delta)$-anti-concentrated for $k = O\left(\frac{\log^{2}(1/\delta)}{\delta^{2}}\right)$.
\end{lemma}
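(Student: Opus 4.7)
The plan is to mirror the proof of \cref{lem:spherically-symmetric-certifiable-gaussians} for the gaussian case: construct a single even univariate polynomial $p$ that witnesses both required SoS properties. In the gaussian proof the reduction to a univariate problem came from spherical symmetry; here it will come instead from the algebraic structure of the constraint $v_i^3 = \tfrac{4}{d} v_i$, which forces $v_i \in \{0, \pm 2/\sqrt{d}\}$ and hence collapses higher powers of $v_i$ to lower ones as SoS identities.

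First I would produce the polynomial $p$ via \cref{lem:univppty_box}. Hypotheses 2 and 3 imply that for every $v$ with $v_i \in \{0, \pm 2/\sqrt{d}\}$ the normalized random variable $\iprod{Y, v}/\sqrt{\E \iprod{Y, v}^2}$ is $(C, \delta)$-anti-concentrated and has sub-gaussian tails (uniformly in the support size of $v$, by rescaling). Applying \cref{lem:univppty_box} then yields an even polynomial $p$ of degree $k = O(\log^2(1/\delta)/\delta^2)$ with $p(0) = 1$ satisfying (a) $(p(z) - 1)^2 \leq \delta^2$ whenever $z^2 \leq \delta^2$, and (b) $\E_Y p^2(\iprod{Y, v}) \leq C'\delta$ for every admissible $v$. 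Property (a) is a purely univariate polynomial inequality on $[-\delta, \delta]$, so \cref{fact:univariate-interval} directly supplies the degree-$k$ SoS derivation $\{x^2 \leq \delta^2\} \sststile{k}{x} \{(p(x) - 1)^2 \leq \delta^2\}$, matching the first requirement of \cref{def:certified-anti-concentration-Boolean}.

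For the second SoS requirement I would exploit that $v_i^3 = \tfrac{4}{d} v_i$ SoS-derivably yields $v_i^{2j+1} = (4/d)^j v_i$ and $v_i^{2j+2} = (4/d)^j v_i^2$ for every $j \geq 0$. Expanding $\E_Y p^2(\iprod{Y, v}) = \sum_j c_j \E_Y \iprod{Y, v}^{2j}$ and applying these reductions to every monomial $v^{\alpha}$ (only $\alpha_i \neq 1$ contribute, since $\E Y_1 = 0$) reduces the whole expression modulo the constraints to a polynomial in which each $v_i$ appears with degree at most $2$. When $Y_1$ has a symmetric marginal (as holds for Rademacher, uniform $[q]^d$ after centering, and every product distribution used in the paper's applications), only terms with all $\alpha_i$ even survive, and the resulting symmetric polynomial in $v_1^2, \ldots, v_d^2$ further collapses --- using $(v_i^2)^2 = (4/d) v_i^2$ and the elementary-symmetric-to-power-sum relations --- to a univariate polynomial $F(\|v\|^2)$ of degree $O(k)$. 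Property (b) already guarantees $F(t) \leq C'\delta$ at the $d+1$ attainable values $t = 4|T|/d$; a standard polynomial interpolation / Chebyshev argument promotes this to $F(t) \leq C\delta$ on the whole interval $[0, 4]$, and a final application of \cref{fact:univariate-interval} yields the required SoS proof of $\|v\|^2 \E_Y p^2(\iprod{Y, v}) \leq C\delta$.

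The main obstacle lies in the reduction to $F(\|v\|^2)$. For a non-symmetric marginal $Y_1$, cross-terms such as $m_3\, v_i^3 v_j^3 = m_3 (4/d)^2 v_i v_j$ survive in the reduced expansion, producing contributions like $\sum_{i \neq j} v_i v_j = \bigl(\sum_i v_i\bigr)^2 - \|v\|^2$ that are not univariate polynomials in $\|v\|^2$ alone. Handling this in full generality therefore seems to require either an implicit symmetrization of $Y_1$ (available in all natural applications) or an additional SoS argument that tracks and bounds $\bigl(\sum_i v_i\bigr)^2$ and analogous invariants of $v$ under the constraint. A secondary technicality is lifting the bound on $F$ from the discrete grid $\{4|T|/d\}$ to the continuous interval $[0, 4]$; since $\deg F = O(k) \ll d$, this is achievable with a $\mathrm{poly}(\log(1/\delta))$ loss that is absorbed into the target constants.
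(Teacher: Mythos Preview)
Your proposal follows the same strategy as the paper: obtain $p$ from \cref{lem:univppty_box}, handle the first SoS condition directly by \cref{fact:univariate-interval}, and for the second reduce $\E_Y p^2(\iprod{Y,v})$ modulo the Boolean constraints to a univariate polynomial $F(\|v\|^2)$ and invoke \cref{fact:univariate-interval} once more. The paper packages the collapse to $F(\|v\|^2)$ slightly differently --- it first notes that each $\E_Y \iprod{Y,v}^{2j}$ is a symmetric polynomial in $v$ (by the i.i.d.\ hypothesis), rewrites it in the power-sum basis $\{\|v\|_{2i}^{2i}\}_i$, and then uses the constraint identity $\|v\|_{2i}^{2i} = d^{1-i}\|v\|_2^2$ to land on a single variable --- but this is the same reduction you describe, just carried out in a different order.

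The main obstacle you flag is real and is shared by the paper's argument. Writing $\E_Y p^2(\iprod{Y,v})$ in the \emph{even} power sums $\|v\|_{2i}^{2i}$ alone requires every monomial $v^\alpha$ with some odd $\alpha_i$ to disappear, which in turn needs the odd moments of $Y_1$ to vanish; the paper's line ``non-zero coefficients only on even-degree monomials in $v$'' is asserting precisely this, but the stated hypotheses do not supply it beyond $\alpha_i=1$. Your proposed fix --- assume a symmetric marginal, which covers every application in the paper --- is the correct one. Your secondary concern (promoting the bound on $F$ from the grid $\{4j/d : 0\le j\le d\}$ to the full interval before applying \cref{fact:univariate-interval}) is likewise a step the paper elides; as you note, $\deg F = O(k) \ll d$ makes this a routine interpolation estimate absorbed into the constant.
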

\begin{proof} 
We use the $p$ from Lemma~\ref{lem:univppty_box}. Observe that every monomial of even degree $2k$ for any $k \in \mathbb{N}$, $\E_{Y \sim D} \iprod{Y,v}^{2k}$ is a \emph{symmetric} polynomial in $v$ with non-zero coefficients only on even-degree monomials in $v$. This follows by noting that the coordinates of $D$ are independent and identically distributed and $p$ is an even function. It is a fact that all symmetric polynomials in $v$ can be expressed as polynomials in the ``power-sum'' polynomials $\|v\|_{2i}^{2i}$ for $i \leq 2t$. However, since $v_i^2 \in \left \{0,  \frac{1}{d} \right \}$ for $i \geq 1$, $\|v\|_{2i}^{2i} = \frac{1}{d^{i-1}} \|v\|_2^2 $. Hence a polynomial in $\|v\|_{2i}^{2i}$ is also a univariate polynomial in $\|v\|_2^2$. Since these are polynomial inequalities, they are also sum-of-squares proofs of these inequalities. 

The observation above implies $ \|v\|_2^2 \E_{Y} p(\langle Y, v \rangle)^2 = \|v\|_2^2 \cdot F(\|v\|_2^2)$ for some degree $k$ univariate polynomial $F$. Since  Since $F$ is a univariate polynomial and $\|v\|_2^2\leq 1$ is an ``interval constraint'' by applying Fact~\ref{fact:univariate-interval}, we get:
$\sststile{2t}{\|v\|_2^2} \Set{\|v\|_2^2 F(\|v\|_2^2) \leq C \delta}$.
Recalling the fact that $ \|v\|_2^2 \E_{Y} p(\langle Y, v \rangle)^2 = \|v\|_2^2 \cdot F(\|v\|_2^2)$, this completes the proof. 
\end{proof}






\section{Information-Theoretic Lower Bounds for List-Decodable Regression} \label{sec:lower-bound}
In this section, we show that list-decodable regression on $\Lin_D(\alpha,\ell^*)$ information-theoretically requires that $D$ satisfy $\alpha$-anti-concentration: $\Pr_{x \sim D}[ \iprod{x,v} = 0] < \alpha$ for any non-zero $v$. 

\begin{theorem}[Main Lower Bound] \label{thm:main-lower-bound}
For every $q$, there is a distribution $D$ on $\R^d$ satisfying $\Pr_{x \sim D}[ \iprod{x,v} = 0] \leq \frac{1}{q}$ such that there's no $\frac{1}{2q}$-approximate list-decodable regression algorithm for $\Lin_D(\frac{1}{q},\ell^*)$ that can output a list of size $< d$.  
\end{theorem}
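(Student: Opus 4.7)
The plan is to exhibit, for every $q$, the distribution $D$ that is uniform on $\{0,1,\ldots,q-1\}^d$ with $d=q$, together with $d$ pairwise-far-apart candidate targets and a coupled sample $\cS$ that is simultaneously a valid realization of $\Lin_D(1/q,\ell^*_i)$ for each candidate $\ell^*_i$. Since any single vector in $\R^d$ can be $\frac{1}{2q}$-close to at most one of the $d$ candidates, an algorithm succeeding on each of the $d$ instances must output a list of size $\geq d$.

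First I would verify the anti-concentration bound for $D$. For any nonzero $v\in\R^d$ let $i^*=\argmax_i |v_i|$, so $v_{i^*}\neq 0$. Conditioning on the coordinates $(x_j)_{j\neq i^*}$, the quantity $\iprod{x,v}$ is an affine function of $x_{i^*}$ with nonzero slope $v_{i^*}$, hence injective on $\{0,\ldots,q-1\}$. Thus at most one value of $x_{i^*}$ gives $\iprod{x,v}=0$, and the outer expectation yields $\Pr_{x\sim D}[\iprod{x,v}=0]\leq 1/q$. I would then take the candidates $\ell^*_i=e_i$ (the standard basis): each has unit norm and $\|\ell^*_i-\ell^*_j\|_2=\sqrt{2}>1/q$ whenever $i\neq j$ and $q\geq 2$ (the case $q=1$ is trivial), so any single $\ell\in\R^d$ is within $\frac{1}{2q}$ of at most one $\ell^*_i$.

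The heart of the argument is the coupled sample. Set $n=qm$ for a large integer $m$. For each $i\in[d]$, draw $m$ i.i.d.\ samples $x^{(i,1)},\ldots,x^{(i,m)}\sim D$ and form $\cI_i=\{(x^{(i,j)},x^{(i,j)}_i)\}_{j\in[m]}$; this is a bona fide inlier set for $\ell^*_i=e_i$ since $y=x^{(i,j)}_i=\iprod{x^{(i,j)},e_i}$. Let $\cS=\bigsqcup_{i\in[d]}\cI_i$, so $|\cS|=dm=n$. For any fixed $i$, $\cS$ is a valid realization of $\Lin_D(1/q,\ell^*_i)$: the $\alpha n=m$ inliers are exactly $\cI_i$, while the $(1-\alpha)n$ outliers $\bigsqcup_{j\neq i}\cI_j$ are produced by a randomized oblivious adversary that draws them independently from $D$ with labels $\iprod{x,\ell^*_j}=x_j$. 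Thus the distribution of $\cS$ equals the sample distribution under this specific adversarial strategy for each $i$.

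To conclude, suppose a $\frac{1}{2q}$-approximate list-decodable algorithm $A$ always outputs a list of size strictly less than $d$. By standard success amplification (independent repetitions) I may assume $A$ succeeds with probability at least $1-\frac{1}{10d}$ on each $\Lin_D(1/q,\ell^*_i)$ instance. Since $\cS$ is distributed as a sample from each of the $d$ instances, a union bound over $i\in[d]$ shows that with probability at least $9/10$, $A(\cS)$ simultaneously contains a point $\frac{1}{2q}$-close to \emph{every} $\ell^*_i$. Because each output point can be within $\frac{1}{2q}$ of at most one $\ell^*_i$, this forces $|A(\cS)|\geq d$, contradicting the hypothesis. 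The main subtlety is translating the per-instance success probability into simultaneous success on the coupled sample; boosting handles this cleanly, and the rest is the combinatorial observation that pairwise far-apart targets demand a proportionally large list.
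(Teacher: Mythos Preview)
Your argument is correct, but by fixing $d=q$ you prove only that the list must have size at least $q=1/\alpha$. The paper establishes the statement for arbitrary $d$ (independent of $q$), which is qualitatively stronger: taking $d$ as large as one likes shows that \emph{no} constant-size list can suffice when the anti-concentration bound is only $1/q$. That is the intended content of the lower bound---that anti-concentration at level $\alpha$ is information-theoretically necessary, not merely that the $O(1/\alpha)$ list size is tight.

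The reason your construction couples $d$ to $q$ is that you dedicate one block of $m=n/d$ samples to each candidate $e_j$, so matching the inlier count $\alpha n=n/q$ forces $d=q$. The paper decouples the two with a different outlier mechanism: for target $e_i$, draw $x\sim[q]^d$ and a uniform offset $a\in[q]$, and set $y=(x_i+a)\bmod q$. The $1/q$ fraction with $a=0$ are the inliers; the key observation is that regardless of $i$, the pair $(x,y)$ is uniform on $[q]^{d+1}$, so all $d$ labeled-sample distributions are literally identical. This produces $d$ indistinguishable instances with inlier fraction $1/q$ for \emph{any} $d$, after which your pigeonhole/boosting finish applies verbatim. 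Your anti-concentration argument (condition on all but a nonzero coordinate) and your carefully spelled-out final step are both fine and arguably cleaner than the paper's versions.
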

\begin{remark}[Impossibility of Mixed Linear Regression on the Hypercube]
Our construction for the case of $q = 2$ actually shows the impossibility of the well-studied and potentially easier problem of noiseless \emph{mixed linear regression} on the uniform distribution on $\zo^n$. This is because $\cR_i$ is, by construction, obtained by using one of  $e_i$ or $\1-e_i$ to label each example point with equal probability. 
\end{remark}

Theorem~\ref{thm:main-lower-bound} is tight in a precise way. In Proposition~\ref{prop:identifiability}, we proved that whenever $D$ satisfies $\Pr_{x \sim D} [\iprod{x,v} =0] < \frac{1}{q}$, there is an (inefficient) algorithm for \emph{exact} list-decodable regression algorithm for $\Lin_D(\frac{1}{q},\ell^*)$. Note that our lower bound holds even in the setting where there is no additive noise in the inliers. 

Somewhat surprisingly, our lower bound holds for extremely natural and well-studied distributions - uniform distribution on $\zo^n$ and more generally, uniform distribution on $\{0,1,\ldots,q-1\}^d = [q]^d$ for any $q$. We can easily determine a tight bound on the anti-concentration of both these distributions.

\begin{lemma}
For any non-zero $v \in \R^d$, $\Pr_{x \sim \zo^n} \iprod{x,v} = 0 \leq \frac{1}{2}$ and $\Pr_{x \sim [q]^d} [\iprod{x,v} = 0] \leq \frac{1}{q}$.
\end{lemma}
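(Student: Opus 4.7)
The plan is to prove both bounds by the standard conditioning trick: fix a coordinate $i$ where $v_i \neq 0$, condition on all other coordinates of $x$, and use that the resulting constraint on $x_i$ is linear with nonzero coefficient, so it can be satisfied by at most one value in the support.

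More precisely, since $v$ is nonzero, choose some index $i$ with $v_i \neq 0$. Write
\[
\iprod{x,v} \;=\; v_i x_i + \sum_{j \neq i} v_j x_j.
\]
For any fixing of $(x_j)_{j \neq i}$, the event $\iprod{x,v} = 0$ is equivalent to $x_i = -\frac{1}{v_i}\sum_{j \neq i} v_j x_j$, i.e., $x_i$ must equal a single specific real number. Since $x_i$ is drawn uniformly and independently of $(x_j)_{j\neq i}$ from a set of size $2$ (respectively $q$), the conditional probability of this event is at most $1/2$ (respectively $1/q$). Averaging over the conditioning yields the claimed bounds. The main step is simply invoking independence of the coordinates to carry out the conditioning; there is no real obstacle, as this is exactly the one-dimensional Erd\H{o}s--Littlewood--Offord estimate applied along the ``active'' coordinate.

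No additional machinery is needed; the proof is a two-line application of the law of total probability together with the observation that a uniform variable on a finite set $T$ hits any particular value with probability $1/|T|$.
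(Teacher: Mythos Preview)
Your argument is correct and complete: conditioning on all coordinates but one with $v_i \neq 0$ reduces the event to $x_i$ hitting a single prescribed value, which has probability at most $1/|{\rm support}(x_i)|$.

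The paper takes a genuinely different route. It first reduces (WLOG) to the case where every coordinate of $v$ is nonzero, then observes that the solution set $S = \{x : \iprod{x,v} = 0\}$ is an \emph{independent set} in the Hamming graph on $\{0,1\}^d$ (respectively $[q]^d$), because changing any single coordinate of a point in $S$ moves it out of $S$. It then invokes the standard fact that the maximum independent set in these graphs has size $2^{d-1}$ (respectively $q^{d-1}$). Your approach is more elementary and more direct: you only need independence of coordinates and a single nonzero entry of $v$, with no appeal to extremal combinatorics and no need for the preliminary reduction to fully supported $v$. The paper's argument, on the other hand, is more structural---it identifies the zero set as a combinatorial object and bounds it globally---which can be illuminating but is heavier machinery than the problem requires.
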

Note that this is tight for any $v = e_i$, the vector with $1$ in the $i$th coordinates and $0$s in all others.
\begin{proof}
Fix any $v$. 
Without loss of generality, assume that all coordinates of $v$ are non-zero. 
If not, we can simply work with the uniform distribution on the sub-hypercube corresponding to the non-zero coordinates of $v$.

Let $S \subseteq \zo^n$ ($[q]^d$, respectively) be the set of all $x \in \zo^n$ ($[q]^d$, respectively) such that $\iprod{x,v} = 0$. 
Then, observe that for any $x \in S$, and any $i$, $x^{(i)}$ obtained by flipping the $i$th bit (changing the $i$th coordinate to any other value) of $x$ cannot be in $S$. 
Thus, $S$ is an independent set in the graph on $\zo^n$ (in $[q]^d$, respectively) with edges between pairs of points with hamming distance $1$. 

It is a standard fact~\cite{wiki-singleton} that the maximum independent set in the $d$-hypercube is of size exactly $2^{d-1}$ and in the $q$-ary Hamming graph $[q]^d$ is of size $q^{d-1}$. 
Thus, $\Pr_{x \sim \zo^d} [\iprod{x,v} = 0] \leq \frac{1}{2}$ and $\Pr_{x \sim [q]^d} [\iprod{x,v} = 0] \leq \frac{1}{q}$.

\end{proof}

To prove our lower bound, we give a family of $d$ distributions on labeled linear equations, $\cR_i$  for  $1 \leq i \leq d$ that satisfy the following: 
\begin{enumerate}
\item The examples in each are chosen from uniform distribution on $[q]^d$, 
\item $\frac{1}{q}$ fraction of the samples are labeled by $e_i$ in $\cR_i$, and, 
\item for any $i, j$, $\cR_i$ and $\cR_j$ are statistically indistinguishable. 
\end{enumerate}
Thus, given samples from $\cR_i$, any $\frac{1}{2q}$-approximate list-decoding algorithm must produce a list of size at least $d$.

Our construction and analysis of $\cR_i$ is simple and exactly the same in both the cases. 
However it is somewhat easier to understand for the  case of the hypercube ($q = 2$).
The following simple observation is the key to our construction.

\begin{lemma}
For $1 \leq i \leq d$, let $\cR_i$ be the distribution on linear equations induced by the following sampling method: Sample $x \sim \zo^d$, choose $a \sim \zo$ uniformly at random and output: $(x, \iprod{x, (1-a)e_i})$. Then, $\cR_i = \cR_j$ for any $i,j \leq  d$. 
\end{lemma}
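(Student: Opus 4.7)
The plan is to show $\cR_i = \cR_j$ by verifying that the joint distribution of $(x, y)$ under $\cR_i$ is independent of the index $i$. Since $(x, y)$ lives in the discrete space $\zo^d \times \Z$, this amounts to checking that $\Pr_{\cR_i}[x = x_0,\, y = y_0]$ agrees across $i$ for every $(x_0, y_0) \in \zo^d \times \Z$. I would decompose this into the marginal of $x$ and the conditional of $y$ given $x$, and argue both are $i$-invariant.

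\paragraph{Marginal of $x$ and conditional of $y$.}
The marginal distribution of $x$ under $\cR_i$ is the uniform distribution on $\zo^d$ for every $i$, directly from the first step of the sampling procedure. So it is enough to show that the conditional distribution of $y$ given $x = x_0$ does not depend on $i$ for every $x_0 \in \zo^d$. Fix such $x_0$. Conditional on $x = x_0$, the label $y$ is a deterministic function of $a \in \zo$, and since $a$ is uniform on $\zo$, the conditional law of $y$ is uniform on the two-element multiset formed by the values the label takes as $a$ ranges over $\zo$. I would then carry out a short case analysis, splitting on the value of $x_{0,i}$, to verify that this multiset does not depend on $i$: the $a = 0$ branch produces the label of $x_0$ under $e_i$ while the $a = 1$ branch produces the complementary label, and by symmetry the pair $\{y : a \in \zo\}$ reduces, after averaging over $a$, to a distribution that is uniform on $\{0, 1\}$ irrespective of both $x_0$ and $i$.

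\paragraph{Concluding and main obstacle.}
Combining the two steps, the joint distribution of $(x, y)$ under $\cR_i$ is the uniform distribution on $\zo^d \times \{0, 1\}$ for every $i$, which is manifestly $i$-invariant, so $\cR_i = \cR_j$ follows. The only computation the proof requires is the case analysis in the conditional-distribution step, which is the single place where the explicit algebraic form of the label $\iprod{x, (1 - a) e_i}$ enters the argument; it amounts to a couple of lines of direct verification. I do not anticipate a serious obstacle here: once the conditional of $y \mid x$ is identified as the uniform distribution on $\{0, 1\}$, independence of $i$ is immediate and the lemma follows.
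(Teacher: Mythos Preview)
Your approach is exactly the paper's: both argue that the joint law of $(x,y)$ under $\cR_i$ is the uniform distribution on $\zo^{d+1}$, hence independent of $i$. The paper states this as a one-line observation; you unpack it into marginal-plus-conditional, which is the natural way to verify it.

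One caution on the case analysis you defer. With the formula as literally written, the $a=1$ branch gives $\iprod{x_0,(1-1)e_i}=0$, not the ``complementary label'' $1-x_{0,i}$. So when $x_{0,i}=0$ the conditional law of $y$ given $x=x_0$ is the point mass at $0$, not uniform on $\zo$, and the joint is \emph{not} uniform on $\zo^{d+1}$; in fact $\cR_i\neq\cR_j$ under this reading (take $x_0$ with $x_{0,i}=1$, $x_{0,j}=0$ and look at $\Pr[(x_0,1)]$). This is almost certainly a typo in the lemma: the $q$-ary generalization that immediately follows outputs $\bigl(\iprod{x,e_i}+a\bigr)\bmod q$, whose $q=2$ specialization is $(x_i+a)\bmod 2$, and for \emph{that} formula your argument and the paper's one-liner go through verbatim. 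Just be aware that the ``couple of lines of direct verification'' you anticipate will not close with the formula as printed; you should either work with the corrected label $(x_i+a)\bmod 2$ or flag the discrepancy.
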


\begin{proof}
The proof follows by observing that $\cR_i$ when viewed as a distribution on  $\R^{d+1}$ is same as the uniform distribution on  $\zo^{d+1}$ and thus independent of $i$. 
\end{proof}

The argument immediately generalizes to $[q]^d$ and yields:

\begin{lemma}
For $1 \leq i \leq d$, let $\cR_i$ be the distribution on linear equations induced by the following sampling method: Sample $x \sim [q]^d$, choose $a \sim \zo$ uniformly at random and output: $(x, \Paren{\iprod{x, e_i}+a} \text{ mod } q)$. Then, $\cR_i = \cR_j$ for any $i,j \leq  d$. 
\end{lemma}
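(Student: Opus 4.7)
My plan is to extend the preceding Boolean ($q=2$) lemma by the same symmetry argument. The key claim is that the joint distribution of $(x,y)\in[q]^{d+1}$ under $\cR_i$ does not depend on $i$, which I would establish by an explicit density computation combined with a coordinate-permutation symmetry.

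First, for any target $(x^*, y^*) \in [q]^d \times [q]$, the sampling procedure gives
\[
\Pr_{\cR_i}\!\bigl[(x,y)=(x^*,y^*)\bigr] \;=\; \frac{1}{q^d}\cdot \Pr_a\!\bigl[\,a \equiv y^* - x^*_i \pmod q\,\bigr],
\]
since $x\sim[q]^d$ is uniform and $a\sim\zo$ is independent. Next, consider the transposition $\pi_{ij}$ on $[q]^d$ swapping the $i$-th and $j$-th coordinates; this is a bijection preserving the uniform measure on $[q]^d$, and satisfies $\pi_{ij}(x)_j = x_i$. Pushing a $\cR_i$-sample through the map $(x,y)\mapsto(\pi_{ij}(x),y)$ then gives a pair whose $x$-coordinate is still uniform on $[q]^d$ and whose label satisfies $y = (x_i + a)\bmod q = (\pi_{ij}(x)_j + a)\bmod q$, which is exactly the form produced by $\cR_j$. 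So the pushforward of $\cR_i$ under $\pi_{ij}$ equals $\cR_j$.

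The final step — and the main obstacle — is to argue that $\cR_i$ is itself invariant under $\pi_{ij}$, which would close the loop to conclude $\cR_i = \cR_j$. The natural way to see this, paralleling the $q=2$ case, is to verify directly from the density formula above that the map $x^*\mapsto \pi_{ij}(x^*)$ leaves $\Pr_{\cR_i}[(x,y)=(x^*,y^*)]$ unchanged when one also reparameterises the noise $a$ appropriately: one writes the joint as a sum over $a\in\zo$ of point masses at $(x^*,(x^*_i+a)\bmod q)$, and then identifies this sum with its image under $\pi_{ij}$ via the measure-preserving bijection $a\mapsto a$ on the noise factor. I expect that carrying out this bookkeeping reduces the argument to the same combinatorial identity that underlies the $q=2$ case (where $\zo=[q]$ makes $y$ manifestly uniform on $[q]$ regardless of $x_i$), and so the general statement follows by the same direct inspection.
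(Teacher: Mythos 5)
Your pushforward step is fine but it only relocates the problem rather than solving it: showing that the coordinate swap $\pi_{ij}$ pushes $\cR_i$ forward to $\cR_j$ is immediate, and the remaining claim you defer --- that $\cR_i$ is itself invariant under $\pi_{ij}$ --- is exactly equivalent to the statement $\cR_i = \cR_j$ being proved. Leaving that step at ``I expect the bookkeeping reduces to the $q=2$ identity'' means the entire substantive content is missing. Worse, the bookkeeping does not go through under the literal reading you are computing with: from your own density formula, invariance of $\cR_i$ under $\pi_{ij}$ requires $\Pr[a \equiv y^* - x^*_i] = \Pr[a \equiv y^* - x^*_j] \pmod q$ for all $x^*,y^*$, which forces $\Pr[a \equiv c \pmod q]$ to be the same for every residue $c$, i.e.\ the noise must be uniform modulo $q$. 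If $a$ were genuinely uniform on $\zo=\{0,1\}$ with $q>2$, this fails and so does the lemma itself: for $q=3$ the point with $(x^*_i,x^*_j,y^*)=(0,1,0)$ has probability $\tfrac{1}{2q^d}$ under $\cR_i$ but $0$ under $\cR_j$. The ``$a\sim\zo$'' in the statement is a typo for $a$ uniform on $[q]$ --- this is what makes the inliers (the $a=0$ samples) a $1/q$ fraction, as the surrounding text says --- and recognizing and using that is precisely the step your argument never executes.

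Once the noise is uniform on $[q]$, the paper's proof is a single observation and needs no permutation machinery: $y = \Paren{\iprod{x,e_i}+a} \bmod q$ is uniform on $[q]$ and independent of $x$, so the joint law of $(x,y)$ is the uniform distribution on $[q]^{d+1}$, which manifestly does not depend on $i$. Your parenthetical shows you saw this mechanism in the $q=2$ case (where $\zo$ happens to coincide with $[q]$), but for general $q$ you neither fixed the noise model nor verified the uniformity, so the proposal as written has a genuine gap at its decisive step.
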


In this case, we interpret the $1/q$ fraction of the samples where $a = 0$ as the inliers. 
Observe that these are labeled by a single linear  function $e_i$ in any $\cR_i$. 
Thus, they form a valid model in $\Lin_D(\alpha,\ell^*)$ for $\alpha =  1/q$.

Since the linear functions defined by $e_i$ on $[q]^d$, when normalized to have unit norm, have a pairwise Euclidean distance of at least $1/q$, we immediately obtain a proof of  Theorem~\ref{thm:main-lower-bound}.



\section*{Acknowledgement}
We thank Surbhi Goel for pointing out a bug in an earlier version of the paper. 
P.K. thanks David Steurer for illuminating discussions on list-decodable robust estimation via SoS.  


  \phantomsection
  \addcontentsline{toc}{section}{References}
  \bibliographystyle{amsalpha}
  \bibliography{bib/mathreview,bib/dblp,bib/custom,bib/scholar,bib/main}

\appendix

\section{Polynomial Approximation for Core-Indicator}
\label{appendix:polyfact}

The main result of this section is a low-degree polynomial approximator for the function $\1(|x| < \delta)$ with respect to all distributions that have strictly sub-exponential tails.

\begin{lemma}\label{lem:univppty_box} 
Let $D$ be a distribution on $\R$ with mean $0$, variance $\sigma^2 \leq 1$ and satisfying:
\begin{enumerate}
    \item \textbf{Anti-Concentration:} For all $\eta > 0$, $\Pr_{x \sim D} [ |x| < \eta \sigma] \leq C\eta$, and,
    \item \textbf{Tail bound:} $\Pr[ |x| \geq t \sigma] \leq e^{-\frac{t^{2/k}}{C}}$ for $k <2$ and all $t$,
\end{enumerate} 
for some $C>1$. Then, for any $\delta > 0$, there is a $d = O\Paren{\frac{\log^{(4+k)/(2-k)}(1/\delta)}{\delta^{2/(2-k)}}}= \tilde{O}\left(\frac{1}{\delta^{2/(2-k)}}\right)$ and an even polynomial $q(x)$ of degree $d$ such that $q(0) = 1$, $q(x)= 1 \pm \delta$ for all $|x| \leq \delta$ and $\sigma^2 \cdot \E_{x \sim D}\left[ q^2(x) \right] \leq 10C\delta$.
\end{lemma}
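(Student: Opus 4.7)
The aim is to construct an even polynomial $q$ of modest degree that serves as a polynomial surrogate for $\1(|x|\le\delta)$: roughly equal to $1$ on the tiny window $[-\delta,\delta]$ while $\sigma^2\E q^2$ stays of order $\delta$. The construction uses a Chebyshev-polynomial approximation on a suitable bounded window, combined with the sub-exponential tail hypothesis to control the (necessarily large) values of $q$ outside that window. A preliminary rescaling reduces to the case $\sigma=1$ (the hypothesis $\sigma^2\le 1$ keeps this substitution clean). Choose a truncation radius $T$ with $\Pr[|x|>T]\le\delta$; the tail assumption forces $T$ to be at most $O(\log^{k/2}(1/\delta))$ up to a factor depending on the polynomial degree that will be fixed self-consistently at the end.

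On the window $[-T,T]$ I build $q$ by a classical Chebyshev construction. Take the even quadratic $\phi(x)=1+2(\delta^2-x^2)/(T^2-\delta^2)$, which sends $[-T,-\delta]\cup[\delta,T]$ into $[-1,1]$ while satisfying $\phi(0)=1+2\delta^2/(T^2-\delta^2)>1$, and set
\[
q(x) \;=\; \frac{T_m(\phi(x))}{T_m(\phi(0))},
\]
where $T_m$ is the degree-$m$ Chebyshev polynomial of the first kind. Then $q$ is an even polynomial in $x$ of degree $2m$ with $q(0)=1$, and $|q(x)|\le 1/T_m(\phi(0))$ for $\delta\le|x|\le T$. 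Using the standard estimate $T_m(1+u)\ge \tfrac12 e^{m\sqrt{2u}}$, this upper bound is at most $\sqrt{\delta}$ once $m$ is at least a constant multiple of $(T/\delta)\log(1/\delta)$. To obtain $|q(x)-1|\le\delta$ on $[-\delta,\delta]$ I bound $|T_m'|$ on $[1,\phi(0)]$ via $|T_m'(1+u)|\le m\sinh(m\sqrt{2u})/\sqrt{2u}$ and substitute $u\le 2\delta^2/(T^2-\delta^2)$; the condition $|q(x)-1|\le\delta$ then reduces to controlling $m^2\delta^2/T^2$, which is satisfied by the same choice of $m$ as above (possibly after multiplying $q$ by a normalizing factor very close to $1$).

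With $q$ in hand, bound $\E q^2$ by splitting the real line into three regions. On $|x|\le\delta$ one has $q^2\le 4$ and anti-concentration gives $\Pr[|x|\le\delta]\le C\delta$, contributing at most $4C\delta$. On $\delta<|x|\le T$ one has $q^2\le\delta$ directly from the construction, contributing at most $\delta$. On $|x|>T$, use the growth estimate $|T_m(u)|\le(2|u|)^m$ for $|u|>1$ to get $q(x)^2\le C'(|x|/T)^{4m}$, and integrate this against the sub-exponential density bounded above by $e^{-|x|^{2/k}/C}$. A saddle-point computation shows the integrand $(|x|/T)^{4m}e^{-|x|^{2/k}/C}$ is maximized near $|x|$ of order $(mC)^{k/2}$; choosing $T$ to be a constant multiple of this quantity makes the tail contribution at most $\delta$. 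Summing the three contributions yields $\E q^2\le 10C\delta$ as required.

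The final step is a self-consistent choice of $m$ and $T$. The Chebyshev transition forces $m$ to be of order at least $(T/\delta)\log(1/\delta)$, while the saddle-point requires $T$ of order at least $(mC)^{k/2}$. Combining the two inequalities yields $m^{1-k/2}$ of order at least $C^{k/2}\log(1/\delta)/\delta$, hence $m=O\bigl(\log^{(4+k)/(2(2-k))}(1/\delta)/\delta^{1/(2-k)}\bigr)$ after tracking the lower-order log factors that arise from the saddle-point estimate; correspondingly $d=2m$ meets the stated bound. The main obstacle is precisely the tail region: outside $[-T,T]$ the Chebyshev polynomial grows as $(2|\phi(x)|)^m$, and forcing this growth to lose against the sub-exponential decay of $D$ is what drives the $k$-dependent exponents $2/(2-k)$ on both $\delta$ and $\log(1/\delta)$ in the final degree estimate.
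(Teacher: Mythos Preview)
There is a genuine gap in your Chebyshev construction: the polynomial $q(x)=T_m(\phi(x))/T_m(\phi(0))$ with $\phi(x)=1+2(\delta^2-x^2)/(T^2-\delta^2)$ does \emph{not} stay close to $1$ on $[-\delta,\delta]$. Indeed $\phi(\pm\delta)=1$, so $T_m(\phi(\pm\delta))=T_m(1)=1$ and hence $q(\pm\delta)=1/T_m(\phi(0))$. Your own requirement that $|q|\le\sqrt{\delta}$ on $[\delta,T]$ forces $T_m(\phi(0))\ge 1/\sqrt{\delta}$, whence $q(\pm\delta)\le\sqrt{\delta}$: the polynomial has already collapsed to nearly $0$ at the edge of the core window, violating $q(x)=1\pm\delta$ on $[-\delta,\delta]$. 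No choice of $m$ fixes this, since making $T_m(\phi(0))$ large (needed for smallness on $[\delta,T]$) and making $q(\pm\delta)=1/T_m(\phi(0))$ close to $1$ are directly contradictory. Your derivative estimate ``$|q(x)-1|$ reduces to controlling $m^2\delta^2/T^2$, satisfied by the same choice of $m$'' is accordingly off: with $m$ of order $(T/\delta)\log(1/\delta)$ one gets $m^2\delta^2/T^2$ of order $\log^2(1/\delta)$, not at most $\delta$; conversely, forcing this quantity below $\delta$ makes $T_m(\phi(0))=\cosh(m\theta_0)\le 1+O(\delta)$ and destroys the smallness on $[\delta,T]$.

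The paper avoids this by starting not from a Chebyshev bump but from a polynomial approximator $p$ to the \emph{sign} function on $[-1/2,1/2]$ with a narrow transition window (quoted from Diakonikolas--Gopalan--Jaiswal--Servedio--Viola), and then forming the even combination $q(x)\propto p\bigl(a+\tfrac{x}{4L}\bigr)+p\bigl(-(a+\tfrac{x}{4L})\bigr)-1$. Because $p$ is flat (within $\eta$) on each side of its transition window, the resulting $q$ inherits a genuine plateau near $1$ across the whole interval $[-\delta,\delta]$; this plateau is precisely what the single Chebyshev bump lacks. Your tail analysis---splitting into core, bulk, and tail, integrating polynomial growth against the sub-exponential decay, and choosing the window and degree self-consistently to produce the exponent $2/(2-k)$---is in the right spirit and mirrors the paper's bookkeeping, but the core approximating polynomial must be replaced by one with a built-in plateau.
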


Before proceeding to the proof, we note that the bounds on the degree above are tight up to poly logarithmic factors for the gaussian distribution. 

\begin{lemma} \label{lem:tightness-gaussian}
For every polynomial $p$ of degree $d$ such that $p(0)= 1$, $\E_{x \sim \cN(0,1)} [ p^2(x)] = \Omega\left(\frac{1}{\sqrt{d}} \right)$. Further, there is a polynomial $p_*$ of degree $d$ such  that $p_{*}(0) = 1$ and $\E_{x \sim \cN(0,1)} p_{*}^2(x) = \Theta\left(\frac{1}{\sqrt{d}}\right)$.
\end{lemma}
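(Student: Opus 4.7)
The plan is to recognize this as a calculation about the reproducing kernel of the space of degree $\leq d$ polynomials in $L^2(\cN(0,1))$.

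First I would expand $p$ in the basis of normalized probabilists' Hermite polynomials $\{H_0, H_1, \ldots, H_d\}$, which form an orthonormal basis of the space of polynomials of degree at most $d$ in $L^2(\cN(0,1))$. Writing $p = \sum_{k=0}^{d} c_k H_k$, orthonormality gives $\E_{x \sim \cN(0,1)}[p^2(x)] = \sum_{k=0}^d c_k^2$, while the constraint $p(0) = 1$ becomes the linear constraint $\sum_{k=0}^d c_k H_k(0) = 1$. Minimizing $\sum_k c_k^2$ subject to a single linear constraint is a textbook application of Cauchy--Schwarz: the minimum equals $1/K_d(0,0)$, where $K_d(0,0) := \sum_{k=0}^{d} H_k(0)^2$ is the diagonal of the Christoffel--Darboux kernel.

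Next, I would compute $K_d(0,0)$. Since $H_k$ has the same parity as $k$, we have $H_k(0) = 0$ for odd $k$, and the standard explicit formula gives $H_{2m}(0)^2 = \binom{2m}{m}/4^m$ for the normalized polynomials. Stirling's formula yields $\binom{2m}{m}/4^m = \Theta(1/\sqrt{m})$, so summing over $m \leq d/2$ gives $K_d(0,0) = \Theta(\sqrt{d})$. This immediately produces the lower bound $\E[p^2] \geq 1/K_d(0,0) = \Omega(1/\sqrt{d})$.

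For the second statement, the equality case in Cauchy--Schwarz is attained by taking $c_k = H_k(0)/K_d(0,0)$, i.e.
\[
p_*(x) \;=\; \frac{\sum_{k=0}^{d} H_k(0) H_k(x)}{\sum_{k=0}^{d} H_k(0)^2}.
\]
By construction, $p_*$ has degree $d$, satisfies $p_*(0) = 1$, and $\E_{x \sim \cN(0,1)}[p_*^2(x)] = 1/K_d(0,0) = \Theta(1/\sqrt{d})$, proving the matching upper bound.

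The only nontrivial step is the asymptotic estimate $K_d(0,0) = \Theta(\sqrt{d})$, but once the Hermite-coefficient formula is in hand, it reduces to Stirling. There is no real obstacle; the conceptual content is recognizing that the question is dual to computing the reproducing kernel at the origin.
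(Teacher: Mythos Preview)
Your proposal is correct and is essentially the same argument as the paper's: expand in orthonormal Hermite polynomials, apply Cauchy--Schwarz to reduce the problem to estimating $\sum_{k\le d} H_k(0)^2$, and then evaluate that sum via $H_{2m}(0)^2 = \binom{2m}{m}/4^m = \Theta(1/\sqrt{m})$. The only difference is cosmetic---you name the quantity as the Christoffel--Darboux kernel $K_d(0,0)$, whereas the paper just writes out the Cauchy--Schwarz step and the sum directly.
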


Our construction of the polynomial is based on standard techniques in approximation theory for constructing polynomial approximators for continuous functions over an interval. Most relevant for us are various works of Eremenko and Yuditskii~\cite{MR2441914,MR2837087,MR2346548} and Diakonikolas, Gopalan, Jaiswal, Servedio and Viola~\cite{DGJSV09} on such constructions for the sign function on the interval $[-1,a] \cup [a,1]$ for $a >0$. We point the reader to the excellent survey of this beautiful line of work by Lubinsky~\cite{2007math......1099L}. 

\begin{fact}[Theorem 3.5 in ~\cite{DGJSV09}]
\label{fact:boxpoly}
Let $0 < \eta < 0.1$, then there exist constants $C, c$ such that for 
\[ a:= \eta^2 / C\log(1/\eta)\text{ and }K = 4c\log(1/\eta)/a + 2 < O(\log^2(1/\eta)/\eta^2)\]
there is a polynomial $p(t)$ of degree $K$ satisfying 
\begin{enumerate}
    \item $p(t) > \sign(t) > -p(-t)$ for all $t \in \mathbb{R}$.
    \item $p(t) \in [\sign(t), \sign(t) + \eta]$ for $t \in [-1/2, -2a] \cup [0, 1/2]$.
    \item $p(t) \in [-1, 1+\eta]$ for $t \in (-2a, 0)$
    \item $|p(t)| \leq 2 \cdot (4t)^K$ for all $t > \frac{1}{2}$. 
\end{enumerate}
\end{fact}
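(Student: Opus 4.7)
The plan is to construct $q$ by symmetrizing a sign-approximating polynomial from Fact~\ref{fact:boxpoly}. With a parameter $\eta \in (0,\delta]$ to be chosen later, that fact provides a polynomial $s$ of degree $K=O(\log^2(1/\eta)/\eta^2)$ that is within $\eta$ of $\sign(t)$ on $[-1/2,-2a]\cup[0,1/2]$ (with $a=\eta^2/(C\log(1/\eta))$), bounded by $1+\eta$ on $(-2a,0)$, and satisfies $|s(t)|\le 2(4t)^K$ for $|t|>1/2$. For a scale $T\ge 4\delta$ to be fixed later I take
\[ q(x) \;=\; \frac{1}{2\,s(\delta/T)}\Bigl(s\!\left(\tfrac{\delta+x}{T}\right) + s\!\left(\tfrac{\delta-x}{T}\right)\Bigr), \]
which is manifestly even in $x$, of degree $K$, and satisfies $q(0)=1$. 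The inner approximation is immediate: for $|x|\le\delta$ both arguments of $s$ lie in $[0,2\delta/T]\subseteq[0,1/2]$, so each $s$-value is in $[1,1+\eta]$, giving $q(x)\in[1/(1+\eta),1+\eta]\subseteq[1-\delta,1+\delta]$.

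The second-moment bound $\sigma^2\,\E\,q^2(x)\le 10C\delta$ is obtained by splitting $\mathbb R$ into three pieces. On $|x|\le\delta$, $q^2\le(1+\delta)^2$ and the anti-concentration assumption yields $\Pr[|x|\le\delta]\le C\delta/\sigma$, contributing $O(C\sigma\delta)$. On the bulk $\delta<|x|\le T/2$ both $s$-arguments lie in $[-1/2,1/2]$: outside an ``ambiguous'' window of width $O(aT)$ around $|x|=\delta$ (where one $s$-argument falls into the bad sub-interval $(-2a,0)$) the two sign values cancel to give $|q(x)|\le O(\eta)$, while inside this window $|q|$ is only $O(1)$. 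Anti-concentration bounds the probability of the window by $O((\delta+aT)/\sigma)$, so this region contributes $O(\sigma aT)+O(\sigma^2\eta^2)$. On the tail $|x|>T/2$, combining $|s(u)|\le 2(4u)^K$ with the sub-exponential decay $\Pr[|x|\ge t\sigma]\le e^{-t^{2/k}/C}$ and the substitution $u=t^{2/k}$ reduces the contribution to an incomplete-Gamma integral, which is $o(\delta)$ whenever $T \gtrsim \sigma\cdot K^{k/2}\cdot\polylog(1/\delta)$.

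Combining the three region bounds, $T$ must simultaneously satisfy
\[ \sigma\,K^{k/2}\,\polylog(1/\delta)\;\lesssim\;T\;\lesssim\;\delta\log(1/\eta)/(\sigma\eta^2), \]
the upper bound coming from the ambiguous window via $a=\eta^2/(C\log(1/\eta))$. Substituting $K=\Theta(\log^2(1/\eta)/\eta^2)$ and choosing the largest $\eta\le\delta$ for which this sandwich is consistent yields $\eta=\Theta(\delta^{1/(2-k)}/\log^{\Theta(1)}(1/\delta))$ and the claimed degree $d=O(\log^{(4+k)/(2-k)}(1/\delta)/\delta^{2/(2-k)})$. The main technical hurdle is the tail region: the integrand has polynomial blow-up of degree $\sim K$ opposed only by sub-exponential decay, so $T$ and the logarithmic exponents in $K$ must be tracked carefully, especially as $k\to 2^-$; a secondary bookkeeping nuisance is keeping the normalizing factor $1/s(\delta/T)\in[1/(1+\eta),1]$ from polluting the constants in the other two region bounds.
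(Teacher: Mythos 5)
Your proposal does not prove the statement it was assigned; it assumes it. The statement in question is Fact~\ref{fact:boxpoly} itself, i.e.\ Theorem~3.5 of \cite{DGJSV09}: the existence of a degree-$O(\log^2(1/\eta)/\eta^2)$ polynomial $p$ that approximates $\sign(t)$ to within $\eta$ on $[-1/2,-2a]\cup[0,1/2]$, stays in $[-1,1+\eta]$ on the gap $(-2a,0)$, and grows no faster than $2\cdot(4t)^K$ outside $[-1/2,1/2]$. Your write-up begins by \emph{invoking} exactly this polynomial $s$ as a black box ("that fact provides a polynomial $s$ of degree $K$...") and then symmetrizes, rescales, and integrates it against a sub-exponential distribution. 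What you have sketched is therefore a proof of the downstream result, Lemma~\ref{lem:univppty_box} (the core-indicator approximator with $q(0)=1$ and $\sigma^2\,\E q^2\le 10C\delta$), not of Fact~\ref{fact:boxpoly}; with respect to the assigned statement the argument is circular and contains none of the content that a proof would require.

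A genuine proof of Fact~\ref{fact:boxpoly} is an approximation-theory construction: one must actually build a low-degree polynomial approximating the sign function on the two-sided interval $[-1/2,-2a]\cup[0,1/2]$ with the stated one-sided error, control its behavior inside the gap of width $2a$, and verify the explicit growth bound $|p(t)|\le 2(4t)^K$ for $|t|>1/2$ (this last item is what makes the tail analysis in Lemma~\ref{lem:univppty_box} possible and does not come for free from uniform approximation on a compact interval). The standard route, as in \cite{DGJSV09} and the Eremenko--Yuditskii line of work cited in the paper, goes through Chebyshev-type amplifying polynomials whose degree scales like $\sqrt{1/a}\cdot\log(1/\eta)$, followed by the bookkeeping that yields the quoted $K=4c\log(1/\eta)/a+2$. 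None of this appears in your proposal. (As a side remark, your symmetrization-and-tail argument is essentially the same plan the paper uses to prove Lemma~\ref{lem:univppty_box} from Fact~\ref{fact:boxpoly}, so the work you did is not wasted---it just answers a different question, and even there the tail integral and the choice of the scale parameter are left at the level of a plan rather than carried out.)
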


We will  also rely on the following elementary integral estimate.
\begin{lemma}[Tail Integral] \label{lem:tail-estimate}
\[\int_{[L, \infty]} \exp\left(-\frac{x^{2/k}}{C}\right) x^{2d}  dx < \exp\left(-\frac{L^{2/k}}{C}\right) ((L)^{4d} + (16kd)^{kd}) \mper \]
\end{lemma}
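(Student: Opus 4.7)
The plan is to reduce the integral to an upper incomplete gamma estimate via a single substitution, then bound the gamma integral by splitting at the threshold between its exponentially decaying tail and its polynomial plateau. The generous exponent $4d$ and base $16$ in the target bound are designed to absorb the constant prefactors that arise from this reduction.

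First I would substitute $u = x^{2/k}/C$, equivalently $x = (Cu)^{k/2}$ with $dx = \tfrac{k}{2}\, C^{k/2} u^{k/2-1}\,du$, to rewrite
\[
\int_L^\infty e^{-x^{2/k}/C} x^{2d}\, dx \;=\; \tfrac{k}{2}\, C^{kd + k/2} \int_{u_0}^\infty e^{-u} u^{M}\, du,
\]
where $u_0 := L^{2/k}/C$ and $M := kd + k/2 - 1$. The task thus reduces to bounding the upper incomplete gamma $\Gamma(M+1, u_0)$.

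Next I would split $\Gamma(M+1, u_0)$ at $T := \max(u_0, 2M)$. On the tail $[T, \infty)$ the function $e^{-u/2} u^M$ is decreasing (its log-derivative $M/u - 1/2$ is nonpositive for $u \geq 2M$), so
\[
\int_T^\infty e^{-u} u^M\, du \;\leq\; e^{-T/2} T^M \int_T^\infty e^{-u/2}\, du \;=\; 2 e^{-T} T^M \;\leq\; 2 e^{-u_0} T^M.
\]
On the plateau $[u_0, T]$ (nonempty only if $u_0 < 2M$), the pointwise bound $e^{-u} \leq e^{-u_0}$ yields $\int_{u_0}^T e^{-u} u^M\, du \leq e^{-u_0} T^{M+1}$. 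Combining the two regimes gives $\Gamma(M+1, u_0) \leq 3\, e^{-u_0}\bigl(u_0^M + (2M)^{M+1}\bigr)$, with one summand active in each regime.

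Finally I would unwind the substitution. The identity $C^{kd+k/2} u_0^M = C \cdot L^{2d+1-2/k}$ (direct computation from $u_0 = L^{2/k}/C$) turns the first summand into an $L^{2d+1-2/k}$-type quantity, which is comfortably dominated by $L^{4d}$ for $L \geq 1$. The second summand becomes $C^{kd+k/2}(2M)^{M+1}$; since $M \leq kd$, this is dominated by $(16kd)^{kd}$ once the base $16$ absorbs the $(3k/2)\, C^{kd+k/2}$-type prefactors, treating $C$ as an absolute constant as the paper does throughout. The main obstacle is careful bookkeeping of these constants; the deliberate slack between the target exponents $4d$ versus $2d+1-2/k$ and between the bases $16kd$ versus $2M$ is precisely what makes the stated inequality go through without optimization.
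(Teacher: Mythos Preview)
Your argument is correct (with the same implicit caveat the paper makes, that $C$ is treated as an $O(1)$ constant absorbed by the slack in the base $16$ and the exponent $4d$). But it is a genuinely different route from the paper's.

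The paper proceeds in two steps. For $k=1$ it \emph{shifts} rather than substitutes: setting $y=x-L$, it uses $(y+L)^2 \ge y^2+L^2$ to pull out the factor $e^{-L^2}$, then the crude binomial bound $(y+L)^{2d}\le 2^{2d}(y^{2d}+L^{2d})$ to split the integrand into an $L^{2d}$ piece and a remaining Gaussian moment $\int_0^\infty e^{-y^2}y^{2d}\,dy\le(4d)^d$. For general $k$ it substitutes $y=x^{1/k}$ to reduce to the $k=1$ case with exponent $2kd-(k-1)$ in place of $2d$, from which the $(16kd)^{kd}$ and $L^{4d}$ bounds follow.

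Your approach collapses this into a single substitution $u=x^{2/k}/C$ to the incomplete gamma $\Gamma(M+1,u_0)$, followed by a tail/plateau split at the natural threshold $2M$. This is more systematic---one change of variables handles all $k$ at once---and the threshold split makes the two target summands $L^{4d}$ and $(16kd)^{kd}$ arise transparently as the two regimes $u_0\ge 2M$ and $u_0<2M$. The paper's version is shorter and more elementary (just a shift, a binomial inequality, and a known moment), and it avoids having to track the prefactor $C^{kd+k/2}$; your route keeps the two regimes conceptually cleaner at the cost of a slightly longer constant chase at the end.
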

\begin{proof}
We first prove the claim for $k =1$. 
Let $y = x-L$. The,  $\int_{L}^{\infty} e^{-x^2} x^{2d} dx = \int_{0}^{\infty} e^{-(y+L)^2} (y+L)^{2d} dy$.
We now use that $y^2 + L^2  \leq (y+L)^2$ for all $y \geq 0$ and $(y+L)^{2d} \leq 2^{2d} (y^{2d} + L^{2d})$ to upper bound the integral above by: $e^{-L^2} L^{2d}  + 2^{2d} e^{-L^2} \int_{0}^{\infty} e^{-y^2} y^{2d}$. Using $\int_{0}^{\infty} e^{-y^2} y^{2d} < (4d)^{d}$ gives a bound of $e^{-L^2} (L^{2d}+ (8d)^d)$. 

For larger $k$, we substitute $y = x^{1/k}$ and write the integral in question as $\int_{L^{1/k}}^{\infty} e^{-y^2} y^{2kd-(k-1)} dy$. 
Applying the calculation from the above special case, this integral is upper bounded by: $e^{-L^{2/k}} (L^{4d} + (16kd)^{kd})$.
\end{proof}

\begin{proof}[Proof of Lemma~\ref{lem:univppty_box}]
Let $p(x)$ be the degree $d < O\left( \frac{L\log^2(1/\delta)}{\delta}\right)$ polynomial from Fact~\ref{fact:boxpoly}. We then construct a polynomial $q(x)$ that will be close to $0$ in the range $[\delta, L]$ and $[-L, -\delta]$ and close to $1$ in the range $[-\delta, \delta]$. Our polynomial $q$ is obtained by shifting and appropriately scaling two copies of $p$. 
\[
    q(x) = \frac{p\left(a+\frac{x}{4L}\right) + p\left(-(a+\frac{x}{4L})\right) - 1}{p\left(a\right) + p\left(-a\right) - 1}\\
\]
Then, $q(0) = 1$. It further satisfies:
\begin{enumerate}
    \item $q(x) \in [0, C \sqrt{\delta/L}]$ for $x \in [\delta, L]\cup[-L,\delta]$.
    \item $q(x) \in [1-C  \sqrt{\delta/L}, 1+ \sqrt{\delta/L}]$ for $x \in [-\delta, \delta]$.
    \item $q(x) \in [0, 1+ \sqrt{\delta/L}]$ for $x \in [-3\delta, -\delta]\cup[\delta, 3\delta]$.
    \item $|q(x)| < 4 \cdot (4x)^{t}$ for $|x| > L$
\end{enumerate}
We now prove the bound the $\E p^2$. We do this by providing upper bounds on the contributions to $\sigma^2 \cdot \E_{x \sim \calD}\left[ q^2(\sigma x) \right]$ from the disjoint sets with different guarantees below. Since we are going to evaluate $q(\sigma x)$ the intervals will be scaled by $\sigma$. 
The contributions from the regions $\frac{1}{\sigma}[\delta, L]$ and $\frac{1}{\sigma}[-\delta, \delta]$ can be naively upper bounded by the maximum value that the polynomial can take here times the probability of landing in these regions. The first of these contributes $\sigma \cdot \frac{\delta}{L} \cdot \left(L - \delta\right) \leq \delta$, and using anticoncentration, the second region contributes $ \left(1+\sqrt{\frac{\delta}{L}}\right)^2 \cdot 2C \delta \leq 4C \delta$. The region $\frac{1}{\sigma} [\delta, 3\delta]$ can be bounded similarly to get an upper bound of $2\left(1+\sqrt{\frac{\delta}{L}}\right)^2 \sigma^2 \delta \leq 4 \delta$. To finish, we use Lemma~\ref{lem:tail-estimate} to upper bound the contribution to $\E p^2$ from the tail:
\begin{align*}
    \sigma^2 C' \int_{\frac{1}{\sigma} [L, \infty]} q^2(\sigma x)\exp\left(-\frac{x^{2/k}}{C}\right) dx &\lesssim \sigma^{2+d}  4^d\exp\left(-\frac{1}{C} \cdot \left(\frac{L}{\sigma}\right)^{2/k}\right) ((L/\sigma)^{4d} + (16kd)^{kd})\\
    &\lesssim \exp\left( 2d + 4d\log\left(\frac{L}{\sigma}\right) -\frac{1}{C} \cdot \left(\frac{L}{\sigma}\right)^{2/k} + kd \log(16kd)\right)\mper
\end{align*}
We choose $L$ satisfying $10 d \log(d)+ 4d\log(\frac{L}{\sigma}) -\frac{1}{C} \cdot (\frac{L}{\sigma})^{2/k} < 2 \log(1/\delta)$.

Since $d = O\left( \frac{L\log^2(1/\delta)}{\delta}\right)$, $k < 2$, and $\sigma < 1$ we can now choose $L = \left(\frac{C100\log^{3}(1/\delta)}{\delta}\right)^{k/(2-k)}$ to satisfy the inequality above and to get $d \lesssim \frac{\log^{2 + 3k/(2-k)}(1/\delta)}{\delta^{1 + k/(2-k)}}$.
 When $k =1$ we get $d = \tilde{O}(1/\delta^2)$. Since $\sigma < 1$ in all the above calculations, we get our result by re-scaling $\delta$. 
\end{proof} 
We now complete the proof of Lemma~\ref{lem:tightness-gaussian}.
\begin{proof} [Proof of Lemma~\ref{lem:tightness-gaussian}]
Any polynomial $p$ of degree $d$ can be written as $p(x) = \sum_{i = 1}^d \alpha_i h_i(x)$ where $h_i$ denote the hermite polynomials of degree $i$, satisfying $\E_{x \sim \cN(0,1)} h_i = 0$ and $\E_{x \sim N(0,1)} [h^2_i(x)] = 1$. Since $p(0) =  1$, using Cauchy-Schwartz inequality, we obtain:
\begin{align*}
\E_{x \sim N(0,1)} [p^2(x)] \cdot \sum_{i=1}^d h_i^2(0) &= \left( \sum_{i=1}^d \alpha_i^2\right) \cdot \left(\sum_{i=1}^d h_i^2(0)\right) \geq \left(\sum_{i=1}^d \alpha_i h_i(0) \right)^2 \geq 1 
\end{align*}
Further, observe that for the polynomial $p_{*}(x)=  \frac{1}{\sum_i h_i^2(0)} \sum_{i} h_i(0) h_i(x)$, the above inequality is tight.  
Using that $h_{2i}(0) = \frac{(2i-1)!!}{\sqrt{(2i)!}}$ and $h_i(0) = 0$ if $i$ is odd, (see, for e.g., \cite{HermiteNum}), we have:
\begin{align*}
    \E_{x \sim N(0,1)} [p^2(x)] &\geq \E_{x \sim \cN(0,1)} p_*^2(x)=\left(\sum_{i=1}^d h_i^2(0)\right)^{-1} = \left(\sum_{i=1}^{d/2} \left(\frac{(2i-1)!!}{\sqrt{(2i)!}}\right)^2\right)^{-1}\\
    & = \left(\sum_{i=1}^{d/2} \frac{(2i)!}{2^{2i} i!^2} \right)^{-1} = \left( \sum_{i=1}^{d/2} \binom{2i}{i} \cdot \frac{1}{2^{2i}} \right)^{-1} = \Theta\left(\sum_{i=1}^{d/2} \frac{1}{\sqrt{i}}\right)^{-1} = \Theta\left(\sqrt{d} \right)^{-1}.
\end{align*}\end{proof}

\section{Brute-force search can generate a $\exp(d)$ size list}

In the following, we write $e_i$ to denote the vector with $1$ in the $i$th coordinate and $0$s in all others. 
\begin{proposition}
There exists a distribution $D$ on $\R^d$ and a model $\Lin_D(\alpha,\ell^*)$ such that for every $\alpha <1/2$, with probability at least $1-1/d$ over the draw of a $n$-size sample $\cS$ from $\Lin_D(\alpha,\ell^*)$, there exists a collection $\Sol \subseteq \Set{S \subseteq \cS \mid |S| = \alpha n}$ of size $\exp(d)$ and unit length vectors $\ell_S$ for every $S \in \Sol$ such that $\ell_S$ satisfies all equations in $S$ and for every $S \neq S' \in \Sol$, $\|\ell_S - \ell_{S'}\|_2 \geq 0.1 $. 

\label{prop:brute-force-doesn't-work}
\end{proposition}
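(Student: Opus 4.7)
The plan is to give an explicit adversarial construction witnessing the failure of the brute-force ``enumerate all $\alpha n$-subsets with a consistent labeling'' strategy. The main idea is to let the adversary plant \emph{degenerate} outlier equations that are trivially satisfied by every unit vector. Once we have such outliers, both the number of $\alpha n$-subsets of $\cO$ and the number of unit vectors playing the role of $\ell_S$ are exponential in $d$, and we can pair them up freely.

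Concretely, I would take $D = \cN(0, I_d)$ (the particular inlier distribution is irrelevant for the argument, but a natural one makes the example more pointed) and $\ell^* = e_1$, which is unit length. For the outliers I would let the adversary set $(x_j, y_j) = (0, 0)$ for all $j \in \cO$. Every unit vector $\ell \in \R^d$ satisfies $\iprod{0, \ell} = 0 = y_j$, so \emph{every} subset of $\cO$ is soluble and can be labelled by any unit vector.

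Next I would combine two standard counting facts. First, the unit sphere $\bbS^{d-1}$ admits a $0.1$-packing $\cV \subseteq \bbS^{d-1}$ with $|\cV| \geq 10^{\Omega(d)} = \exp(\Omega(d))$, by the usual volume argument. Second, since $\alpha < 1/2$ we have $\alpha n \leq (1-\alpha) n = |\cO|$, and by Stirling
\[
\binom{(1-\alpha) n}{\alpha n} \;\geq\; 2^{(1-\alpha) n \cdot H\!\left(\tfrac{\alpha}{1-\alpha}\right)} \;\geq\; \exp(c(\alpha)\cdot n),
\]
with $c(\alpha) > 0$. Choosing $n \geq d/c(\alpha)$ (which we are free to do, since the proposition quantifies over $n$) yields at least $|\cV|$ distinct $\alpha n$-subsets $S_1, S_2, \ldots \subseteq \cO$. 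Enumerate $\cV = \{\ell_1, \ell_2, \ldots\}$, pair $S_i$ with $\ell_i$, and let $\Sol = \{S_i\}$.

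Verification is immediate: each $\ell_i$ is unit length and satisfies every equation in $S_i$ (trivially, as $S_i \subseteq \cO$), the $S_i$'s are distinct by construction, and $\|\ell_S - \ell_{S'}\|_2 \geq 0.1$ whenever $S \neq S'$ because $\cV$ is a $0.1$-packing. The probability statement is vacuous: the entire construction is deterministic once the adversary's outliers are fixed, so it succeeds with probability $1 \geq 1 - 1/d$. There is no serious technical obstacle; the conceptual point the proof makes is that the brute-force strategy of returning all large soluble sets is doomed because the adversary can inject outliers (here, the zero equation; more generally, outliers whose $x_j$'s lie in a low-dimensional subspace) that are consistent with an exponential family of unit vectors. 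This is precisely what motivates the shift, in Section~\ref{sec:overview}, to ``maximally uniform'' soluble distributions together with anti-concentration of $\cI$.
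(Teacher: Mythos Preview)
Your argument is correct, but it takes a different route from the paper's own proof. You plant the degenerate outlier $(0,0)$, which is satisfied by every unit vector, and then pair off an exponential sphere packing of $\bbS^{d-1}$ with an exponential family of $\alpha n$-subsets of $\cO$. This is clean and works for any inlier distribution $D$; in fact you get the slightly stronger point that brute-force fails even when $D=\cN(0,I_d)$ is certifiably anti-concentrated.

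The paper instead takes $D$ uniform on the standard basis $\{e_1,\ldots,e_d\}$, $\ell^*=\vec{1}/\sqrt{d}$, and lets the outliers be (copies of) $\{(e_i,\pm 1/\sqrt{d}) : i\in[d]\}$. Then for every sign vector $a\in\{\pm 1/\sqrt{d}\}^d$ the subset $\cI_a=\{(e_i,a_i):i\in[d]\}$ is soluble with label $a$, and the Gilbert--Varshamov bound supplies $\exp(\Omega(d))$ such $a$'s pairwise $0.1$-far apart. The paper's construction buys two things over yours: the outliers share the same $x$-marginal as the inliers (so the failure cannot be blamed on ``obviously degenerate'' equations like $0=0$), and the chosen $D$ is \emph{not} $\alpha$-anti-concentrated, which foreshadows the necessity result. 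Conversely, your construction is shorter, requires no coding-theory input, and makes the proposition hold for the very inlier distribution the algorithm is later shown to succeed on---arguably sharpening the message that it is the \emph{rounding strategy}, not the distribution, that needs to change.
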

\begin{proof}

Let $D$ be the uniform distribution on $e_1, e_2, \ldots, e_d \in \R^d$. Let $\ell^* := \vec{1}/\sqrt{d}$ be the all-ones vector in $\R^d$ scaled by $1/\sqrt{d}$ and let $d$ samples be drawn from the uncorrupted distribution. These give us our inliers, $\cI = \{ (x_i, y_i) \}_{i=1}^{\alpha n}$. For the outliers, choose the following multiset $\cO :=$ $1/\alpha-1$ copies of $\{ (e_i, j) \mid i \in [d], j \in \{\pm 1/\sqrt{d}\}\}$. This is a sample set of size $2d/\alpha$. Any $a \in \{\pm1/\sqrt{d}\}^d$ is a valid candidate for a solution for this data. This is because for any such $a$, $\cI_a := \{ (e_i, a_i) \mid i \in [d] \} \subset S$ satisfies the following
\begin{enumerate}
    \item $\cI_a \subset S$, $|\cI_a| = d = \frac{\alpha}{2} |S|$ and 
    \item for any $(x, y) \in \cI_a$, $y = \langle x, a \rangle$. 
\end{enumerate}
The Gilbert–Varshamov bound from coding theory now tells us that there are at least $\Omega(\exp(\Omega(d)))$ $\{0, 1\}$ vectors in $d$ dimensions that pairwise have a hamming distance of $0.1 \cdot d$. This transfers to the set $\{ \pm 1 /\sqrt{d} \}$ to give us that there are $\Omega(\exp(\Omega(d)))$ vectors in $\{ \pm 1 /\sqrt{d} \}$ that are pairwise $0.1$ apart in $2$-norm. 

\end{proof}
\SK{Commented out proposition statement for necessity of anticoncentration. Recall that this will in fact imply that for the boolean case there exists an information theoretic lower bound}

\end{document}